\documentclass[submission,copyright,creativecommons,sharealike,noncommercial]{eptcs}
\usepackage{breakurl}


\usepackage{graphicx}
\usepackage{amsfonts}
\usepackage{amsmath}  
\usepackage{amsthm}
\usepackage{amssymb}
\usepackage{relsize}
\usepackage{mathtools}
\usepackage[applemac]{inputenc} 
\usepackage[nocompress]{cite}

\usepackage[usenames,dvipsnames]{xcolor} 
\usepackage{tikz}
\usepackage{tikzfig}

\usetikzlibrary{arrows,shapes,decorations,backgrounds,positioning}


	\newcounter{theorem_c} 
	\numberwithin{theorem_c}{section} 
	\numberwithin{equation}{section} 

	\theoremstyle{plain} 
	\newtheorem{theorem}[theorem_c]{Theorem}
	
	\newtheorem{lemma}[theorem_c]{Lemma}

	\newtheorem*{theorem*}{Theorem}
	\newtheorem*{lemma*}{Lemma}
	\newtheorem*{corollary*}{Corollary}

	\theoremstyle{definition} 
	\newtheorem{definition}[theorem_c]{Definition}
	
	\newtheorem{remark}[theorem_c]{Remark}
	\newtheorem*{definition*}{Definition}
	\newtheorem*{example*}{Example}
	\newtheorem*{remark*}{Remark}

	\newcommand\numberthis{\addtocounter{equation}{1}\tag{\theequation}} 


	\newcommand{\reals}{\mathbb{R}} 
	\newcommand{\integersMod}[1]{\mathbb{Z}_{#1}} 
	\newcommand{\domain}[1]{\operatorname{dom}#1} 
	

	\newcommand{\suchthat}[2]{\left\{#1 \: \left\vert \: #2 \right.\right\}} 


		\newcommand{\ket}[1]{\vert #1 \rangle} 
		\newcommand{\bra}[1]{\langle #1 \vert} 

		\newcommand{\completeGraph}[1]{\operatorname{K}_{#1}} 
		\newcommand{\graphNode}[1]{\bullet{#1}}		
		\newcommand{\graphNodel}[1]{{#1}\bullet}
		\newcommand{\graphEdge}[2]{\graphNodel{#1}-\graphNode{#2}}
		\newcommand{\RelGraph}[1]{\mathcal{G}_{#1}}

		\newcommand{\SpaceH}{\mathcal{H}}


		\newcommand{\isom}{\cong} 
		\newcommand{\epim}{\twoheadrightarrow} 
		\newcommand{\monom}{\rightarrowtail} 
		\newcommand{\id}[1]{id_{#1}} 

		\newcommand{\tensor}{\otimes} 
		\newcommand{\bigtensor}{\bigotimes} 
		\newcommand{\tensorUnit}{I} 


		\newcommand{\SetCategory}{\operatorname{Set}} 
		\newcommand{\fdHilbCategory}{\operatorname{fdHilb}} 
		\newcommand{\AbCategory}{\operatorname{Ab}} 
		\newcommand{\fRelCategory}{\operatorname{fRel}} 


		\newcommand{\CategoryC}{\mathcal{C}}

		\newcommand{\CPMCategory}[1]{\operatorname{CPM}[#1]} 











	\newcommand{\Xcolour}{Red}
	\newcommand{\groupStructColour}{\Xcolour}
	\newcommand{\hbox{\input{modules/symbols/timemultSym.tex}}\!}{\hbox{\input{modules/symbols/timemultSym.tex}}\!} 
	\newcommand{\hbox{\input{modules/symbols/timecomultSym.tex}}\!}{\hbox{\input{modules/symbols/timecomultSym.tex}}\!} 
	\newcommand{\hbox{\input{modules/symbols/timeunitSym.tex}}\!}{\hbox{\input{modules/symbols/timeunitSym.tex}}\!} 
	\newcommand{\hbox{\input{modules/symbols/timecounitSym.tex}}\!}{\hbox{\input{modules/symbols/timecounitSym.tex}}\!} 
	\newcommand{\hbox{\input{modules/symbols/antipodeSym.tex}}\!}{\hbox{\input{modules/symbols/antipodeSym.tex}}\!} 

	\newcommand{\Zcolour}{YellowGreen}
	\newcommand{\classicalStructColour}{\Zcolour}
	\newcommand{\ZmultSym}{\hbox{\begin{tikzpicture} [scale=1,transform shape] 

\def\deltax{0.3} 
\def\deltay{0.5} 


\node (mult_label_inl) at (-\deltax,-\deltay) {};
\node (mult_label_inr) at (+\deltax,-\deltay) {};
\node [dot, fill=\classicalStructColour] (mult) at (0,0) {};
\node (mult_label_out) at (0,+\deltay) {};
\draw[-] [out=90,in=225](mult_label_inl) to (mult);
\draw[-] [out=90,in=315](mult_label_inr) to (mult);
\draw[-] (mult) to (mult_label_out);

\end{tikzpicture}}\!}
	\newcommand{\ZcomultSym}{\hbox{\begin{tikzpicture} [scale=1,transform shape] 

\def\deltax{0.3} 
\def\deltay{0.5} 


\node (mult_label_outl) at (-\deltax,+\deltay) {};
\node (mult_label_outr) at (+\deltax,+\deltay) {};
\node [dot, fill=\classicalStructColour] (mult) at (0,0) {};
\node (mult_label_in) at (0,-\deltay) {};
\draw[-] [in=270,out=135] (mult) to (mult_label_outl);
\draw[-] [in=270,out=45] (mult) to (mult_label_outr);
\draw[-] (mult_label_in) to (mult);

\end{tikzpicture}}\!}
	\newcommand{\ZunitSym}{\hbox{\begin{tikzpicture} [scale=1,transform shape] 

\def\deltax{0.3} 
\def\deltay{0.5} 


\node [dot, fill=\classicalStructColour] (mult) at (0,0) {};
\node (mult_label_out) at (0,+\deltay) {};
\draw[-] (mult) to (mult_label_out);

\end{tikzpicture}}\!}
	\newcommand{\ZcounitSym}{\hbox{\begin{tikzpicture} [scale=1,transform shape] 

\def\deltax{0.3} 
\def\deltay{0.5} 


\node [dot, fill=\classicalStructColour] (mult) at (0,0) {};
\node (mult_label_in) at (0,-\deltay) {};
\draw[-] (mult_label_in) to (mult);

\end{tikzpicture}}\!}
	\newcommand{\Zstructure}{\ZmultSym,\ZcomultSym,\ZunitSym,\ZcounitSym}
	

	\newcommand{\hbox{\input{modules/symbols/XaltdotSym.tex}}\!}{\hbox{\input{modules/symbols/XaltdotSym.tex}}\!}
	\newcommand{\hbox{\input{modules/symbols/internaltimemultSym.tex}}\!}{\hbox{\input{modules/symbols/internaltimemultSym.tex}}\!} 
	\newcommand{\hbox{\input{modules/symbols/internaltimecomultSym.tex}}\!}{\hbox{\input{modules/symbols/internaltimecomultSym.tex}}\!} 
	\newcommand{\hbox{\input{modules/symbols/internaltimeunitSym.tex}}\!}{\hbox{\input{modules/symbols/internaltimeunitSym.tex}}\!} 
	\newcommand{\hbox{\input{modules/symbols/internaltimecounitSym.tex}}\!}{\hbox{\input{modules/symbols/internaltimecounitSym.tex}}\!} 

	\newcommand{\Zaltcolour}{Cyan}
	\newcommand{\internalclassicalStructColour}{\Zaltcolour}
	
	\newcommand{\hbox{\input{modules/symbols/internaltimematchSym.tex}}\!}{\hbox{\input{modules/symbols/internaltimematchSym.tex}}\!} 
	\newcommand{\hbox{\input{modules/symbols/internaltimematchunitSym.tex}}\!}{\hbox{\input{modules/symbols/internaltimematchunitSym.tex}}\!} 
	\newcommand{\hbox{\input{modules/symbols/internaltrivialcharSym.tex}}\!}{\hbox{\input{modules/symbols/internaltrivialcharSym.tex}}\!} 
	
	\newcommand{\ZaltcomultSym}{\hbox{\begin{tikzpicture} [scale=1,transform shape] 

\def\deltax{0.3} 
\def\deltay{0.5} 


\node (mult_label_outl) at (-\deltax,+\deltay) {};
\node (mult_label_outr) at (+\deltax,+\deltay) {};
\node [dot, fill=\internalclassicalStructColour] (mult) at (0,0) {};
\node (mult_label_in) at (0,-\deltay) {};
\draw[-] [in=270,out=135] (mult) to (mult_label_outl);
\draw[-] [in=270,out=45] (mult) to (mult_label_outr);
\draw[-] (mult_label_in) to (mult);

\end{tikzpicture}}\!}


	\tikzset{->-/.style={decoration={markings,mark=at position #1 with {\arrow{>}}},postaction={decorate}}}
	\tikzset{-<-/.style={decoration={markings,mark=at position #1 with {\arrow{<}}},postaction={decorate}}}
	\tikzset{dotted->-/.style={dotted,decoration={markings,mark=at position #1 with {\arrow{>}}},postaction={decorate}}}
	\tikzset{dotted-<-/.style={dotted,decoration={markings,mark=at position #1 with {\arrow{<}}},postaction={decorate}}}


\tikzstyle{env}=[copoint,regular polygon rotate=0,minimum width=0.2cm, fill=black]

\tikzstyle{probs}=[shape=semicircle,fill=white,draw=black,shape border rotate=180,minimum width=1.2cm]

%
%


\tikzstyle{every picture}=[baseline=-0.25em,scale=0.5]
\tikzstyle{dotpic}=[] 
\tikzstyle{diredges}=[every to/.style={diredge}]
\tikzstyle{math matrix}=[matrix of math nodes,left delimiter=(,right delimiter=),inner sep=2pt,column sep=1em,row sep=0.5em,nodes={inner sep=0pt},text height=1.5ex, text depth=0.25ex]


\tikzstyle{inline text}=[text height=1.5ex, text depth=0.25ex,yshift=0.5mm]
\tikzstyle{label}=[font=\footnotesize,text height=1.5ex, text depth=0.25ex,yshift=0.5mm]
\tikzstyle{left label}=[label,anchor=east,xshift=1.5mm]
\tikzstyle{right label}=[label,anchor=west,xshift=-1.5mm]


\tikzstyle{braceedge}=[decorate,decoration={brace,amplitude=2mm,raise=-1mm}]
\tikzstyle{small braceedge}=[decorate,decoration={brace,amplitude=1mm,raise=-1mm}]

\tikzstyle{doubled}=[line width=1.6pt] 
\tikzstyle{boldedge}=[doubled,shorten <=-0.17mm,shorten >=-0.17mm]
\tikzstyle{boldedgegray}=[doubled,gray,shorten <=-0.17mm,shorten >=-0.17mm]

\tikzstyle{semidoubled}=[line width=1.4pt] 
\tikzstyle{semiboldedgegray}=[semidoubled,gray,shorten <=-0.17mm,shorten >=-0.17mm]

\tikzstyle{boldedgedashed}=[very thick,dashed,shorten <=-0.17mm,shorten >=-0.17mm]
\tikzstyle{vboldedgedashed}=[doubled,dashed,shorten <=-0.17mm,shorten >=-0.17mm]
\tikzstyle{left hook arrow}=[left hook-latex]
\tikzstyle{right hook arrow}=[right hook-latex]
\tikzstyle{sembracket}=[line width=0.5pt,shorten <=-0.07mm,shorten >=-0.07mm]

\tikzstyle{causal edge}=[->,thick,gray]
\tikzstyle{causal nondir}=[thick,gray]
\tikzstyle{timeline}=[thick,gray, dashed]

\tikzstyle{cedge}=[<->,thick,gray!70!white]

\tikzstyle{empty diagram}=[draw=gray!40!white,dashed,shape=rectangle,minimum width=1cm,minimum height=1cm]
\tikzstyle{empty diagram small}=[draw=gray!50!white,dashed,shape=rectangle,minimum width=0.6cm,minimum height=0.5cm]


\tikzstyle{dot}=[inner sep=0mm,minimum width=3mm,minimum height=3mm,draw,shape=circle,text depth=-0.1mm]
\tikzstyle{smalldot}=[inner sep=0mm,minimum width=2mm,minimum height=2mm,draw,shape=circle,text depth=-0.1mm]
\tikzstyle{ddot}=[inner sep=0mm, doubled, minimum width=3.5mm,minimum height=3.5mm,draw,shape=circle]

\tikzstyle{black dot}=[dot,fill=black]
\tikzstyle{white dot}=[dot,fill=white,,text depth=-0.2mm]
\tikzstyle{green dot}=[white dot] 
\tikzstyle{gray dot}=[dot,fill=gray!40!white,,text depth=-0.2mm]
\tikzstyle{red dot}=[gray dot] 


\tikzstyle{black ddot}=[ddot,fill=black]
\tikzstyle{white ddot}=[ddot,fill=white]
\tikzstyle{gray ddot}=[ddot,fill=gray!40!white]

\tikzstyle{gray edge}=[gray!40!white]

\tikzstyle{small dot}=[inner sep=0.5mm,minimum width=0pt,minimum height=0pt,draw,shape=circle]

\tikzstyle{small black dot}=[small dot,fill=black]
\tikzstyle{small white dot}=[small dot,fill=white]
\tikzstyle{small gray dot}=[small dot,fill=gray!40!white]

\tikzstyle{causal dot}=[inner sep=0.4mm,minimum width=0pt,minimum height=0pt,draw=white,shape=circle,fill=gray!40!white]


\tikzstyle{phase dimensions}=[minimum size=5mm,font=\footnotesize,rectangle,rounded corners=2.5mm,inner sep=0.2mm,outer sep=-2mm,text height=1ex, text depth=0.25ex, yshift=0.5mm]
\tikzstyle{dphase dimensions}=[phase dimensions]

\tikzstyle{phase dot}=[dot,phase dimensions]

\tikzstyle{white phase dot}=[dot,fill=white,phase dimensions]
\tikzstyle{white phase ddot}=[ddot,fill=white,dphase dimensions]

\tikzstyle{white rect ddot}=[draw=black,fill=white,doubled,minimum size=5mm,font=\footnotesize,rectangle,rounded corners=2.5mm,inner sep=0.2mm]
\tikzstyle{gray rect ddot}=[draw=black,fill=gray!40!white,doubled,minimum size=6mm,font=\footnotesize,rectangle,rounded corners=3mm]

\tikzstyle{gray phase dot}=[dot,fill=gray!40!white,phase dimensions]
\tikzstyle{gray phase ddot}=[ddot,fill=gray!40!white,dphase dimensions]
\tikzstyle{grey phase dot}=[gray phase dot]
\tikzstyle{grey phase ddot}=[gray phase ddot]


\tikzstyle{cnot}=[fill=white,shape=circle,inner sep=-1.4pt]
\tikzstyle{hadamard}=[square box,inner sep=0 pt,font=\footnotesize,minimum height=4mm,minimum width=4mm]
\tikzstyle{dhadamard}=[hadamard,doubled]
\tikzstyle{antipode}=[white dot,inner sep=0.3mm,font=\footnotesize]

\tikzstyle{scalar}=[diamond,draw,inner sep=0.5pt,font=\small]
\tikzstyle{dscalar}=[diamond,doubled, draw,inner sep=0.5pt,font=\small]

\tikzstyle{small box}=[rectangle,inline text,fill=white,draw,minimum height=5mm,yshift=-0.5mm,minimum width=5mm,font=\small]
\tikzstyle{small gray box}=[small box,fill=gray!30]
\tikzstyle{medium box}=[rectangle,inline text,fill=white,draw,minimum height=5mm,yshift=-0.5mm,minimum width=10mm,font=\small]
\tikzstyle{square box}=[small box] 
\tikzstyle{medium gray box}=[small box,fill=gray!30]
\tikzstyle{semilarge box}=[rectangle,inline text,fill=white,draw,minimum height=5mm,yshift=-0.5mm,minimum width=12.5mm,font=\small]
\tikzstyle{large box}=[rectangle,inline text,fill=white,draw,minimum height=5mm,yshift=-0.5mm,minimum width=15mm,font=\small]
\tikzstyle{large gray box}=[small box,fill=gray!30]

\tikzstyle{gray square point}=[small box,fill=gray!50]

\tikzstyle{dphase box white}=[dbox]
\tikzstyle{dphase box gray}=[dbox,fill=gray!50!white]

\tikzstyle{point}=[regular polygon,regular polygon sides=3,draw,scale=0.75,inner sep=-0.5pt,minimum width=9mm,fill=white,regular polygon rotate=180]
\tikzstyle{copoint}=[regular polygon,regular polygon sides=3,draw,scale=0.75,inner sep=-0.5pt,minimum width=9mm,fill=white]
\tikzstyle{dpoint}=[point,doubled]
\tikzstyle{dcopoint}=[copoint,doubled]

\tikzstyle{wide copoint}=[fill=white,draw,shape=isosceles triangle,shape border rotate=90,isosceles triangle stretches=true,inner sep=0pt,minimum width=1.5cm,minimum height=6.12mm]
\tikzstyle{wide point}=[fill=white,draw,shape=isosceles triangle,shape border rotate=-90,isosceles triangle stretches=true,inner sep=0pt,minimum width=1.5cm,minimum height=6.12mm,yshift=-0.0mm]
\tikzstyle{wide point plus}=[fill=white,draw,shape=isosceles triangle,shape border rotate=-90,isosceles triangle stretches=true,inner sep=0pt,minimum width=1.74cm,minimum height=7mm,yshift=-0.0mm]

\tikzstyle{wide dpoint}=[fill=white,doubled,draw,shape=isosceles triangle,shape border rotate=-90,isosceles triangle stretches=true,inner sep=0pt,minimum width=1.5cm,minimum height=6.12mm,yshift=-0.0mm]

\tikzstyle{tinypoint}=[regular polygon,regular polygon sides=3,draw,scale=0.55,inner sep=-0.15pt,minimum width=6mm,fill=white,regular polygon rotate=180] 

\tikzstyle{white point}=[point]
\tikzstyle{white dpoint}=[dpoint]
\tikzstyle{green point}=[white point] 
\tikzstyle{white copoint}=[copoint]
\tikzstyle{gray point}=[point,fill=gray!40!white]
\tikzstyle{gray dpoint}=[gray point,doubled]
\tikzstyle{red point}=[gray point] 
\tikzstyle{gray copoint}=[copoint,fill=gray!40!white]
\tikzstyle{gray dcopoint}=[gray copoint,doubled]

\tikzstyle{black point}=[point,fill=black]
\tikzstyle{black copoint}=[copoint,fill=black]

\tikzstyle{tiny gray point}=[tinypoint,fill=gray!40!white]

\tikzstyle{diredge}=[->]
\tikzstyle{rdiredge}=[<-]
\tikzstyle{thickdiredge}=[->, very thick]
\tikzstyle{pointer edge}=[->,very thick,gray]
\tikzstyle{pointer edge part}=[very thick,gray]
\tikzstyle{dashed edge}=[dashed]
\tikzstyle{thick dashed edge}=[very thick,dashed]
\tikzstyle{thick gray dashed edge}=[thick dashed edge,gray!40]
\tikzstyle{thick map edge}=[very thick,|->]


\makeatletter
\newcommand{\boxshape}[3]{%
\pgfdeclareshape{#1}{
\inheritsavedanchors[from=rectangle] 
\inheritanchorborder[from=rectangle]
\inheritanchor[from=rectangle]{center}
\inheritanchor[from=rectangle]{north}
\inheritanchor[from=rectangle]{south}
\inheritanchor[from=rectangle]{west}
\inheritanchor[from=rectangle]{east}
\backgroundpath{
\southwest \pgf@xa=\pgf@x \pgf@ya=\pgf@y
\northeast \pgf@xb=\pgf@x \pgf@yb=\pgf@y

\@tempdima=#2
\@tempdimb=#3

\pgfpathmoveto{\pgfpoint{\pgf@xa - 5pt + \@tempdima}{\pgf@ya}}
\pgfpathlineto{\pgfpoint{\pgf@xa - 5pt - \@tempdima}{\pgf@yb}}
\pgfpathlineto{\pgfpoint{\pgf@xb + 5pt + \@tempdimb}{\pgf@yb}}
\pgfpathlineto{\pgfpoint{\pgf@xb + 5pt - \@tempdimb}{\pgf@ya}}
\pgfpathlineto{\pgfpoint{\pgf@xa - 5pt + \@tempdima}{\pgf@ya}}
\pgfpathclose
}
}}

\boxshape{NEbox}{0pt}{5pt}
\boxshape{SEbox}{0pt}{-5pt}
\boxshape{NWbox}{5pt}{0pt}
\boxshape{SWbox}{-5pt}{0pt}
\boxshape{EBox}{-3pt}{3pt}
\boxshape{WBox}{3pt}{-3pt}
\makeatother

\tikzstyle{cloud}=[shape=cloud,draw,minimum width=1.5cm,minimum height=1.5cm]

\tikzstyle{map}=[draw,shape=NEbox,inner sep=2pt,minimum height=6mm,fill=white]
\tikzstyle{dashedmap}=[draw,dashed,shape=NEbox,inner sep=2pt,minimum height=6mm,fill=white]
\tikzstyle{mapdag}=[draw,shape=SEbox,inner sep=2pt,minimum height=6mm,fill=white]
\tikzstyle{mapadj}=[draw,shape=SEbox,inner sep=2pt,minimum height=6mm,fill=white]
\tikzstyle{maptrans}=[draw,shape=SWbox,inner sep=2pt,minimum height=6mm,fill=white]
\tikzstyle{mapconj}=[draw,shape=NWbox,inner sep=2pt,minimum height=6mm,fill=white]

\tikzstyle{medium map}=[draw,shape=NEbox,inner sep=2pt,minimum height=6mm,fill=white,minimum width=7mm]
\tikzstyle{medium map dag}=[draw,shape=SEbox,inner sep=2pt,minimum height=6mm,fill=white,minimum width=7mm]
\tikzstyle{medium map adj}=[draw,shape=SEbox,inner sep=2pt,minimum height=6mm,fill=white,minimum width=7mm]
\tikzstyle{medium map trans}=[draw,shape=SWbox,inner sep=2pt,minimum height=6mm,fill=white,minimum width=7mm]
\tikzstyle{medium map conj}=[draw,shape=NWbox,inner sep=2pt,minimum height=6mm,fill=white,minimum width=7mm]
\tikzstyle{semilarge map}=[draw,shape=NEbox,inner sep=2pt,minimum height=6mm,fill=white,minimum width=9.5mm]
\tikzstyle{semilarge map trans}=[draw,shape=SWbox,inner sep=2pt,minimum height=6mm,fill=white,minimum width=9.5mm]
\tikzstyle{semilarge map adj}=[draw,shape=SEbox,inner sep=2pt,minimum height=6mm,fill=white,minimum width=9.5mm]
\tikzstyle{semilarge map dag}=[draw,shape=SEbox,inner sep=2pt,minimum height=6mm,fill=white,minimum width=9.5mm]
\tikzstyle{semilarge map conj}=[draw,shape=NWbox,inner sep=2pt,minimum height=6mm,fill=white,minimum width=9.5mm]
\tikzstyle{large map}=[draw,shape=NEbox,inner sep=2pt,minimum height=6mm,fill=white,minimum width=12mm]
\tikzstyle{very large map}=[draw,shape=NEbox,inner sep=2pt,minimum height=6mm,fill=white,minimum width=17mm]

\tikzstyle{medium dmap}=[draw,doubled,shape=NEbox,inner sep=2pt,minimum height=6mm,fill=white,minimum width=7mm]
\tikzstyle{medium dmap dag}=[draw,doubled,shape=SEbox,inner sep=2pt,minimum height=6mm,fill=white,minimum width=7mm]
\tikzstyle{medium dmap adj}=[draw,doubled,shape=SEbox,inner sep=2pt,minimum height=6mm,fill=white,minimum width=7mm]
\tikzstyle{medium dmap trans}=[draw,doubled,shape=SWbox,inner sep=2pt,minimum height=6mm,fill=white,minimum width=7mm]
\tikzstyle{medium dmap conj}=[draw,doubled,shape=NWbox,inner sep=2pt,minimum height=6mm,fill=white,minimum width=7mm]
\tikzstyle{semilarge dmap}=[draw,doubled,shape=NEbox,inner sep=2pt,minimum height=6mm,fill=white,minimum width=9.5mm]
\tikzstyle{semilarge dmap trans}=[draw,doubled,shape=SWbox,inner sep=2pt,minimum height=6mm,fill=white,minimum width=9.5mm]
\tikzstyle{semilarge dmap adj}=[draw,doubled,shape=SEbox,inner sep=2pt,minimum height=6mm,fill=white,minimum width=9.5mm]
\tikzstyle{semilarge dmap dag}=[draw,doubled,shape=SEbox,inner sep=2pt,minimum height=6mm,fill=white,minimum width=9.5mm]
\tikzstyle{semilarge dmap conj}=[draw,doubled,shape=NWbox,inner sep=2pt,minimum height=6mm,fill=white,minimum width=9.5mm]
\tikzstyle{large dmap}=[draw,doubled,shape=NEbox,inner sep=2pt,minimum height=6mm,fill=white,minimum width=12mm]
\tikzstyle{large dmap conj}=[draw,doubled,shape=NWbox,inner sep=2pt,minimum height=6mm,fill=white,minimum width=12mm]
\tikzstyle{large dmap trans}=[draw,doubled,shape=SWbox,inner sep=2pt,minimum height=6mm,fill=white,minimum width=12mm]
\tikzstyle{very large dmap}=[draw,doubled,shape=NEbox,inner sep=2pt,minimum height=6mm,fill=white,minimum width=19.5mm]

\tikzstyle{muxbox}=[draw,shape=rectangle,minimum height=3mm,minimum width=3mm,fill=white]
\tikzstyle{dmuxbox}=[muxbox,doubled]

\tikzstyle{box}=[draw,shape=rectangle,inner sep=2pt,minimum height=6mm,minimum width=6mm,fill=white]
\tikzstyle{dbox}=[draw,doubled,shape=rectangle,inner sep=2pt,minimum height=6mm,minimum width=6mm,fill=white]
\tikzstyle{dmap}=[draw,doubled,shape=NEbox,inner sep=2pt,minimum height=6mm,fill=white]
\tikzstyle{dmapdag}=[draw,doubled,shape=SEbox,inner sep=2pt,minimum height=6mm,fill=white]
\tikzstyle{dmapadj}=[draw,doubled,shape=SEbox,inner sep=2pt,minimum height=6mm,fill=white]
\tikzstyle{dmaptrans}=[draw,doubled,shape=SWbox,inner sep=2pt,minimum height=6mm,fill=white]
\tikzstyle{dmapconj}=[draw,doubled,shape=NWbox,inner sep=2pt,minimum height=6mm,fill=white]

\tikzstyle{ddmap}=[draw,doubled,dashed,shape=NEbox,inner sep=2pt,minimum height=6mm,fill=white]
\tikzstyle{ddmapdag}=[draw,doubled,dashed,shape=SEbox,inner sep=2pt,minimum height=6mm,fill=white]
\tikzstyle{ddmapadj}=[draw,doubled,dashed,shape=SEbox,inner sep=2pt,minimum height=6mm,fill=white]
\tikzstyle{ddmaptrans}=[draw,doubled,dashed,shape=SWbox,inner sep=2pt,minimum height=6mm,fill=white]
\tikzstyle{ddmapconj}=[draw,doubled,dashed,shape=NWbox,inner sep=2pt,minimum height=6mm,fill=white]

\boxshape{sNEbox}{0pt}{3pt}
\boxshape{sSEbox}{0pt}{-3pt}
\boxshape{sNWbox}{3pt}{0pt}
\boxshape{sSWbox}{-3pt}{0pt}
\tikzstyle{smap}=[draw,shape=sNEbox,fill=white]
\tikzstyle{smapdag}=[draw,shape=sSEbox,fill=white]
\tikzstyle{smapadj}=[draw,shape=sSEbox,fill=white]
\tikzstyle{smaptrans}=[draw,shape=sSWbox,fill=white]
\tikzstyle{smapconj}=[draw,shape=sNWbox,fill=white]

\tikzstyle{dsmap}=[draw,dashed,shape=sNEbox,fill=white]
\tikzstyle{dsmapdag}=[draw,dashed,shape=sSEbox,fill=white]
\tikzstyle{dsmaptrans}=[draw,dashed,shape=sSWbox,fill=white]
\tikzstyle{dsmapconj}=[draw,dashed,shape=sNWbox,fill=white]

\boxshape{mNEbox}{0pt}{10pt}
\boxshape{mSEbox}{0pt}{-10pt}
\boxshape{mNWbox}{10pt}{0pt}
\boxshape{mSWbox}{-10pt}{0pt}
\tikzstyle{mmap}=[draw,shape=mNEbox]
\tikzstyle{mmapdag}=[draw,shape=mSEbox]
\tikzstyle{mmaptrans}=[draw,shape=mSWbox]
\tikzstyle{mmapconj}=[draw,shape=mNWbox]

\tikzstyle{mmapgray}=[draw,fill=gray!40!white,shape=mNEbox]
\tikzstyle{smapgray}=[draw,fill=gray!40!white,shape=sNEbox]

\makeatletter
\pgfdeclareshape{cornerpoint}{
\inheritsavedanchors[from=rectangle] 
\inheritanchorborder[from=rectangle]
\inheritanchor[from=rectangle]{center}
\inheritanchor[from=rectangle]{north}
\inheritanchor[from=rectangle]{south}
\inheritanchor[from=rectangle]{west}
\inheritanchor[from=rectangle]{east}
\backgroundpath{
\southwest \pgf@xa=\pgf@x \pgf@ya=\pgf@y
\northeast \pgf@xb=\pgf@x \pgf@yb=\pgf@y

\pgfmathsetmacro{\pgf@shorten@left}{\pgfkeysvalueof{/tikz/shorten left}}
\pgfmathsetmacro{\pgf@shorten@right}{\pgfkeysvalueof{/tikz/shorten right}}

\pgfpathmoveto{\pgfpoint{0.5 * (\pgf@xa + \pgf@xb)}{\pgf@ya - 5pt}}
\pgfpathlineto{\pgfpoint{\pgf@xa - 8pt + \pgf@shorten@left}{\pgf@yb - 1.5 * \pgf@shorten@left}}
\pgfpathlineto{\pgfpoint{\pgf@xa - 8pt + \pgf@shorten@left}{\pgf@yb}}
\pgfpathlineto{\pgfpoint{\pgf@xb + 8pt - \pgf@shorten@right}{\pgf@yb}}
\pgfpathlineto{\pgfpoint{\pgf@xb + 8pt - \pgf@shorten@right}{\pgf@yb - 1.5 * \pgf@shorten@right}}
\pgfpathclose
}
}

\pgfdeclareshape{cornercopoint}{
\inheritsavedanchors[from=rectangle] 
\inheritanchorborder[from=rectangle]
\inheritanchor[from=rectangle]{center}
\inheritanchor[from=rectangle]{north}
\inheritanchor[from=rectangle]{south}
\inheritanchor[from=rectangle]{west}
\inheritanchor[from=rectangle]{east}
\backgroundpath{
\southwest \pgf@xa=\pgf@x \pgf@ya=\pgf@y
\northeast \pgf@xb=\pgf@x \pgf@yb=\pgf@y

\pgfmathsetmacro{\pgf@shorten@left}{\pgfkeysvalueof{/tikz/shorten left}}
\pgfmathsetmacro{\pgf@shorten@right}{\pgfkeysvalueof{/tikz/shorten right}}

\pgfpathmoveto{\pgfpoint{0.5 * (\pgf@xa + \pgf@xb)}{\pgf@yb + 5pt}}
\pgfpathlineto{\pgfpoint{\pgf@xa - 8pt + \pgf@shorten@left}{\pgf@ya + 1.5 * \pgf@shorten@left}}
\pgfpathlineto{\pgfpoint{\pgf@xa - 8pt + \pgf@shorten@left}{\pgf@ya}}
\pgfpathlineto{\pgfpoint{\pgf@xb + 8pt - \pgf@shorten@right}{\pgf@ya}}
\pgfpathlineto{\pgfpoint{\pgf@xb + 8pt - \pgf@shorten@right}{\pgf@ya + 1.5 * \pgf@shorten@right}}
\pgfpathclose
}
}

\makeatother

\pgfkeyssetvalue{/tikz/shorten left}{0pt}
\pgfkeyssetvalue{/tikz/shorten right}{0pt}

\tikzstyle{kpoint common}=[draw,fill=white,inner sep=1pt,minimum height=3mm]
\tikzstyle{kpoint}=[shape=cornerpoint,shorten left=5pt,kpoint common]
\tikzstyle{kpoint adjoint}=[shape=cornercopoint,shorten left=5pt,kpoint common]
\tikzstyle{kpoint conjugate}=[shape=cornerpoint,shorten right=5pt,kpoint common]
\tikzstyle{kpoint transpose}=[shape=cornercopoint,shorten right=5pt,kpoint common]
\tikzstyle{kpoint symm}=[shape=cornerpoint,shorten left=5pt,shorten right=5pt,kpoint common]

\tikzstyle{black kpoint}=[shape=cornerpoint,shorten left=5pt,kpoint common,fill=black]
\tikzstyle{black kpoint adjoint}=[shape=cornercopoint,shorten left=5pt,kpoint common,fill=black]

\tikzstyle{kpointdag}=[kpoint adjoint]
\tikzstyle{kpointadj}=[kpoint adjoint]
\tikzstyle{kpointconj}=[kpoint conjugate]
\tikzstyle{kpointtrans}=[kpoint transpose]

\tikzstyle{big kpoint}=[kpoint, minimum width=1.2 cm, minimum height=8mm, inner sep=4pt, text depth=3mm]

\tikzstyle{wide kpoint}=[kpoint, minimum width=1 cm, inner sep=2pt, text depth=-0.7 mm]
\tikzstyle{wide kpointdag}=[kpointdag, minimum width=1 cm, inner sep=2pt, text depth=0.7 mm]
\tikzstyle{wide kpointconj}=[kpointconj, minimum width=1 cm, inner sep=2pt, text depth=-0.7 mm]
\tikzstyle{wide kpointtrans}=[kpointtrans, minimum width=1 cm, inner sep=2pt, text depth=0.7 mm]

\tikzstyle{gray kpoint}=[kpoint,fill=gray!50!white]
\tikzstyle{gray kpointdag}=[kpointdag,fill=gray!50!white]
\tikzstyle{gray kpointadj}=[kpointadj,fill=gray!50!white]
\tikzstyle{gray kpointconj}=[kpointconj,fill=gray!50!white]
\tikzstyle{gray kpointtrans}=[kpointtrans,fill=gray!50!white]

\tikzstyle{gray dkpoint}=[kpoint,fill=gray!50!white,doubled]
\tikzstyle{gray dkpointdag}=[kpointdag,fill=gray!50!white,doubled]
\tikzstyle{gray dkpointadj}=[kpointadj,fill=gray!50!white,doubled]
\tikzstyle{gray dkpointconj}=[kpointconj,fill=gray!50!white,doubled]
\tikzstyle{gray dkpointtrans}=[kpointtrans,fill=gray!50!white,doubled]

\tikzstyle{white label}=[draw,fill=white,rectangle,inner sep=0.7 mm]
\tikzstyle{gray label}=[draw,fill=gray!50!white,rectangle,inner sep=0.7 mm]
\tikzstyle{black label}=[draw,fill=black,rectangle,inner sep=0.7 mm]

\tikzstyle{dkpoint}=[kpoint,doubled]
\tikzstyle{wide dkpoint}=[wide kpoint,doubled]
\tikzstyle{dkpointdag}=[kpoint adjoint,doubled]
\tikzstyle{dkcopoint}=[kpoint adjoint,doubled]
\tikzstyle{dkpointadj}=[kpoint adjoint,doubled]
\tikzstyle{dkpointconj}=[kpoint conjugate,doubled]
\tikzstyle{dkpointtrans}=[kpoint transpose,doubled]

\tikzstyle{kscalar}=[kpoint common, shape=EBox, inner xsep=-1pt, inner ysep=3pt,font=\small]
\tikzstyle{kscalarconj}=[kpoint common, shape=WBox, inner xsep=-1pt, inner ysep=3pt,font=\small]


 \tikzstyle{upground}=[circuit ee IEC,thick,ground,rotate=90,scale=2.5]
 \tikzstyle{downground}=[circuit ee IEC,thick,ground,rotate=-90,scale=2.5]
 \tikzstyle{bigground}=[regular polygon,regular polygon sides=3,draw=gray,scale=0.50,inner sep=-0.5pt,minimum width=10mm,fill=gray]


\tikzstyle{arrs}=[-latex,font=\small,auto]
\tikzstyle{arrow plain}=[arrs]
\tikzstyle{arrow dashed}=[dashed,arrs]
\tikzstyle{arrow bold}=[very thick,arrs]
\tikzstyle{arrow hide}=[draw=white!0,-]
\tikzstyle{arrow reverse}=[latex-]
\tikzstyle{cdnode}=[]

\setcounter{tocdepth}{2}

\title{A Bestiary of Sets and Relations}
\author{
	Stefano Gogioso
	\institute{Quantum Group \\ University of Oxford}
	\email{stefano.gogioso@cs.ox.ac.uk}
}

\begin{document}
	\maketitle

	\begin{abstract}
		Building on established literature and recent developments in the graph-theoretic characterisation of its CPM category, we provide a treatment of pure state and mixed state quantum mechanics in the category $\fRelCategory$ of finite sets and relations. On the way, we highlight the wealth of exotic beasts that hide amongst the extensive operational and structural similarities that the theory shares with more traditional arenas of categorical quantum mechanics, such as the category $\fdHilbCategory$. We conclude our journey by proving that $\fRelCategory$ is local, but not without some unexpected twists.
	\end{abstract}

\section{Introduction}
	\label{section_Background}

	The Categorical Quantum Mechanics programme \cite{CQM-seminal}\cite{CQM-QCSnotes}\cite{CQM-CQMnotes} is concerned with the understanding, through the language of dagger symmetric monoidal categories, of the structural and operational features of quantum theory. The investigation of classical-quantum duality goes through the definition of classical structures, \cite{CQM-QuantumClassicalStructuralism}\cite{CQM-OrthogonalBases} a.k.a. special commutative $\dagger$-Frobenius algebras, which play a central role as the abstract incarnation of non-degenerate observables, and lie at the very heart of the paradigm. The dagger allows for an abstract definition of unitarity, while the CPM construction from \cite{CQM-SelingerCPM} can be leveraged to rigorously define \cite{CQM-EvironmentClassicalChannels}\cite{CQM-QCSnotes}\cite{CQM-QuantumMeasuNoSums} discarding maps, mixed states, decoherence and measurements. 

	The extreme versatility of the approach has given birth, in the years, to many a toy model of quantum theory, a particularly interesting one being the category $\fRelCategory$ of finite sets and relations   The apparent classicality-by-construction of $\fRelCategory$ contrasts starkly with the presence of many trademarks of quantum theory: superposition, entanglement, plenty of classical structures and phases. The full characterisation of classical structures in terms of abelian groupoids is known \cite{BestRel-ClassicalStructuresRel}\cite{BestRel-RelMutuallyUnbiased} (and generalised in \cite{BestRel-RelativeFrob} to arbitrary groupoids and special $\dagger$-Frobenius algebras), and provides a stimulating playground in which to stress-test several operational features \cite{BestRel-ZengAlgoRel}\cite{StefanoGogioso-RepTheoryCQM}\cite{BestRel-ToyQuantumCategories}\cite{CQM-ZXCalculusSeminal}\cite{CQM-QuantumClassicalStructuralism} taken for granted in more the traditional arenas of quantum mechanics. $\fRelCategory$ has recently found application in Natural Language Processing \cite{BestRel-CPMNLP}, where it relates to Montague Semantics; also in the context of NLP, its CPM category has been shown \cite{BestRel-graphCPMrel} to have a particularly handy graph-theoretic characterisation. 

	In this work, we give an overview of $\fRelCategory$ as a model of categorical quantum mechanics: comparing and contrasting with the category $\fdHilbCategory$ of Hilbert spaces and linear maps, we highlight the many similarities with the traditional framework, and a number of fairly traits unique to $\fRelCategory$. Using the newly developed \cite{BestRel-graphCPMrel} graph-theoretic characterisation of $\CPMCategory{\fRelCategory}$, we explore the exotic landscape of mixed state quantum mechanics in $\fRelCategory$: we characterise decoherence maps and demolition measurements, and we manage to show that, despite the significant differences from traditional frameworks, the theory is local. This proof of locality, done with respect to operationally defined demolition measurements, extends and completes the one presented in \cite{BestRel-RelationalNonlocality}, which only applies to measurements valued in the discrete classical structure (the only one with enough classical points). In order to improve the flow, we have omitted the proofs of results well established in the literature provided and/or straightforward to check if necessary.

\section{Pure state quantum mechanics in $\fRelCategory$}
	\label{section_QMRel}

	The category $\fRelCategory$ is defined as having finite sets as its objects, relations $R \subseteq X \times Y$ as morphisms $X \rightarrow Y$ and relational composition. As $\fdHilbCategory$, it is a $\dagger$ symmetric monoidal category (henceforth $\dagger$-SMC), with the cartesian product $\times$ of sets/relations as tensor, the singleton set $1 := \{\star\}$ as tensor unit and a dagger defined by:
	\begin{equation}
		R^\dagger := \suchthat{(y,x)}{(x,y)\in R}
	\end{equation}

	\noindent The scalars of $\fRelCategory$ are $\{\bot,\top\}$, with the tensor and unitors inducing multiplication $\times$ (or equivalently $\inf$) on them. As $\fdHilbCategory$, the category $\fRelCategory$ is enriched over finite monoids, with the tensor distributing over a superposition operation $\vee$, the union of relations.\footnote{We will denote union and intersections of subets/relations by $\vee$ and $\wedge$, to avoid confusion with the cup $\cup$ and cap $\cap$ of the compact structure defined later. It also bodes well with potential generalisations from booleans to semirings or locales.} The scalars form a semiring $(\{\bot,\top\},\vee,\times)$, which opens up the doors for the application of methods from sheaf-theoretic non-locality.\cite{NLC-SheafSeminal}

	Morphisms $R: X \rightarrow Y$, seen as subsets $R \subseteq X \times Y$, form a sup-semilattice under $\vee$.\footnote{In fact, they form a complete distributive boolean lattice, with intersection of relations $\wedge$ and complement of relations $\neg$.} This applies in particular to states, which can be seen as subsets $\psi \subseteq X$: their hierarchical superposition structure, with $\ket{X}$ as maximum and the elements $\ket{x}$ for $x\in X$ as atoms, is the first significant difference from $\fdHilbCategory$. From now on, we will denote the elements of $X$ by $x\in X$, and the states/subsets of $X$ equivalently by $\psi \subseteq X$ or $\ket{\psi} : 1 \rightarrow X$.
	
	Just as in $\fdHilbCategory$, the tensor of $\fRelCategory$ is not a cartesian tensor in the categorical sense:\footnote{Despite being called the \textit{cartesian product} of relations.} it is sufficient to observe that $\fRelCategory$ has lots of entangled states (letting $X$ and $Y$ have $n$ and $m$ elements respectively, there are $2^{n+m-2}+1$ separable states of $X \times Y$, out of $2^{nm}$ states). The central role in CQM, however, is played by classical structures, rather than states, and we now move to their characterisation in $\fRelCategory$ as abelian groupoids.

	\begin{definition}
		An \textbf{abelian groupoid} on a finite set $X$ is a disjoint family $(G_\lambda,+_\lambda,0_\lambda)_{\lambda\in \Lambda}$ of abelian groups such that $\vee_{\lambda \in  \Lambda} G_\lambda = X$. 
	\end{definition}

	\noindent We will use notation $\oplus_{\lambda\in \Lambda} G_\lambda$ to denote one such abelian groupoid. Notice that any groupoid induces a partition of $X$ into states $G_\lambda \subseteq X$: when one such groupoid (and hence partition) is understood, we will label the elements of $X$ by $g_\lambda$, with $\lambda$ ranging over $\Lambda$ and $g_\lambda$ ranging over $G_\lambda$. Now we can turn our attention to classical structures.

	\begin{theorem}\label{thm_ClassicalStructsFRel}
		Classical structures in $\fRelCategory$ coincide with abelian groupoids, i.e.\ if $(\Zstructure)$ is a classical structure on $X$, then there exists a unique abelian groupoid $\oplus_{\lambda\in \Lambda} G_\lambda$ such that:
		\begin{align}
			\ZmultSym &= \suchthat{((g_\lambda,g'_{\lambda}),g_\lambda+ g'_\lambda)}{g,g'\in G_\lambda, \lambda \in  \Lambda} \\
			\ZunitSym &= \suchthat{(\star,0_\lambda)}{\lambda \in  \Lambda}\\
			\ZcomultSym &= \suchthat{(g_\lambda,(g'_\lambda,g''_\lambda))}{g,g',g''\in G_\lambda, \lambda \in  \Lambda \text{ and }g'_\lambda + g''_\lambda = g_\lambda} \\
			\ZcounitSym &= \suchthat{(0_\lambda,\star)}{\lambda \in  \Lambda}
		\end{align} 
		Furthermore, the classical points of $(\Zstructure)$ are exactly the states $\ket{G_\lambda} : 1 \rightarrow X$. The phases of $(\Zstructure)$ are exactly the states in the form $\suchthat{(\star,g_\lambda)}{\lambda\in \lambda}$ with $g_\lambda \in  G_\lambda$ for all $\lambda$ (so there are a lot more phases than classical points).
	\end{theorem}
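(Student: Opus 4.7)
The plan is to prove both directions of the correspondence, noting that the core result appears in \cite{BestRel-ClassicalStructuresRel} and is generalised in \cite{BestRel-RelativeFrob}; my sketch emphasises the structural flow rather than the diagrammatic book-keeping.

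The forward direction, that an abelian groupoid $\oplus_{\lambda \in \Lambda} G_\lambda$ on $X$ yields a classical structure via the four displayed formulas, is a mechanical diagrammatic verification: each axiom of a commutative special $\dagger$-Frobenius algebra reduces to a corresponding property of the groups, namely commutativity of $\ZmultSym$ from commutativity of each $+_\lambda$; the unit laws from $0_\lambda$ being the identity; associativity from group associativity; speciality from the fact that each $+_\lambda$ is a total single-valued operation on $G_\lambda \times G_\lambda$ together with disjointness of the family; and the Frobenius law from the existence of inverses in each $G_\lambda$.

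For the reverse direction I would start from an arbitrary classical structure $(\Zstructure)$ on $X$ and reconstruct the groupoid data. The unit law $\ZmultSym \circ (\ZunitSym \otimes \id{X}) = \id{X}$ assigns to each $x \in X$ a unique local identity $e(x) \in E := \{e : (\star,e) \in \ZunitSym\}$ satisfying $e(x) \cdot x = x$, where I write $\cdot$ for $\ZmultSym$. Partitioning $X$ by $x \sim x' \iff e(x) = e(x')$ gives subsets $G_e$ indexed by $e \in E$, and I would then show in turn: (i) $\ZmultSym$ restricts to a total single-valued operation on each $G_e \times G_e$, using speciality; (ii) associativity and commutativity of $\ZmultSym$ transfer immediately; (iii) each $e$ acts as identity of $G_e$ by the unit law; (iv) the Frobenius law provides inverses, yielding for each $g \in G_e$ a unique $g^{-1} \in G_e$ with $g \cdot g^{-1} = e$. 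Step (iv) is where I expect the real work to sit, since it is the one place where the Frobenius law is used non-trivially at the relational level (as opposed to serving as bookkeeping). Uniqueness of the extracted abelian groupoid is then immediate, since $E$ and the partition $\{G_e\}$ are invariants of $(\Zstructure)$.

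The characterisations of classical points and phases follow by direct computation from the explicit formulas. A state $\psi \subseteq X$ is a classical point iff $\ZcomultSym \circ \ket{\psi} = \ket{\psi} \otimes \ket{\psi}$ and $\ZcounitSym \circ \ket{\psi} = \id{1}$; unfolding these via the formulas forces $\psi$ to lie inside a single $G_\lambda$, to be closed under $+_\lambda$, and to be closed under decomposition, and this last closure combined with the group structure of $G_\lambda$ (every $g \in G_\lambda$ decomposes as $h +_\lambda (g -_\lambda h)$) forces $\psi = G_\lambda$. A phase $\ket{\alpha}$ is characterised by $\ZmultSym \circ (\ket{\alpha} \otimes \id{X})$ being unitary; interpreting unitarity in $\fRelCategory$ as the associated relation being a bijection, the condition forces $\alpha$ to pick out exactly one element $g_\lambda$ from each $G_\lambda$, giving the stated form; counting these as $\prod_\lambda |G_\lambda|$ against the $|\Lambda|$ classical points then explains the final parenthetical remark.
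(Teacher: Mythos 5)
The paper does not actually prove this theorem: its ``proof'' is a pointer to the literature, citing \cite{BestRel-ClassicalStructuresRel} for the classical-structure/abelian-groupoid correspondence, \cite{BestRel-RelMutuallyUnbiased} for the classical points, and \cite{StefanoGogioso-MerminNonLocality} for the phases. Your attempt at a direct verification is therefore a genuinely different route, and parts of it are sound: the forward direction is indeed a mechanical check, and your treatment of classical points (copyability plus the counit condition forces a nonempty state contained in a single $G_\lambda$, and closure under decomposition then forces $\psi = G_\lambda$) and of phases (unitarity of $\ZmultSym \circ (\ket{\alpha} \otimes \id{X})$, which in $\fRelCategory$ means bijectivity by Lemma \ref{lemma_IsometriesUnitarisFRel}, forces $|\alpha \cap G_\lambda| = 1$ for every $\lambda$) are correct and essentially the standard computations.

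The gap is in the reverse direction, and it sits exactly where the content of the cited result lies. The unit law in $\fRelCategory$ says only that for every $x$ there is \emph{at least one} $e$ with $((e,x),x) \in \ZmultSym$, and that for any unit $e$ the product $e \cdot x$ is either empty or $\{x\}$; it does \emph{not} by itself give uniqueness of $e(x)$, so the partition $\{G_e\}$ is not yet well defined at the point where you introduce it. Similarly, speciality says precisely that $\ZmultSym$, viewed as a relation $X \times X \rightarrow X$, is a single-valued \emph{surjective} partial function $X \times X \rightharpoonup X$: it gives single-valuedness, but not totality on $G_e \times G_e$, not that products remain in the same class, and not that cross-class products are undefined. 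All of these facts, like the existence of inverses in your step (iv), require the Frobenius law in an essential way, and establishing them in the right order is the actual work in the groupoid theorem of \cite{BestRel-ClassicalStructuresRel} (and its generalisation in \cite{BestRel-RelativeFrob}). As written, your steps (i)--(iii) are justified by the wrong axioms and would not go through without either importing that theorem wholesale or reconstructing its argument, so the reconstruction of the groupoid --- and hence the uniqueness claim that rides on it --- is not yet proved by your sketch.
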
 
	\begin{proof}
		Proof of the classical structure / abelian groupoid connection is originally presented in \cite{BestRel-ClassicalStructuresRel}. The classical points are classified in \cite{BestRel-RelMutuallyUnbiased}. To the best of our knowledge, the phase structure in $\fRelCategory$ is first discussed in \cite{StefanoGogioso-MerminNonLocality}.
	\end{proof}

	\noindent We see that $\ZmultSym$ is a partial function acting as the group operation of $G_\lambda$ whenever its two arguments both belong to the same group $G_\lambda$, and being undefined otherwise. In the case of strongly complementary structures things get particularly interesting: we present the result here (from \cite{BestRel-RelMutuallyUnbiased}) for sake of completeness, but we will not concern ourselves with strong complementarity any further in this work, even though it is possible to use strong complementarity to implement Fourier transforms in $\fRelCategory$ (as shown in \cite{StefanoGogioso-RepTheoryCQM}). 
	
	\begin{theorem}\label{thm_StrongComplementarityFRel}
		Let $(\hbox{\begin{tikzpicture} [scale=1,transform shape] 

\def\deltax{0.3} 
\def\deltay{0.5} 


\node [dot, fill=\classicalStructColour] (mult) at (0,0) {};

\end{tikzpicture}}\!,\hbox{\begin{tikzpicture} [scale=1,transform shape] 

\def\deltax{0.3} 
\def\deltay{0.5} 


\node [dot, fill=\groupStructColour] (mult) at (0,0) {};

\end{tikzpicture}}\!)$ be a pair of strongly complementary classical structures\footnote{From now on we will use a dot of the structure colour to denote a classical structure.} in $\fRelCategory$. Then there exist unique groups $G,H$ such that $\hbox{\begin{tikzpicture} [scale=1,transform shape] 

\def\deltax{0.3} 
\def\deltay{0.5} 


\node [dot, fill=\classicalStructColour] (mult) at (0,0) {};

\end{tikzpicture}}\!$ corresponds to the groupoid $\oplus_{h\in H}G_h$ with $G_h \isom G$ for all $h\in H$ and $\hbox{\begin{tikzpicture} [scale=1,transform shape] 

\def\deltax{0.3} 
\def\deltay{0.5} 


\node [dot, fill=\groupStructColour] (mult) at (0,0) {};

\end{tikzpicture}}\!$ corresponds to the groupoid $\oplus_{g\in G}H_g$ with $H_g \isom H$ for all $g\in G$.   
	\end{theorem}

	\noindent By \textbf{morphisms of classical structures} we will mean homomorphisms of the comultiplications and counits. In any $\dagger$-SMC, they act as functions on the sets of classical points, and just as in $\fdHilbCategory$ there is a canonical way of seeing functions of the sets of classical points as morphisms of classical structures in $\fRelCategory$. 

	Given a pair of classical structures on sets $X$ and $Y$ corresponding to groupoids $\oplus_{\lambda \in  \Lambda} G_\lambda$ and $\oplus_{\gamma \in  \Gamma} H_\gamma$, and a partial function $f: \Lambda \rightharpoonup \Gamma$ of sets, we can construct the following morphism $R_f: X \rightarrow Y$ in $\fRelCategory$ which is a morphism of the given classical structures and acts as the required partial function on the classical points:
	\begin{equation} \label{eqn_embedingClassicalFunctions}
		R_f := \bigvee_{\lambda \in  \domain{f}} \; G_\lambda \times H_{f(\lambda)} 
	\end{equation}

	\noindent This is the equivalent of the following morphism in $\fdHilbCategory$:
	\begin{equation}
		R_f := \sum_{\lambda \in  \domain{f}} \; \ket{f(\lambda)}\bra{\lambda} 
	\end{equation}

	\noindent So in $\fRelCategory$, exactly as in $\fdHilbCategory$, there is a natural way of doing classical computation by fixing classical structures and using the $R_f$ above to construct the required morphisms. But unlike $\fdHilbCategory$, $\fRelCategory$ has a lot more morphisms of classical structures than that. For example, if $f: \Lambda \stackrel{\SetCategory}{\longrightarrow} \Gamma$ is as before and $\Phi_{\lambda}: G_\lambda \stackrel{\AbCategory}{\longrightarrow} H_{f(\lambda)}$ is a family of group isomorphisms, then the following relation (which is a partial function $X \rightarrow Y$) acts exactly as the relation $R_f$ (which is, in general, not a function at all) on the desired classical points:
	\begin{equation}
		g_\lambda \mapsto \Phi_\lambda(g_\lambda)
	\end{equation}

	\noindent This is a consequence of another, more fundamental difference between $\fRelCategory$ and $\fdHilbCategory$: most classical structures do not have enough classical points. In fact there is a unique classical structure on each set $X$ that does: it is the the \textbf{discrete structure}, given by the discrete groupoid $\oplus_{x\in X} 0_x$ and having the singletons $\ket{x}:=\{x\}$ of $X$ as its classical points. Indeed, the classical points of the discrete structure yield a resolution of the identity:
	\begin{equation}
		\bigvee_{x\in X} \ket{x}\bra{x} = \suchthat{(x,x)}{x\in X} = \id{X}
	\end{equation}

	\noindent When the classical structures on $X$ and $Y$ are the discrete structures, Equation \ref{eqn_embedingClassicalFunctions} provides the usual embedding of the category of finite sets and partial functions in $\fRelCategory$:
	\begin{equation}
		R_f = \suchthat{(x,f(x))}{x \in \domain{f}}
	\end{equation}

	\noindent In any $\dagger$-SMC, any classical structure on some system $X$ can be used to induce a cup $\cup_X$ and cap $\cap_X$ on $X$: the discrete structures in $\fRelCategory$ induce a natural family of cups and caps, and hence a compact-closed structure for $\fRelCategory$. 

	The cup on $X$ is given by the relation $\cup_X := \suchthat{(\star,(x,x))}{x\in X}$, while the cap is the partial map $\cap_X := X \times X \rightharpoonup 1$ sending $(x,x)$ to $\star$ for all $x \in X$ and undefined everywhere else. The resulting conjugation is trivial, i.e. $R^\star = R$, and transposition coincides with the dagger, i.e. $R^T = R^\dagger$. This is somewhat different from $\fdHilbCategory$, where the traditional compact closed structure is not induced by any specific classical structure.

	Finally, a central role in pure state quantum mechanics is also played by isometries and unitaries. Recall that an isometry in any $\dagger$-SMC is a morphism $f: X \rightarrow Y$ such that $f^\dagger \circ f = \id{X}$, and a unitary is a morphism $U$ such that both $U$ and $U^\dagger$ are isometries. In $\fdHilbCategory$, unitaries coincide with orthonormal change-of-basis transformations, i.e. bijective classical maps. This results in the following (straightforward) lemma.
	
	\begin{lemma}\label{lemma_IsometriesUnitarisFHilb}
		If $f: X \rightarrow Y$ is a morphism in $\fdHilbCategory$, then $f$ is an isometry if and only if there are classical structures on $X$ and $Y$ making $f$ into an injective classical map. Furthermore, $f$ is a unitary if and only if it is a bijective classical map.\footnote{In fact, if $f$ is an isometry then fixing any classical structure on $X$ means there is a classical structure on $Y$ making $f$ into an injective classical map. If $f$ is furthermore a unitary, the structure on $Y$ is also unique.}
	\end{lemma}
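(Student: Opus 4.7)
The plan is to use the standard correspondence between classical structures in $\fdHilbCategory$ and orthonormal bases (ONBs): a classical structure on $X$ is precisely a choice of ONB $\{\ket{x}\}_{x \in X}$, and (by the $\fdHilbCategory$ analogue of equation \ref{eqn_embedingClassicalFunctions}) a morphism of classical structures acting as a partial function $f$ on classical points is exactly the linear map $R_f = \sum_{x \in \domain{f}} \ket{f(x)}\bra{x}$. Under this dictionary, the lemma reduces to asking when an arbitrary linear map can be put into the form $\sum_x \ket{f(x)}\bra{x}$ for suitable ONBs on source and target.

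For the isometry direction, I would first fix any ONB $\{\ket{x}\}_{x \in X}$ of $X$ (this sets up the footnote's stronger claim). Since $f^\dagger \circ f = \id{X}$, one computes $\braket{f(x)}{f(x')} = \bra{x} f^\dagger f \ket{x'} = \delta_{x,x'}$, so $\{f\ket{x}\}_{x \in X}$ is an orthonormal family in $Y$; extending it to an ONB $\{\ket{y}\}_{y \in Y}$ (possible in any finite-dimensional Hilbert space) gives a classical structure on $Y$ with respect to which $f$ acts as the injective map $\ket{x} \mapsto \ket{f(x)}$ on classical points. The converse is a one-line check: an injective classical map $R_f = \sum_x \ket{f(x)}\bra{x}$ satisfies
\begin{equation}
R_f^\dagger \circ R_f = \sum_{x,x'} \ket{x}\braket{f(x)}{f(x')}\bra{x'} = \sum_x \ket{x}\bra{x} = \id{X},
\end{equation}
because the $\ket{f(x)}$ are orthonormal.

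For unitaries, one direction is immediate: a bijective classical map $R_f$ has $R_f^\dagger = R_{f^{-1}}$ also a classical map, so both are isometries by the previous paragraph. Conversely, suppose $f$ is unitary and apply the isometry construction starting from any ONB of $X$; the resulting orthonormal family $\{f\ket{x}\}_{x \in X}$ must already span $Y$, because $f \circ f^\dagger = \id{Y}$ forces the image of $f$ to be all of $Y$. Hence $\{f\ket{x}\}_{x\in X}$ is itself an ONB of $Y$, so no extension is needed, $f$ acts as a bijection on classical points, and the ONB on $Y$ is uniquely determined as the image of the one on $X$, giving the footnote's uniqueness claim.

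I expect no genuine obstacle: the whole argument is just the linear-algebraic fact that orthonormal families in a finite-dimensional Hilbert space extend to ONBs, packaged through the ONB/classical-structure dictionary. The only mildly delicate point is to separate cleanly the two halves of the footnote (existence of a compatible ONB on $Y$ for isometries, and its uniqueness for unitaries), which follows from observing that surjectivity of $f$ kills the freedom in the Gram--Schmidt extension.
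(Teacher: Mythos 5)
Your proposal is correct and follows essentially the same route as the paper: both invoke the classical-structure/orthonormal-basis correspondence, fix an ONB on $X$, observe that $f$ is an isometry iff the images $f\ket{x}$ (the matrix columns) are orthonormal and extend them to an ONB of $Y$, and note that unitarity forces these images to already be an ONB, giving the unique structure on $Y$. Your explicit converse computations just fill in directions the paper leaves as straightforward checks.
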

	\begin{proof} 
		Classical structures (special commutative $\dagger$-Frobenius algebras) in $\fdHilbCategory$ correspond to orthonormal bases by \cite{CQM-OrthogonalBases}: let's fix the orthonormal basis corresponding to a classical structure on $X$ and consider the matrix of $f$ in that basis. 

		Then $f$ is an isometry if and only if all column vectors are orthonormal, and any orthonormal basis including the column vectors as a subset will give a classical structure on $Y$ making $f$ an injective classical map. 

		Furthermore, $f$ is unitary if and only if the column vectors form an orthonormal basis, corresponding to a unique classical structure on $Y$.
	\end{proof}

	\noindent In $\fRelCategory$, one could hope for unitaries that are isomorphisms between arbitrary classical structures (as it happens in $\fdHilbCategory$), giving rise to a non-trivial interplay. Unfortunately, the condition of isometry/unitarity turns out to be a lot more restrictive in $\fRelCategory$ than it it in $\fdHilbCategory$, as the following lemma summarises.
	
	\begin{lemma}\label{lemma_IsometriesUnitarisFRel}
		If $f: X \rightarrow Y$ is a morphism in $\fRelCategory$, then $f$ is an isometry if and only $f^\dagger$ is a surjective partial function. Equivalently, $f$ is an isometry if and only if it is an injective classical map with respect to the discrete structure on $X$ and some classical structure on $Y$. Furthermore, $f$ is a unitary if and only if it is a bijection.\footnote{Note that this is NOT the same as a bijective classical map between classical structures.}
	\end{lemma}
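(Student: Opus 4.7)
The plan is to unpack the equation $f^\dagger \circ f = \id{X}$ at the level of relations. Writing $f \subseteq X \times Y$, relational composition gives $f^\dagger \circ f = \{(x,x') \mid \exists y \in Y.\ (x,y) \in f \text{ and } (x',y) \in f\}$. Equating this with $\id{X} = \{(x,x) \mid x \in X\}$ decomposes into two independent conditions: (a) for every $x \in X$ there is some $y$ with $(x,y) \in f$ (totality of $f$ on $X$), and (b) for every $y \in Y$, at most one $x \in X$ satisfies $(x,y) \in f$ (single-valuedness of $f^\dagger$). Together, (a) and (b) are exactly the statement that $f^\dagger$ is a surjective partial function $Y \rightharpoonup X$, giving the first equivalence.

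For the second equivalence I would argue both directions. Assume (a) and (b). Then the subsets $f \circ \ket{x} = \{y \mid (x,y) \in f\}$ for $x \in X$ are nonempty (by (a)) and pairwise disjoint (by (b)), so they form a partition of $f(X) \subseteq Y$. I extend this to a partition of all of $Y$ by adjoining the singletons $\{y\}$ for $y \in Y \setminus f(X)$, and I equip each block with an arbitrary abelian group structure (e.g. the cyclic group $\integers_n$ on a block of size $n$, and the trivial group on each adjoined singleton). By Theorem \ref{thm_ClassicalStructsFRel}, this data yields a classical structure on $Y$ whose classical points are precisely the blocks. With the discrete structure on $X$ (classical points $\ket{x} = \{x\}$) and this new structure on $Y$, the relation $f$ coincides with the canonical morphism $R_g$ of Equation \ref{eqn_embedingClassicalFunctions} for the injective total function $g$ sending $x$ to the label of the block $f \circ \ket{x}$. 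Conversely, any $R_g$ built from a discrete structure on $X$ via an injective partial function $g$ is immediately seen to satisfy (a) (since $g$ is total on $X$, which the isometry condition forces) and (b) (since injectivity of $g$ makes the target blocks disjoint).

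For the unitary characterisation I would apply the isometry result to both $f$ and $f^\dagger$: isometry of $f$ says $f^\dagger$ is a surjective partial function, and isometry of $f^\dagger$ says $f$ is a surjective partial function. A relation that is both total and single-valued is just a (total) function, so both $f$ and $f^\dagger$ are functions, and each is surjective; equivalently, $f$ is a bijection between $X$ and $Y$.

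I do not anticipate a serious obstacle: the first equivalence is a direct computation of relational composition, and the third follows by applying the first twice. The only mildly non-routine step is the construction of a classical structure on $Y$ that matches $f$, where one must use the freedom granted by Theorem \ref{thm_ClassicalStructsFRel} to pad out the partition induced by $f$ and to choose arbitrary abelian group laws on each block — this is what lets $f$ be embedded as a classical map, but forces the source structure to be the (uniquely rigid) discrete one.
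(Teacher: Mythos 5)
Your proposal is correct and follows essentially the same route as the paper: unpack $f^\dagger \circ f = \id{X}$ relationally to get that $f^\dagger$ is a surjective partial function, realise $f$ as a classical injection from the discrete structure on $X$ by choosing a classical structure on $Y$ whose classical points include the (disjoint, nonempty) fibres $\{y \mid (x,y)\in f\}$, and obtain the unitary case by applying the isometry characterisation to both $f$ and $f^\dagger$. The only difference is that you spell out the padding of the partition of $Y$ and the choice of group structures, which the paper leaves implicit in the phrase ``any classical structure including all $\ket{f^{-1}(x)}$ amongst its classical points.''
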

	\begin{proof} 
		Let $f: X \rightarrow Y$ be a morphism in $\fRelCategory$, i.e. a relation $f \subseteq X \times Y$. The condition $f^\dagger \circ f = \id{X}$ amounts to the following equation:
		\begin{equation}
			\bigvee_{x \in \domain{f}} \suchthat{(x,z)}{ (x,y) \in f \text{ and } (z,y) \in f} = \suchthat{(x,x)}{x \in X}
		\end{equation}
		This is true if and only if both (i) $\domain{f} = X$ (i.e. $f^\dagger$ is surjective) and (ii) $(x,y) \in f$ and $(z,y) \in f$ imply $x=z$ (i.e. $f^\dagger$ is a partial function). Furthermore $f$ is unitary if and only if both $f$ and $f^\dagger$ are surjective partial functions, which happens if and only if $f$ is a bijection. If $f^\dagger$ is surjective partial function, then $f$ can always be written in the following form, where $f^{-1}(x)$ are disjoint subsets of $Y$ for all $x \in X$:
		\begin{equation}
			\bigvee_{x \in  X} \;\ket{f^{-1}(x)}\bra{x}
		\end{equation}  
		Then any classical structure on $Y$ including all $\ket{f^{-1}(x)}$ amongst its classical points will make $f$ into a classical injection from the discrete structure on $X$.
 	\end{proof}

	\noindent Thus unitaries in $\fRelCategory$ are exactly the bijective classical maps between discrete structures: pure state quantum mechanics in $\fRelCategory$ suddenly becomes quite boring. Let's now proceed to mixed state quantum mechanics in $\CPMCategory{\fRelCategory}$, in the hope that the peculiar measurement structure, resulting from the general lack of enough classical points, will spice things up.

	\begin{remark}
		One last point on the relationship between classical structures in $\fRelCategory$, before moving on to mixed state quantum mechanics. Classical structures on some space $X$ in any $\dagger$-SMC with a distributive superposition operation can be given a preorder by defining $\hbox{\begin{tikzpicture} [scale=1,transform shape] 

\def\deltax{0.3} 
\def\deltay{0.5} 


\node [dot, fill=\classicalStructColour] (mult) at (0,0) {};

\end{tikzpicture}}\!\leq\hbox{\begin{tikzpicture} [scale=1,transform shape] 

\def\deltax{0.3} 
\def\deltay{0.5} 


\node [dot, fill=\groupStructColour] (mult) at (0,0) {};

\end{tikzpicture}}\!$ if and only if classical points of $\hbox{\begin{tikzpicture} [scale=1,transform shape] 

\def\deltax{0.3} 
\def\deltay{0.5} 


\node [dot, fill=\groupStructColour] (mult) at (0,0) {};

\end{tikzpicture}}\!$ can be expressed as superpositions of classical points of $\hbox{\begin{tikzpicture} [scale=1,transform shape] 

\def\deltax{0.3} 
\def\deltay{0.5} 


\node [dot, fill=\classicalStructColour] (mult) at (0,0) {};

\end{tikzpicture}}\!$ (possibly multiplied by scalars). 

		In $\fdHilbCategory$, the preorder is an equivalence relation with a single equivalence class, as all structures have enough classical points, but in $\fRelCategory$ this is not so. Recall that the classical structure induced by groupoid $\oplus_{\lambda\in \Lambda} G_\lambda$ on a set $X$ yields a partition $(G_\lambda)_{\lambda\in \Lambda}$ of $X$: thus there is a surjective map of classical structures onto partitions, where classical structures corresponding to the same partition are exactly those having the same classical points. 

		The surjection exactly quotients away the equivalence classes in the preorder, which therefore descends to the partial order (in fact a lattice) on partitions of $X$: this is given by the \textit{refinement} partial order, with the partition into singletons as a minimum and the partition with only $X$ as the maximum. The discrete structure is the only one mapping to the singleton partition, and is therefore the unique minimum for the preorder on classical structures.
	\end{remark}

\newcommand{\CPMrightarrow}{\stackrel{CPM}{\longrightarrow}}

\section{Mixed state quantum mechanics in $\fRelCategory$}
	\label{section_MixQMRel}

	The fundamental observation of categorical quantum mechanics is that there are only a few ingredients needed for an abstract, operational characterisation of pure state quantum theory: states, a dagger for inner products, a (symmetric) tensor for joint systems, classical structures for classical computation, unitaries for dynamics, an optional enrichment of morphisms (with appropriate distributivity law for the tensor) giving some notion of superposition. 

	But an equally fundamental aspect of quantum theory, not immediately captured by this framework, is given by measurements and mixed states: in order to introduce them in $\fRelCategory$, we turn our attention to the associated CPM category $\CPMCategory{\fRelCategory}$.\\

	\noindent The CPM category $\CPMCategory{\fRelCategory}$ has the same objects of $\fRelCategory$, and morphisms $R: X \CPMrightarrow Y$ in $\CPMCategory{\fRelCategory}$ are exactly the morphisms $R: X \times X \rightarrow Y \times Y$ in $\fRelCategory$ taking the following form:
	\begin{equation}\label{CPMmorphisms}
		\hbox{\begin{tikzpicture}[node distance = 10mm]

\node (center) {};

\node (map) [mapconj] [right of = center, xshift = -4mm] {$f$};
\node (out) [above of = map, yshift = +0mm] {};
\node (in) [below of = map, yshift = -0mm] {};

\node (mapconj) [map] [left of = center, xshift = +4mm] {$f^\star$};
\node (outconj) [above of = mapconj, yshift = +0mm] {};
\node (inconj) [below of = mapconj, yshift = -0mm] {};

\begin{pgfonlayer}{background}
\draw[->-=.5,out=90,in=270] (in) to (map);
\draw[->-=.5,out=90,in=270] (map) to (out);
\draw[-<-=.5,out=90,in=270] (inconj) to (mapconj);
\draw[-<-=.5,out=90,in=270] (mapconj) to (outconj);
\draw[->-=.53,out=90,in=90] (map.120) to (mapconj.60);
\end{pgfonlayer}

\end{tikzpicture}}
	\end{equation}
	
	\noindent If $f: X \rightarrow Z \times Y$, then we denote $f^\star = (f^{\dagger})^{T} = f$ in Diagram \ref{CPMmorphisms} as a morphism $f: X \rightarrow Y \times Z$ to keep the picture symmetric and avoid wire-crossing. We shall refer to morphisms in $\CPMCategory{\fRelCategory}$ as \textbf{CPM maps}, to CPM maps $1 \CPMrightarrow X$ as \textbf{mixed states} in $X$, and to CPM morphisms with no cap involved (i.e. with $f: X \rightarrow 1 \times Y$) as \textbf{pure maps} (or \textbf{pure states}, if $X=1$). 

	In particular, the caps are CPM maps $\cap_X : X \stackrel{CPM}{\longrightarrow} 1$ and the cups are mixed states $\cup_X : 1 \stackrel{CPM}{\longrightarrow} X$. Explicitly they are defined to be the following relations:
	\begin{align*}
		\cap_X &:= \suchthat{((x,x),\star)}{x \in X} \\
		\cup_X &:= \suchthat{(\star,(x,x))}{x \in X} \numberthis
	\end{align*}

	\noindent We will refer to $\cap_X$ as the \textbf{discarding map} on $X$, to post-composition with $\cap_X$ as \textbf{tracing out }$X$, and to $\cup_X$ as the \textbf{totally mixed state} in $X$. 

	Then general CPM maps are exactly obtained by tracing out some factor of the codomain of some pure map, or equivalently by applying some map to a totally mixed state in some factor of the domain: the conceptual importance of this observation comes from the following theorem (proven in \cite{CQM-SelingerCPM} and given operational interpretation in \cite{CQM-EvironmentClassicalChannels}).

	\begin{theorem}
		There is a faithful functor $I:\CategoryC \monom \CPMCategory{\CategoryC}$ of $\dagger$-SMCs from any compact-closed $\dagger$-SMC (with enough states) to the associated CPM Category $\CPMCategory{\CategoryC}$, itself a compact-closed $\dagger$-SMC, bijective on objects and mapping morphisms $f:X \rightarrow Y$ in $\CategoryC$ to the respective pure maps given by $f:X \rightarrow 1 \times Y$ in the notation of Diagram \ref{CPMmorphisms}. Thus the CPM construction can be seen, operationally, as the abstract process theory obtained from $\CategoryC$ by adding discarding maps (and/or totally mixed states).
	\end{theorem}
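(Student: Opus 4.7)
The plan is to construct $I$ explicitly, then check in turn that it is a functor, that it preserves the $\dagger$-SMC structure, and finally that it is faithful; only the last step will really require the enough states hypothesis.

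First, I would define $I$ on objects as the identity, and on a morphism $f: X \to Y$ of $\CategoryC$ as the pure CPM map obtained by reading Diagram~\ref{CPMmorphisms} with no cap: equivalently, $I(f)$ is the morphism $X \otimes X \to Y \otimes Y$ in $\CategoryC$ obtained by placing $f$ and its conjugate $f^\star = (f^\dagger)^T$ side by side. Bijectivity on objects is then immediate by construction.

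Next I would verify that $I$ is a $\dagger$-SMC functor. Preservation of identities and composition reduces, via the usual diagrammatic calculus of compact-closed $\dagger$-SMCs, to the contravariance of $(-)^\star$ under composition together with the snake equations witnessing compact closure. Tensor preservation is $(f\otimes g)^\star = f^\star \otimes g^\star$; preservation of the tensor unit is trivial; and dagger preservation comes from $(f^\dagger)^\star = (f^\star)^\dagger$. All three identities unfold into short string-diagram manipulations using cups, caps and the snake equations in $\CategoryC$, which carry over verbatim to $\CPMCategory{\CategoryC}$.

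For faithfulness, suppose $I(f) = I(g)$ with $f,g : X \to Y$. Post-composing with $I(\psi)$ for an arbitrary state $\psi: 1 \to X$ of $\CategoryC$ gives $I(f \circ \psi) = I(g \circ \psi)$, so it is enough to prove that $I$ is injective on states; the enough states hypothesis will then upgrade the pointwise equality $f \circ \psi = g \circ \psi$ to $f = g$. On a pure state $\phi: 1 \to X$ we have $I(\phi) = \phi \otimes \phi^\star$, and compact closure of $\CategoryC$ lets us contract one of the two legs of this product against a suitable effect in order to recover $\phi$ itself from $I(\phi)$.

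The main obstacle is exactly this last extraction step. In principle, two distinct morphisms $f$ and $g$ could differ only by a scalar factor that cancels out in the product $f \otimes f^\star$; the enough states assumption, together with the compact-closed structure, is precisely what is needed to rule out such phase-like ambiguities in an abstract $\dagger$-SMC setting. The detailed diagrammatic verification is classical, and I would simply appeal to \cite{CQM-SelingerCPM} rather than reproduce the calculation by hand. Once faithfulness is in place, the operational interpretation stated in the second sentence of the theorem follows at once: the morphisms of $\CPMCategory{\CategoryC}$ not in the image of $I$ are exactly those whose Diagram~\ref{CPMmorphisms} representative carries a non-trivial cap, i.e.\ those obtained from pure maps by composition with discarding maps $\cap_X$ (equivalently, by pre-composition with totally mixed states $\cup_X$ via compact closure).
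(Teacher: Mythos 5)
Note first that the paper does not actually prove this statement: in line with the omission policy announced in the introduction, the theorem is quoted from \cite{CQM-SelingerCPM} (with the operational reading from \cite{CQM-EvironmentClassicalChannels}). So your concluding appeal to \cite{CQM-SelingerCPM} is consistent with the paper's treatment, and your explicit construction of $I$ (identity on objects, $f \mapsto f^\star \otimes f$ on morphisms) together with the routine checks of composition, tensor and dagger preservation is the standard route and unobjectionable.

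The genuine gap is in your faithfulness argument. Contracting one leg of $I(\phi)=\phi^\star\otimes\phi$ against an effect $\chi$ does not return $\phi$ itself: it returns $(\chi\circ\phi^\star)\cdot\phi$, i.e.\ $\phi$ multiplied by a scalar, so the extraction step fails precisely on the scalar ambiguity you yourself flag. Moreover, the claim that the enough-states hypothesis together with compact closure \textquotedblleft rules out such phase-like ambiguities\textquotedblright\ cannot be right as a mechanism: $\fdHilbCategory$ is dagger compact and has enough states, yet $f^\star\otimes f=g^\star\otimes g$ whenever $g=e^{i\theta}f$, so no argument from those hypotheses alone, and in particular not your reduction to injectivity on states, can deliver injectivity of $I$ on hom-sets. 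Whatever makes the faithfulness claim work must involve more than what you invoke --- e.g.\ control over invertible scalars (in $\fRelCategory$ the only non-zero scalar is $\top$, and one recovers $R$ from $R^\star\otimes R$ by reading off the diagonal entries $\bra{y}\bra{y}(R^\star\otimes R)\ket{x}\ket{x}$), or a reading of the statement up to global phase. As a smaller point, \textquotedblleft the morphisms not in the image of $I$ are exactly those whose representative carries a non-trivial cap\textquotedblright\ is too strong, since a pure map admits many representatives with non-trivial ancilla; the intended operational claim is only that every CPM map is a pure map followed by discarding.
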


	\noindent In this setting, tracing out systems is interpreted as complete erasure of information about them, so it is no surprise that information-theoretic considerations come into play. Indeed, we will be interested in a particular class of CPM maps and mixed states: we say that a CPM map $R: X \CPMrightarrow Y$ (or mixed state, if $X=1$) is \textbf{causal} if $\cap_Y \cdot R = \cap_X$. 

	Causal CPM maps are one of the reason isometries (and unitaries in particular) in pure state quantum mechanics are so interesting: a CPM map in the form of Diagram \ref{CPMmorphisms} is causal if and only if $f: X \rightarrow Z \times Y$ is an isometry.\footnote{Observe that the cap on $Z \times Y$ is the product of the caps on $Z$ and on $Y$.} 

	In order to proceed with our investigation of $\CPMCategory{\fRelCategory}$, it is time to introduce the characterisation of mixed states and CPM maps in terms of graphs. Although the material here is fruit of the author's own work, full credit for the original characterisation goes to \cite{BestRel-graphCPMrel}, which contains all the details. 

	If $\rho : 1 \CPMrightarrow X$ is a mixed state, then it is straightforward to check that, seen as a relation $\rho \subseteq X \times X$, it has the following properties:
	\begin{enumerate}
		\item[(i)] $\rho$ is a symmetric relation, i.e. if $(x,y) \in  \rho$ then $(y,x) \in  \rho$
		\item[(ii)] $\rho$ is reflexive on all elements appearing in it, i.e. if $(x,y) \in  \rho$ then $(x,x)\in  \rho$ and $(y,y) \in  \rho$
	\end{enumerate}  

	\noindent Conversely, every relation with those properties is a mixed state. Furthermore, causal mixed states are exactly those with $\rho \neq \emptyset$. As a consequence of this characterisation, we can identify mixed states in $X$ with subgraphs of the complete graph $\completeGraph{X}$ with $X$ as set of nodes. The subgraph $\RelGraph{\rho} \leq \completeGraph{X}$ corresponding to a mixed state $\rho$ in $X$ is defined to have:
	\begin{enumerate}
		\item[(i)] nodes $\graphNode{x}$ specified by the pairs $(x,x) \in  \rho$ (corresponding to a subset of $X$)
		\item[(ii)] edges $\graphEdge{x}{y}$ specified by the pairs $(x,y) \in  \rho$ with $x \neq y$
	\end{enumerate}

	\noindent Furthermore, causal states correspond to the non-empty subgraphs. The graph characterisation of CPM maps can be then obtained using compact closure. The subgraph $\RelGraph{R} \leq \completeGraph{X \times Y}$ corresponding to a CPM map $R: X \CPMrightarrow Y$ is seen to have:
	\begin{enumerate}
		\item[(i)] nodes $\graphNode{[x,y]}$ specified by those pairs $x\in X$ and $y\in Y$ such that $\bra{y} \bra{y} R \ket{x} \ket{x} = 1$, where $R$ is seen as a $\fRelCategory$ morphism $R : X \times X \rightarrow Y \times Y$
		\item[(ii)] edges $\graphEdge{[x,y]}{[x',y']}$ specified by quadruplets $x,x'\in X$ and $y,y'\in Y$ with $\bra{y} \bra{y'} R \ket{x} \ket{x'} = 1$, where $R$ is seen as an $\fRelCategory$ morphism $R : X \times X \rightarrow Y \times Y$
	\end{enumerate}

	\noindent Given a CPM morphism $R: X \CPMrightarrow Y$ and a mixed state $\rho$ in $X$, it is interesting to characterise the subgraph $\RelGraph{R \cdot \rho} \leq \completeGraph{Y}$ of the mixed state $R \cdot \rho$ in terms of the subgraphs $\RelGraph{R} \leq \completeGraph{X \times Y}$ and $\RelGraph{\rho} \leq \completeGraph{X}$ of $R$ and $\rho$:
	\begin{enumerate}
		\item[(i)] if $\graphNode{x}$ is a node of $\RelGraph{\rho}$ and $\graphNode{[x,y]}$ is a node of $\RelGraph{R}$, then $\graphNode{y}$ is a node of $\RelGraph{R \cdot \rho}$
		\item[(ii)] if $\graphEdge{x}{x'}$ is an edge of $\RelGraph{\rho}$ and $\graphEdge{[x,y]}{[x',y']}$ is an edge of $\RelGraph{R}$, then $\graphEdge{y}{y'}$ is an edge of $\RelGraph{R \cdot \rho}$
	\end{enumerate}

	\noindent A generalisation of the argument above can be used to characterise the subgraph $\RelGraph{S \cdot R} \leq \completeGraph{X \times Z}$ of the composition of two CPM maps $R: X \CPMrightarrow Y$ and $S: Y \CPMrightarrow Z$:
	\begin{enumerate}
		\item[(i)] if $\graphNode{[x,y]}$ is a node of $\RelGraph{R}$ and $\graphNode{[y,z]}$ is a node of $\RelGraph{R}$, then $\graphNode{[x,z]}$ is a node of $\RelGraph{S \cdot R}$
		\item[(ii)] if $\graphEdge{[x,y]}{[x',y']}$ is an edge of $\RelGraph{R}$ and $\graphEdge{[y,z]}{[y',z']}$ is an edge of $\RelGraph{R}$, then $\graphEdge{[x,z]}{[x',z']}$ is an edge of $\RelGraph{S \cdot R}$
	\end{enumerate}

	\noindent A few examples of CPM maps and mixed states will give us a hands-on  understanding of this graph-theoretic characterisation. 
	\begin{enumerate}
		\item[1.] If $\rho$ is a pure state in $X$, corresponding to a $\fRelCategory$ state $S \subseteq X$, then $\RelGraph{\rho} \leq \completeGraph{X}$ is the clique on $S$, because $\rho = \suchthat{(s,s')}{s,s' \in  S}$; conversely, all cliques are pure states. 
		\item[2.] The pure map $\id{X} : X \CPMrightarrow X$, seen as the morphism $\id{X \times X} : X \times X \rightarrow X \times X$ in $\fRelCategory$, satisfies $\bra{y}\bra{y'} \id{X \times X} \ket{x}\ket{x'} = 1$ if and only if $x=y$ and $x'=y'$; as a consequence, the subgraph $\RelGraph{\id{X}} \leq \completeGraph{X \times X}$ is the clique on the diagonal $\Delta_X := \suchthat{(x,x)}{x \in  X} \subset X \times X$.
		\item[3.] The discarding map $\cap_X : X \CPMrightarrow 1$ corresponds to the discrete graph with node set $X \times 1 \isom X$ and no edges. Similarly, the totally mixed state in $X$ corresponds to the discrete graph with node set $X$. 
		\item[4.] If $G$ is a subgraph of $\completeGraph{X \times Y}$, define $\pi_X \; G$ to be the projection of $G$ on $X$, and $\pi_Y \; G$ to be the projection on $Y$. The subgraph $\RelGraph{\cap_Y \cdot R} \leq \completeGraph{X \times 1}$ of the composite $\cap_Y \cdot R$ for some CPM map $R : X \CPMrightarrow Y$ has exactly the same nodes as $\pi_X \; \RelGraph{R}$, but only those edges $\graphEdge{x}{x'}$ such that $\graphEdge{[x,y]}{[x',y]}$ is an edge in $\RelGraph{R}$ for some $y\in Y$. Causal maps $R$ are then those for which $\pi_X \; \RelGraph{R}$ covers all elements of $X$, and such that no $Y$-constant edges exist.
	\end{enumerate}
	\noindent As subgraphs of the complete graph $\completeGraph{X}$, the mixed states in $X$ come with a boolean lattice structure given by the subgraph (or, equivalently, subset) relation $\subseteq$, and in particular with graph union $\vee$ (or, equivalently, subset union). We can define a notion of \textbf{purity} as the partial order $\preceq$ obtained by restricting $\subseteq$ to graphs with the same node-set: this has discrete subgraphs as its minima, and cliques (pure states) as its maxima. 

	The CPM category $\CPMCategory{\fdHilbCategory}$ does not inherit the enriched structure of $\fdHilbCategory$, but it has an operation of convex combination which preserves causality. The CPM category $\CPMCategory{\fRelCategory}$, on the other hand, turns out to be closed under the superposition operation $\vee$ from $\fRelCategory$, in the form of union of graphs and preserving causality: in $\CPMCategory{\fRelCategory}$, we shall refer to this as \textbf{convex combination}. In $\CPMCategory{\fdHilbCategory}$, non-trivial convex combination of non-pure states, or of distinct pure states, is never pure. In $\CPMCategory{\fRelCategory}$, on the other hand, convex combination of non-pure states can yield a pure state.
	\begin{lemma}\label{lemma_PurityConvexCombination}
		Let $P: 1 \CPMrightarrow X$ be a pure state in $\CPMCategory{\fRelCategory}$ corresponding to a subset of $X$ with $n$ elements. For any $m = 2,...,\frac{n(n-1)}{2}$, $P$ can be expressed as a convex combination of $m$ non-pure states $(\rho_j)_{j=1,...,m}$. Furthermore, if $\rho: 1 \CPMrightarrow X$ is any mixed state, then there exists another mixed state $\rho': 1 \CPMrightarrow X$ with $\RelGraph{\rho}$ and $\RelGraph{\rho'}$ having the same node set and $\rho \vee \rho'$ a pure state. 
	\end{lemma}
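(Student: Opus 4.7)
The plan is to argue entirely in the graph-theoretic picture just established: mixed states on $X$ are subgraphs of $\completeGraph{X}$, a state is pure iff its graph is a clique on its node-set (and non-pure iff the graph is missing at least one edge from the clique on its node-set), and convex combination $\vee$ acts as union of graphs (unioning both node-sets and edge-sets).

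For the first claim, let $P$ be a pure state corresponding to a clique on an $n$-element subset $S \subseteq X$, so $\RelGraph{P} = K_S$ has exactly $\binom{n}{2}$ edges. Enumerate those edges as $e_1, \ldots, e_{\binom{n}{2}}$, and for each $m \in \{2, \ldots, \binom{n}{2}\}$ I would define $\rho_j$ (for $j = 1, \ldots, m$) to be the subgraph of $K_S$ with full node-set $S$ and edge-set $E(K_S) \setminus \{e_j\}$. Each $\rho_j$ is automatically symmetric and reflexive on its nodes, hence a valid mixed state; moreover each $\rho_j$ has node-set $S$ but is missing the edge $e_j$ on $S$, so $\rho_j$ fails to be a clique on its own node-set and is therefore non-pure. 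For the union, the node-set is clearly $S$; and because $m \geq 2$, each edge $e_i$ lies in some $\rho_j$ with $j \neq i$, so $\bigvee_{j=1}^m \rho_j = K_S = \RelGraph{P}$, as required.

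For the second claim, given any mixed state $\rho$ with node-set $S$, I would simply take $\rho'$ to be the pure state on $S$, i.e.\ the state whose graph is the clique $K_S$. By construction $\RelGraph{\rho'}$ and $\RelGraph{\rho}$ share the node-set $S$, and $\rho \vee \rho'$ has node-set $S$ with edge-set $E(\RelGraph{\rho}) \cup E(K_S) = E(K_S)$, so $\rho \vee \rho' = K_S$ is pure.

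No real obstacle is anticipated: both parts reduce to elementary subgraph manipulations once the graph-theoretic characterisation is in hand. The only step warranting a moment's care is verifying that each $\rho_j$ in the first construction really is non-pure — but this is immediate, since each $\rho_j$ carries all $n$ nodes of $S$ yet is missing the edge $e_j$, so it cannot be the clique on $S$. The slight subtlety that the range $\{2, \ldots, \binom{n}{2}\}$ is non-empty only for $n \geq 3$ is handled automatically, since the first claim is vacuous otherwise.
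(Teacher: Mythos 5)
Your proof is correct, but it differs from the paper's in both halves. For the first claim the paper only treats the maximal case $m=\frac{n(n-1)}{2}$, taking the $\rho_j$ to be the graphs on the node set of $P$ with exactly one edge each (near-minimal states, one step above the discrete one) and asserting that the intermediate values of $m$ ``follow easily''; you instead take the $\rho_j$ to be the clique minus a single edge $e_j$, which are near-maximal states, and this choice handles every $m$ in the range $2,\dots,\binom{n}{2}$ uniformly in one stroke, so on this point your argument is actually more complete than the printed one. For the second claim the paper takes $\rho'$ to be the \emph{complement} graph of $\RelGraph{\rho}$ inside the clique on its node set, which proves the statement and moreover yields a $\rho'$ that is itself non-pure unless $\RelGraph{\rho}$ is discrete --- an extra feature the paper records explicitly, and the one that matters for the surrounding narrative (non-pure states convexly combining to a pure state). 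Your choice $\rho'=K_S$ verifies the literal statement, since nothing in it requires $\rho'$ to be non-pure, but it trivialises this half ($\rho\vee\rho'=\rho'$), and when $\rho$ is already pure it gives $\rho'=\rho$, which sits a little awkwardly with the word ``another''; the complement construction avoids both issues at no extra cost, so it is worth adopting even though your version is not wrong.
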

	\begin{proof} 
		We prove this of the case $m=\frac{n(n-1)}{2}$, and the other cases follow easily. Let $D \preceq P$ be the state with discrete graph on the same node set of $P$, and $D \prec \rho_1,...,\rho_m$ are all the states with graphs having the same node set of $P$ and exactly 1 edge (i.e. 1 edge away from being a purity minimum in the sense defined above). Then $\vee_{j=1,...,n} \; \rho_j = P$. Furthermore, if $\rho$ is any mixed state, let $G_\rho' \leq \completeGraph{X}$ be the complement subgraph to $G_\rho$, with the same nodes as $G_\rho$ and such that an edge is in $G_\rho'$ if and only if it isn't in $G_\rho$. Then $G_\rho \vee G\rho' = \completeGraph{X}$ and hence $\rho \vee \rho'$ is pure. Also, $\rho'$ is not pure unless $G_\rho$ is a discrete subgraph.
	\end{proof} 

\newcommand{\decoherence}[1]{\operatorname{dec}(#1)}
\section{Decoherence and measurements}
	
	The definition of measurements and the treatment of the ensuing classical data is a tricky subject in quantum theory. A rigorous formalisation for $\fdHilbCategory$ can be achieved by working in $\CPMCategory{\fdHilbCategory}$ and considering a certain family of CPM maps, the decoherence maps for each classical structure on a system. Given a classical structure $\hbox{\begin{tikzpicture} [scale=1,transform shape] 

\def\deltax{0.3} 
\def\deltay{0.5} 


\node [dot, fill=\classicalStructColour] (mult) at (0,0) {};

\end{tikzpicture}}\!$ on some Hilbert space $\SpaceH$, the associated decoherence map is a CPM map $\decoherence{\hbox{\begin{tikzpicture} [scale=1,transform shape] 

\def\deltax{0.3} 
\def\deltay{0.5} 


\node [dot, fill=\classicalStructColour] (mult) at (0,0) {};

\end{tikzpicture}}\!}$ with the following property: if $\rho$ is any mixed state in $\SpaceH$, then $\decoherence{\hbox{\begin{tikzpicture} [scale=1,transform shape] 

\def\deltax{0.3} 
\def\deltay{0.5} 


\node [dot, fill=\classicalStructColour] (mult) at (0,0) {};

\end{tikzpicture}}\!} \cdot \rho$ is always a convex combination $\sum_j \, p_j\ket{j}\bra{j}$ of $\hbox{\begin{tikzpicture} [scale=1,transform shape] 

\def\deltax{0.3} 
\def\deltay{0.5} 


\node [dot, fill=\classicalStructColour] (mult) at (0,0) {};

\end{tikzpicture}}\!$-classical points. 

	Because of this property, the result of $\hbox{\begin{tikzpicture} [scale=1,transform shape] 

\def\deltax{0.3} 
\def\deltay{0.5} 


\node [dot, fill=\classicalStructColour] (mult) at (0,0) {};

\end{tikzpicture}}\!$-decoherence can always be interpreted as probabilistic $\hbox{\begin{tikzpicture} [scale=1,transform shape] 

\def\deltax{0.3} 
\def\deltay{0.5} 


\node [dot, fill=\classicalStructColour] (mult) at (0,0) {};

\end{tikzpicture}}\!$-classical data. Decoherence maps can be defined in the CPM category associated with any abstract process theory in the following way.
	\begin{definition} 
		Let $\hbox{\begin{tikzpicture} [scale=1,transform shape] 

\def\deltax{0.3} 
\def\deltay{0.5} 


\node [dot, fill=\classicalStructColour] (mult) at (0,0) {};

\end{tikzpicture}}\!$ be a classical structure on some space $X$ in some compact closed $\dagger$-SMC $\CategoryC$ Then the \textbf{$\hbox{\begin{tikzpicture} [scale=1,transform shape] 

\def\deltax{0.3} 
\def\deltay{0.5} 


\node [dot, fill=\classicalStructColour] (mult) at (0,0) {};

\end{tikzpicture}}\!$-decoherence} map $\decoherence{\hbox{\begin{tikzpicture} [scale=1,transform shape] 

\def\deltax{0.3} 
\def\deltay{0.5} 


\node [dot, fill=\classicalStructColour] (mult) at (0,0) {};

\end{tikzpicture}}\!}$ is defined to be the following causal morphism of $\CPMCategory{\CategoryC}$:
		\begin{equation}\label{Zdecoherence}
		    \hbox{\begin{tikzpicture}[node distance = 10mm]

\node (center) {};

\node (map) [dot, fill = \Zcolour] [right of = center, xshift = -4mm] {};
\node (out) [above of = map, yshift = -2mm] {};
\node (in) [below of = map, yshift = +2mm] {};

\node (mapconj) [dot, fill = \Zcolour] [left of = center, xshift = +4mm] {};
\node (outconj) [above of = mapconj, yshift = -2mm] {};
\node (inconj) [below of = mapconj, yshift = +2mm] {};

\begin{pgfonlayer}{background}
\draw[->-=.5,out=90,in=270] (in) to (map);
\draw[->-=.5,out=90,in=270] (map) to (out);
\draw[-<-=.5,out=90,in=270] (inconj) to (mapconj);
\draw[-<-=.5,out=90,in=270] (mapconj) to (outconj);
\draw[->-=.53,out=135,in=45,mark=at position 0.5 with {\arrow{>}}] (map.120) to (mapconj.60);
\end{pgfonlayer}

\end{tikzpicture}}
		\end{equation}
	\end{definition}

	\noindent In $\CPMCategory{\fdHilbCategory}$, the result of a $\hbox{\begin{tikzpicture} [scale=1,transform shape] 

\def\deltax{0.3} 
\def\deltay{0.5} 


\node [dot, fill=\classicalStructColour] (mult) at (0,0) {};

\end{tikzpicture}}\!$-decoherence map is always a convex combination $\sum_j \, p_j\ket{j}\bra{j}$ of $\hbox{\begin{tikzpicture} [scale=1,transform shape] 

\def\deltax{0.3} 
\def\deltay{0.5} 


\node [dot, fill=\classicalStructColour] (mult) at (0,0) {};

\end{tikzpicture}}\!$-classical points: as long as any further operation on the result is $\hbox{\begin{tikzpicture} [scale=1,transform shape] 

\def\deltax{0.3} 
\def\deltay{0.5} 


\node [dot, fill=\classicalStructColour] (mult) at (0,0) {};

\end{tikzpicture}}\!$-classical (i.e. only pure CPM maps coming from classical endomorphisms of $\hbox{\begin{tikzpicture} [scale=1,transform shape] 

\def\deltax{0.3} 
\def\deltay{0.5} 


\node [dot, fill=\classicalStructColour] (mult) at (0,0) {};

\end{tikzpicture}}\!$ are allowed), the entire process is equivalent to the state $\sum_j \, p_j\ket{j}$ being acted upon in $\fdHilbCategory$ by the same classical endomorphisms. This leads to a so-called quantum-classical formalism, where the decoherence map from Diagram \ref{Zdecoherence} is replaced with the following map, and it is understood that any operation following it\footnote{Where categorical composition is read bottom to top.} will be $\hbox{\begin{tikzpicture} [scale=1,transform shape] 

\def\deltax{0.3} 
\def\deltay{0.5} 


\node [dot, fill=\classicalStructColour] (mult) at (0,0) {};

\end{tikzpicture}}\!$-classical:
	\begin{equation}\label{ZdecoherenceQC}
	    \hbox{\begin{tikzpicture}[node distance = 10mm]

\node (center) [dot, fill = \Zcolour] {};
\node (out) [above of = center, yshift = -2mm] {};

\node (map) [right of = center, xshift = -4mm] {};
\node (in) [below of = map, yshift = +2mm] {};

\node (mapconj) [left of = center, xshift = +4mm] {};
\node (inconj) [below of = mapconj, yshift = +2mm] {};

\begin{pgfonlayer}{background}
\draw[->-=.5,out=90,in=315] (in) to (center);
\draw[-,out=90,in=270] (center) to (out);
\draw[-<-=.5,out=90,in=225] (inconj) to (center);
\end{pgfonlayer}

\end{tikzpicture}}
	\end{equation}

	\noindent Unfortunately, in $\CPMCategory{\fRelCategory}$ things are not this simple: the $\hbox{\begin{tikzpicture} [scale=1,transform shape] 

\def\deltax{0.3} 
\def\deltay{0.5} 


\node [dot, fill=\classicalStructColour] (mult) at (0,0) {};

\end{tikzpicture}}\!$-decoherence map applied to a mixed state does not in general return a convex combination of $\hbox{\begin{tikzpicture} [scale=1,transform shape] 

\def\deltax{0.3} 
\def\deltay{0.5} 


\node [dot, fill=\classicalStructColour] (mult) at (0,0) {};

\end{tikzpicture}}\!$-classical points. The short-cut summarised by the map in Diagram \ref{ZdecoherenceQC} does not work, and we have to work entirely in the CPM category if we want to make proper sense of measurements and decoherence in $\fRelCategory$. 

	\begin{definition}
		Let $\hbox{\begin{tikzpicture} [scale=1,transform shape] 

\def\deltax{0.3} 
\def\deltay{0.5} 


\node [dot, fill=\classicalStructColour] (mult) at (0,0) {};

\end{tikzpicture}}\!$ be a classical structure in $\fRelCategory$ on some set $X$, associated with the abelian groupoid $\oplus_{\lambda\in  \Lambda}\; G_\lambda$, and let $d_\lambda \in  G_\lambda$. We define the \textbf{orbit subgraph} for $d_\lambda$ as the subgraph of $\completeGraph{X}$ with nodes $\graphNode{g_\lambda}$ for all $g_\lambda \in  G_\lambda$ and edges $\graphEdge{g_\lambda}{(g_\lambda + d_\lambda)}$ for all $g_\lambda \in  G_\lambda$. Intuitively, the orbit subgraph traces the orbit of $d_\lambda$ under the right regular action $g_\lambda \mapsto g_\lambda + d_\lambda$.
	\end{definition}

	\begin{theorem}\label{thm_CPMReldecoherence}
		Let $\hbox{\begin{tikzpicture} [scale=1,transform shape] 

\def\deltax{0.3} 
\def\deltay{0.5} 


\node [dot, fill=\classicalStructColour] (mult) at (0,0) {};

\end{tikzpicture}}\!$ be a classical structure in $\fRelCategory$ on some set $X$, associated with the abelian groupoid $\oplus_{\lambda\in  \Lambda}\; G_\lambda$. Let $\sigma = \decoherence{\hbox{\begin{tikzpicture} [scale=1,transform shape] 

\def\deltax{0.3} 
\def\deltay{0.5} 


\node [dot, fill=\classicalStructColour] (mult) at (0,0) {};

\end{tikzpicture}}\!} \cdot \rho$ be the mixed state resulting from decohering some state $\rho$. Then the subgraph $\RelGraph{\sigma}\leq \completeGraph{X}$ is the union of the orbit subgraphs for $d_\lambda = g_\lambda - g'_\lambda$, for all $d_\lambda$ such that: 
		\begin{equation}
			\exists \, g_\lambda,g'_\lambda \in G_\lambda \text{ s.t. } \graphEdge{g_\lambda}{g'_\lambda} \text{ appears in } \RelGraph{\rho}
		\end{equation}
		An example graph for the abelian groupoid $\integersMod{2} \oplus \integersMod{3}$ on a 5-element set is given in \ref{ZdecoherenceGraph} below.
	\end{theorem}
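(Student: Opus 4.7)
The plan is to unfold the decoherence map of Diagram~\ref{Zdecoherence} as a concrete relation in $\fRelCategory$, compose it with $\rho$, and read off the subgraph of the resulting mixed state. The calculation then reduces to a direct set-theoretic check against the orbit-subgraph description.

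First I would identify $\decoherence{\ZdotSym}$ as a morphism $X \times X \to X \times X$ in $\fRelCategory$. Each $\ZdotSym$-dot in the diagram is a three-legged $\dagger$-Frobenius spider; reading its bottom leg as an input and the remaining two as outputs, Theorem~\ref{thm_ClassicalStructsFRel} together with the spider theorem makes it act like $\ZcomultSym$, sending $x \in G_\lambda$ to all $(y,z) \in G_\lambda \times G_\lambda$ with $y+z = x$. Feeding the two ``middle'' legs of the two dots into the cap identifies the corresponding outputs, so that $((x_1,x_2),(y_1,y_2)) \in \decoherence{\ZdotSym}$ iff there exists $\lambda$ and $z \in G_\lambda$ with $x_1,x_2,y_1,y_2 \in G_\lambda$ and $y_1+z = x_1$, $y_2+z = x_2$; equivalently $x_1 - y_1 = x_2 - y_2$ as elements of a common group $G_\lambda$.

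Composing with $\rho$ now gives $(y_1,y_2) \in \sigma$ iff there exist $(x_1,x_2) \in \rho$ and $d_\lambda \in G_\lambda$ with $x_1,x_2 \in G_\lambda$ and $y_i = x_i - d_\lambda$ for $i=1,2$. Translating this into $\RelGraph{\sigma}$: each edge $\graphEdge{x_1}{x_2}$ of $\RelGraph{\rho}$ with $x_1,x_2 \in G_\lambda$ contributes precisely the edges $\graphEdge{x_1 - d_\lambda}{x_2 - d_\lambda}$ for $d_\lambda \in G_\lambda$, which is exactly the orbit subgraph for the shift $x_1 - x_2$; similarly, each reflexive pair $(x,x) \in \rho$ with $x \in G_\lambda$ contributes the entire node set $\{(y,y) : y \in G_\lambda\}$ to $\RelGraph{\sigma}$, matching the node set of any orbit subgraph in $G_\lambda$.

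The main bookkeeping obstacles are the signs of shifts and the degenerate case $d_\lambda = 0$. For signs, the symmetry of $\rho$ (property (i) of mixed states) ensures that whenever $(x_1,x_2) \in \rho$ yields a shift $d_\lambda = x_1 - x_2$, the opposite pair $(x_2,x_1) \in \rho$ yields $-d_\lambda$; since the orbit subgraphs for $d_\lambda$ and $-d_\lambda$ coincide as undirected subgraphs of $\completeGraph{X}$, this sign ambiguity is harmless. For the degenerate case, one must read the theorem's condition ``$\graphEdge{g_\lambda}{g'_\lambda}$ appears in $\RelGraph{\rho}$'' inclusively with respect to reflexive pairs $g_\lambda = g'_\lambda$, so that each group $G_\lambda$ meeting the support of $\rho$ contributes at least the $d_\lambda = 0$ orbit subgraph (its full node set, no edges). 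With these conventions, the explicit description of $\sigma$ matches the claimed union of orbit subgraphs termwise, and the proof is complete.
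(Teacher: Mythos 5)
Your proposal is correct and takes essentially the same route as the paper's proof: both unfold the decoherence map into the explicit relation requiring a common shift $x_1 - y_1 = x_2 - y_2$ within a single group $G_\lambda$ (the paper obtains this by evaluating matrix elements against singleton states and effects), and then compose with $\rho$ to read off the orbit subgraphs. Your bookkeeping about the sign ambiguity and the reflexive/$d_\lambda = 0$ case is consistent with the paper's own conclusion that each node $\graphNode{g_\lambda}$ of $\RelGraph{\rho}$ is sent to the full node set $G_\lambda$.
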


	\begin{proof} We see the decoherence map as a morphism $X \times X \rightarrow X \times X$ in $\fRelCategory$, and try to evaluate the composition $\bra{h'_{\tilde{\gamma}}}\bra{h_\gamma}\decoherence{\hbox{\begin{tikzpicture} [scale=1,transform shape] 

\def\deltax{0.3} 
\def\deltay{0.5} 


\node [dot, fill=\classicalStructColour] (mult) at (0,0) {};

\end{tikzpicture}}\!}\ket{g'_{\tilde{\lambda}}}\ket{g_\lambda}$. Consider the $\fRelCategory$ morphism $X \rightarrow X$ obtained by composing the state $\ket{g_\lambda}$ and effect $\bra{h'_{\tilde{\gamma}}}$ to the bottom right and top left wires of figure \ref{Zdecoherence} respectively:
		\begin{enumerate}
			\item[a.] on the RHS, we have $\ZcomultSym \cdot \ket{g_\lambda} = \vee_{h_\lambda\in  G_\lambda} \ket{g_\lambda - h_\lambda} \ket{h_\lambda}$
			\item[b.] we now have $\ket{g_\lambda - h_\lambda}$ going on the central wire and $\ket{h_\lambda}$ going up the top right wire  
			\item[c.] on the LHS, we have $\ZmultSym \cdot (\ket{g_\lambda-h_\lambda} \ket{h'_{\tilde{\gamma}}}) = \delta_{\lambda \tilde{\gamma}} \ket{g_\lambda-h_\lambda + h'_\lambda}$
			\item[d.] we now have $\delta_{\lambda \tilde{\gamma}} \ket{g_\lambda - h_\lambda + h'_\lambda}$ going down the bottom left wire
		\end{enumerate} 
		We conclude that the decoherence map, seen as a morphism in $\fRelCategory$, yields the scalar $1$, when applied to $\ket{g'_{\tilde{\lambda}}} \ket{g_\lambda}$ and evaluated against effects $\bra{h'_{\tilde{\gamma}}} \bra{h_\gamma}$, if and only if $\lambda = \tilde{\lambda} = \gamma = \tilde{\gamma}$ and $g'_\lambda = g_\lambda -h_\lambda + h'_\lambda$. Thus the decoherence map, seen as a CPM morphism, sends a node $\graphNode{g_\lambda}$ to the node set $\{h_\lambda \in  G_\lambda\}$, and an edge $\graphEdge{g_\lambda}{g'_{\tilde{\lambda}}}$ to the orbit subgraph for $d_\lambda := g_\lambda - g'_\lambda$ if $\lambda = \tilde{\lambda}$, and to no edge otherwise.
	\end{proof}

	\begin{equation}\label{ZdecoherenceGraph}
	    \hbox{\begin{tikzpicture}[node distance = 8mm]


\node (anchor) {};

\node (Xa0) [above of = anchor] {$0_{\integersMod{2}}$};
\node (Xa1) [above of = Xa0] {$1_{\integersMod{2}}$};
\node (Xb0) [above of = Xa1] {$0_{\integersMod{3}}$};
\node (Xb1) [above of = Xb0] {$1_{\integersMod{3}}$};
\node (Xb2) [above of = Xb1] {$2_{\integersMod{3}}$};

\node (Ya0) [right of = anchor] {$0_{\integersMod{2}}$};
\node (Ya1) [right of = Ya0] {$1_{\integersMod{2}}$};
\node (Yb0) [right of = Ya1] {$0_{\integersMod{3}}$};
\node (Yb1) [right of = Yb0] {$1_{\integersMod{3}}$};
\node (Yb2) [right of = Yb1] {$2_{\integersMod{3}}$};

\node(a0a0) [right of = Xa0] {$\bullet$};
\node(a0a1) [right of = a0a0] {$\bullet$};
\node(a0b0) [right of = a0a1] {$\cdot$};
\node(a0b1) [right of = a0b0, xshift = 0mm] {$\cdot$};
\node(a0b2) [right of = a0b1, xshift = 0mm] {$\cdot$};

\node(a1a0) [right of = Xa1] {$\bullet$};
\node(a1a1) [right of = a1a0] {$\bullet$};
\node(a1b0) [right of = a1a1] {$\cdot$};
\node(a1b1) [right of = a1b0, xshift = 0mm] {$\cdot$};
\node(a1b2) [right of = a1b1, xshift = 0mm] {$\cdot$};

\node(b0a0) [right of = Xb0] {$\cdot$};
\node(b0a1) [right of = b0a0] {$\cdot$};
\node(b0b0) [right of = b0a1] {$\bullet$};
\node(b0b1) [right of = b0b0, xshift = 0mm] {$\bullet$};
\node(b0b2) [right of = b0b1, xshift = 0mm] {$\bullet$};

\node(b1a0) [right of = Xb1] {$\cdot$};
\node(b1a1) [right of = b1a0] {$\cdot$};
\node(b1b0) [right of = b1a1] {$\bullet$};
\node(b1b1) [right of = b1b0, xshift = -1mm] {$\bullet$};
\node(b1b2) [right of = b1b1, xshift = +2mm] {$\bullet$};

\node(b2a0) [right of = Xb2] {$\cdot$};
\node(b2a1) [right of = b2a0] {$\cdot$};
\node(b2b0) [right of = b2a1] {$\bullet$};
\node(b2b1) [right of = b2b0, xshift = -2mm] {$\bullet$};
\node(b2b2) [right of = b2b1, xshift = +4mm] {$\bullet$};

\begin{pgfonlayer}{background}
\draw[-] (a0a0.center) to (a1a1.center);
\draw[-] (a0a1.center) to (a1a0.center);
\draw[-] (b0b0.center) to (b1b1.center);  
\draw[-] (b1b1.center) to (b2b2.center);  
\draw[-] (b0b0.center) to (b2b2.center);  
\draw[-] (b0b1.center) to (b1b2.center);  
\draw[-] (b1b2.center) to (b2b0.center);  
\draw[-] (b2b0.center) to (b0b1.center);  
\draw[-] (b1b0.center) to (b2b1.center);  
\draw[-] (b2b1.center) to (b0b2.center);  
\draw[-] (b0b2.center) to (b1b0.center);  
\end{pgfonlayer}

\node (a) [box, fill=none, right of = Xa0, xshift = 4mm, yshift = 4mm, minimum size = 12mm] {};
\node (alabel) [right of = Xa0, xshift = 13mm, yshift = 4mm] {$\integersMod{2}$};

\node (b) [box, fill=none, right of = Xb1, xshift = 25mm, yshift = 0mm, minimum size = 22mm] {};
\node (blabel) [right of = Xb1, xshift = 40mm, yshift = 0mm] {$\integersMod{3}$};

\end{tikzpicture}}
	\end{equation}

	\noindent As a consequence of Theorem \ref{thm_CPMReldecoherence}, $\sigma = \decoherence{\hbox{\begin{tikzpicture} [scale=1,transform shape] 

\def\deltax{0.3} 
\def\deltay{0.5} 


\node [dot, fill=\classicalStructColour] (mult) at (0,0) {};

\end{tikzpicture}}\!} \cdot \rho$ can be written as:
	\begin{equation}
		\sigma = \bigvee_{\lambda\in \Lambda'} \tau_\lambda  \;\;\;\text{ for some }\Lambda' \subseteq \Lambda \text{ and some } \tau_\lambda \preceq \ket{G_\lambda}\bra{G_\lambda}
	\end{equation} 
	The mixed state $\sigma$ is a convex combination of $\hbox{\begin{tikzpicture} [scale=1,transform shape] 

\def\deltax{0.3} 
\def\deltay{0.5} 


\node [dot, fill=\classicalStructColour] (mult) at (0,0) {};

\end{tikzpicture}}\!$-classical points if only if $\tau_\lambda = \ket{G_\lambda}\bra{G_\lambda}$ for all $\lambda \in  \Lambda'$, which is not in general the case: this is enough to call into question the assumption that decoherence maps provide, in general, a suitable quantum-classical interface (an assumption which, for example, is built into the conceptual backbone of the CP* construction \cite{CQM-CategoriesQuantumClassicalChannels}). 

	However, this may in principle just be an issue with the specific form \ref{ZdecoherenceQC} chosen for decoherence maps, and there could be some different morphism which works for classical structures with not enough classical points: as it turns out, there isn't one.
	\begin{lemma}
		Let $\hbox{\begin{tikzpicture} [scale=1,transform shape] 

\def\deltax{0.3} 
\def\deltay{0.5} 


\node [dot, fill=\classicalStructColour] (mult) at (0,0) {};

\end{tikzpicture}}\!$ be a classical structure in $\fRelCategory$ on some set $Z$, associated with the abelian groupoid $\oplus_{\lambda \in \Lambda} \; G_\lambda$. If there is a CPM morphism $d: Z \CPMrightarrow Z$ which satisfies the following three properties, then $\hbox{\begin{tikzpicture} [scale=1,transform shape] 

\def\deltax{0.3} 
\def\deltay{0.5} 


\node [dot, fill=\classicalStructColour] (mult) at (0,0) {};

\end{tikzpicture}}\!$ is the discrete structure and $d = \decoherence{\hbox{\begin{tikzpicture} [scale=1,transform shape] 

\def\deltax{0.3} 
\def\deltay{0.5} 


\node [dot, fill=\classicalStructColour] (mult) at (0,0) {};

\end{tikzpicture}}\!}$:
		\begin{enumerate}
			\item[(a)] $d( \ket{G_\lambda}\bra{G_\lambda}) = \ket{G_\lambda}\bra{G_\lambda}$ for all $\lambda \in \Lambda$
			\item[(b)] For all CPM states $\rho: 1 \CPMrightarrow Z$, $d(\rho) = \vee_{\lambda \in \Lambda'_\rho} \ket{G_\lambda}\bra{\lambda}$ for some subset $\Lambda'_\rho \subseteq \Lambda$
			\item[(c)] $d$ is a demolition measurement (see Definition \ref{def_Measurement} below)
		\end{enumerate}
	\end{lemma}
	\begin{proof}
		Assume that one such $d$ exists, then the following must hold:
		\begin{enumerate}
			\item[(i)] by requirement (a), $\RelGraph{d}$ is a subgraph of $\bigvee_{\lambda \in \Lambda} \RelGraph{\ket{G_\lambda}\bra{G_\lambda}}$
			\item[(ii)] in particular, by point (i) and requirement (b), if $g_\lambda \in G_\lambda$ then $d(\ket{g_\lambda}\bra{g_\lambda}) = G_\lambda$ 
			\item[(iii)] by point (ii), $\RelGraph{d}$ must have edges in the form $\graphEdge{[g_\lambda,h_\lambda]}{[g_\lambda,h'_\lambda]}$ for all $g_\lambda,h_\lambda,h'_\lambda \in G_\lambda$
			\item[(iv)] by requirement (c), $d^\dagger$ must be causal, and in particular if $\graphEdge{[g_\lambda,h_\lambda]}{[g_\lambda,h'_\lambda]}$ is an edge of $\RelGraph{d}$ then $h_\lambda = h'_\lambda$
		\end{enumerate}
		By points (iii) and (iv) above, we conclude that all $G_\lambda$ are singletons, and therefore $\hbox{\begin{tikzpicture} [scale=1,transform shape] 

\def\deltax{0.3} 
\def\deltay{0.5} 


\node [dot, fill=\classicalStructColour] (mult) at (0,0) {};

\end{tikzpicture}}\!$ is the discrete structure. By requirement (c), $d$ must be causal, and combining this with point (i) above we obtain that $d$ must be the decoherence map for the discrete structure $\hbox{\begin{tikzpicture} [scale=1,transform shape] 

\def\deltax{0.3} 
\def\deltay{0.5} 


\node [dot, fill=\classicalStructColour] (mult) at (0,0) {};

\end{tikzpicture}}\!$.  
	\end{proof}
	
	\noindent We conclude that the identification of decohered states in $\CPMCategory{\fRelCategory}$ as possibilistic mixtures of classical data is not sound, and should be avoided unless further measures are in place (e.g. a suitable equivalence relation on CPM states and morphisms). 
	
	Having understood decoherence, it is finally time to introduce measurements in $\CPMCategory{\fRelCategory}$ and tackle the fundamental question of locality. We set aside the issues with convex mixing and take a more traditional approach, using demolition measurements and evaluating against classical points (the equivalence relation we were talking about) to obtain a possibilistic empirical model. We will discuss other options for future work in the conclusions. 

	Definition \ref{def_Measurement} introduces non-demolition and demolition measurements in $\CPMCategory{\fRelCategory}$, in accordance with the framework laid out in \cite{CQM-QuantumMeasuNoSums} and the upcoming \cite{CQM-QCSnotes}. Theorem \ref{thm_MeasurementDecoherence} shows that the same results of a $\hbox{\begin{tikzpicture} [scale=1,transform shape] 

\def\deltax{0.3} 
\def\deltay{0.5} 


\node [dot, fill=\classicalStructColour] (mult) at (0,0) {};

\end{tikzpicture}}\!$-demolition measurement (with evaluation against $\hbox{\begin{tikzpicture} [scale=1,transform shape] 

\def\deltax{0.3} 
\def\deltay{0.5} 


\node [dot, fill=\classicalStructColour] (mult) at (0,0) {};

\end{tikzpicture}}\!$-classical points) can be obtained by decoherence in some other structure $\hbox{\begin{tikzpicture} [scale=1,transform shape] 

\def\deltax{0.3} 
\def\deltay{0.5} 


\node [dot, fill=\groupStructColour] (mult) at (0,0) {};

\end{tikzpicture}}\!$, followed by some classical manipulation of the results. 

	Definition \ref{def_empiricalModel} defines the desired class of empirical models for $\fRelCategory$: by virtue of Theorem \ref{thm_MeasurementDecoherence}, we only need to consider measurements given directly by decoherence maps. Theorem \ref{thm_LHV} finally shows that every measurement scenario has a local hidden variable model, settling once and for all that $\fRelCategory$ is local.

	\begin{definition}\label{def_Measurement}
		Let $\hbox{\begin{tikzpicture} [scale=1,transform shape] 

\def\deltax{0.3} 
\def\deltay{0.5} 


\node [dot, fill=\classicalStructColour] (mult) at (0,0) {};

\end{tikzpicture}}\!$ be a classical structure in $\fRelCategory$ on some set $Z$, associated with the abelian groupoid $\oplus_{\lambda\in  \Lambda}\; G_\lambda$. A $\hbox{\begin{tikzpicture} [scale=1,transform shape] 

\def\deltax{0.3} 
\def\deltay{0.5} 


\node [dot, fill=\classicalStructColour] (mult) at (0,0) {};

\end{tikzpicture}}\!$-valued \textbf{non-demolition measurement} on some set $X$ is a causal CPM map $M: X \CPMrightarrow X \times Z$ in the following form, and satisfying the  idempotence and self-adjointness properties below: 
		\begin{equation}\label{eqn_NonDemolitionMeasurement}
			\hbox{\begin{tikzpicture}[node distance = 6mm]


\node (center) {};

\node [map] (map) [right of = center] {$P$};
\node [mapconj] (mapconj) [left of = center] {$P$};

\node (in) [below of = map, yshift = -3mm] {};
\node (inconj) [below of = mapconj, yshift = -3mm] {};

\node [dot, fill=\Zcolour] (dot) [above of = map, yshift = 3mm, xshift = +5mm] {};
\node [dot, fill=\Zcolour] (dotconj) [above of = mapconj, yshift = 3mm, xshift = -5mm] {};

\node (out) [above of = map, yshift = 9mm] {};
\node (outconj) [above of = mapconj, yshift = 9mm] {};

\node (cout) [above of = dot] {};
\node (coutconj) [above of = dotconj] {};

\begin{pgfonlayer}{background}
\draw[->-=.5] (in.270) to (map);
\draw[-<-=.5] (inconj.270) to (mapconj);
\draw[->-=.5] (map) to (out.90);
\draw[-<-=.5] (mapconj) to (outconj.90);
\draw[->-=.5] [out=90,in=270] (map.45) to (dot.270);
\draw[-<-=.5] [out=90,in=270] (mapconj.135) to (dotconj.270);
\draw[->-=.5] [out=90,in=270] (dot) to (cout.90);
\draw[-<-=.5] [out=90,in=270] (dotconj) to (coutconj.90);
\draw[->-=.5] [out=135,in=45] (dot) to (dotconj);
\end{pgfonlayer}

\end{tikzpicture}}
		\end{equation}
		Let $M_{\lambda} := \left(\id{X} \times \rho_{G_\lambda}^{\dagger}\right) \cdot M$ be $M$ evaluated against the pure state $\rho_{G_\lambda} := \ket{G_\lambda}\ket{G_\lambda}$ in $\CPMCategory{\fRelCategory}$ corresponding to the classical points $G_\lambda$ of $\hbox{\begin{tikzpicture} [scale=1,transform shape] 

\def\deltax{0.3} 
\def\deltay{0.5} 


\node [dot, fill=\classicalStructColour] (mult) at (0,0) {};

\end{tikzpicture}}\!$. Then:
		\begin{enumerate}
			\item[(a.)] $M$ is \textbf{idempotent} if it satisfies the following equation:
			\begin{equation}\label{eqn_NDMeasIdempotence}
				\hbox{\begin{tikzpicture}[node distance = 6mm]


\node (center) {};

\node [map] (map) [right of = center] {$P$};
\node [mapconj] (mapconj) [left of = center] {$P$};

\node (in) [below of = map, yshift = -3mm] {};
\node (inconj) [below of = mapconj, yshift = -3mm] {};

\node [dot, fill=\Zcolour] (dot) [above of = map, yshift = 3mm, xshift = +5mm] {};
\node [dot, fill=\Zcolour] (dotconj) [above of = mapconj, yshift = 3mm, xshift = -5mm] {};

\node (cout) [above of = dot, yshift = 18mm, xshift = +7mm] {};
\node (coutconj) [above of = dotconj, yshift = 18mm, xshift = -7mm] {};

\node [map] (map2) [above of = map, yshift = 12mm] {$P$};
\node [mapconj] (mapconj2) [above of = mapconj, yshift = 12mm]{$P$};

\node (in2) [below of = map2, yshift = -3mm] {};
\node (inconj2) [below of = mapconj2, yshift = -3mm] {};

\node [dot, fill=\Zcolour] (dot2) [above of = map2, yshift = 3mm, xshift = +5mm] {};
\node [dot, fill=\Zcolour] (dotconj2) [above of = mapconj2, yshift = 3mm, xshift = -5mm] {};

\node (cout2) [above of = dot2] {};
\node (coutconj2) [above of = dotconj2] {};

\node (out) [above of = map2, yshift = 9mm] {};
\node (outconj) [above of = mapconj2, yshift = 9mm] {};

\begin{pgfonlayer}{background}
\draw[->-=.5] (in.270) to (map);
\draw[-<-=.5] (inconj.270) to (mapconj);

\draw[->-=.5] [out=90,in=270] (map.45) to (dot.270);
\draw[-<-=.5] [out=90,in=270] (mapconj.135) to (dotconj.270);
\draw[->-=.5] [out=90,in=270] (dot) to (cout.90);
\draw[-<-=.5] [out=90,in=270] (dotconj) to (coutconj.90);
\draw[->-=.5] [out=135,in=45] (dot) to (dotconj);

\draw[->-=.5] (map) to (map2);
\draw[-<-=.5] (mapconj) to (mapconj2);

\draw[->-=.5] [out=90,in=270] (map2.45) to (dot2.270);
\draw[-<-=.5] [out=90,in=270] (mapconj2.135) to (dotconj2.270);
\draw[->-=.5] [out=90,in=270] (dot2) to (cout2.90);
\draw[-<-=.5] [out=90,in=270] (dotconj2) to (coutconj2.90);
\draw[->-=.5] [out=135,in=45] (dot2) to (dotconj2);

\draw[->-=.5] (map2) to (out.90);
\draw[-<-=.5] (mapconj2) to (outconj.90);
\end{pgfonlayer}

\node (equal) [right of = center, xshift = 14mm] {$=$};

\node (center) [right of = equal, xshift = 14mm]{};

\node [map] (map) [right of = center] {$P$};
\node [mapconj] (mapconj) [left of = center] {$P$};

\node (in) [below of = map, yshift = -3mm] {};
\node (inconj) [below of = mapconj, yshift = -3mm] {};

\node [dot, fill=\Zcolour] (dot) [above of = map, yshift = 3mm, xshift = +5mm] {};
\node [dot, fill=\Zcolour] (dotconj) [above of = mapconj, yshift = 3mm, xshift = -5mm] {};

\node (out) [above of = map, yshift = 27mm] {};
\node (outconj) [above of = mapconj, yshift = 27mm] {};

\node [dot, fill=\Zcolour] (diag) [above of = dot] {};
\node [dot, fill=\Zcolour] (diagconj) [above of = dotconj] {};

\node (cout) [above of = diag, yshift = 12mm, xshift = +7mm] {};
\node (coutconj) [above of = diagconj, yshift = 12mm, xshift = -7mm] {};

\node (cout2) [above of = diag, yshift = 12mm] {};
\node (coutconj2) [above of = diagconj, yshift = 12mm] {};

\begin{pgfonlayer}{background}
\draw[->-=.5] (in.270) to (map);
\draw[-<-=.5] (inconj.270) to (mapconj);
\draw[->-=.5] (map) to (out.90);
\draw[-<-=.5] (mapconj) to (outconj.90);
\draw[->-=.5] [out=90,in=270] (map.45) to (dot.270);
\draw[-<-=.5] [out=90,in=270] (mapconj.135) to (dotconj.270);
\draw[->-=.5] [out=90,in=270] (dot) to (diag);
\draw[-<-=.5] [out=90,in=270] (dotconj) to (diagconj);
\draw[->-=.5] [out=135,in=45] (dot) to (dotconj);

\draw[->-=.5] [out=45,in=270] (diag) to (cout.90);
\draw[-<-=.5] [out=135,in=270] (diagconj) to (coutconj.90);
\draw[->-=.5] [out=90,in=270] (diag) to (cout2.90);
\draw[-<-=.5] [out=90,in=270] (diagconj) to (coutconj2.90);
\end{pgfonlayer}

\end{tikzpicture}}
			\end{equation}
			In particular, for all $\lambda \in  \Lambda$ we have $M_\lambda \cdot M_\lambda = M_\lambda$
			\item[(b.)] $M$ is \textbf{self-adjoint} if it satisfies the following equation:
			\begin{equation}\label{eqn_NDMeasSelfadjointness}
				\hbox{\begin{tikzpicture}[node distance = 6mm]


\node (center) {};

\node [map] (map) [right of = center] {$P$};
\node [mapconj] (mapconj) [left of = center] {$P$};

\node (in) [below of = map, yshift = -3mm] {};
\node (inconj) [below of = mapconj, yshift = -3mm] {};

\node [dot, fill=\Zcolour] (dot) [above of = map, yshift = 3mm, xshift = +5mm] {};
\node [dot, fill=\Zcolour] (dotconj) [above of = mapconj, yshift = 3mm, xshift = -5mm] {};

\node (out) [above of = map, yshift = 9mm] {};
\node (outconj) [above of = mapconj, yshift = 9mm] {};

\node [dot, fill=\Zcolour] (cout) [above of = dot] {};
\node [dot, fill=\Zcolour] (coutconj) [above of = dotconj] {};

\node (cin) [below of = cout, yshift = -18mm] {};
\node (cinconj) [below of = coutconj, yshift = -18mm] {};

\begin{pgfonlayer}{background}
\draw[->-=.5] (in.270) to (map);
\draw[-<-=.5] (inconj.270) to (mapconj);
\draw[->-=.5] (map) to (out.90);
\draw[-<-=.5] (mapconj) to (outconj.90);
\draw[->-=.5] [out=90,in=270] (map.45) to (dot.270);
\draw[-<-=.5] [out=90,in=270] (mapconj.135) to (dotconj.270);
\draw[->-=.5] [out=90,in=270] (dot) to (cout);
\draw[-<-=.5] [out=90,in=270] (dotconj) to (coutconj);
\draw[->-=.5] [out=135,in=45] (dot) to (dotconj);

\draw[->-=.5] [out=90,in=315] (cin.270) to (cout);
\draw[-<-=.5] [out=90,in=225] (cinconj.270) to (coutconj);
\end{pgfonlayer}

\node (equals) [right of = center, xshift = 14mm]{$=$};

\node (center) [right of = equals, xshift = 14mm]{};

\node [mapdag] (map) [right of = center, yshift = 6mm] {$P$};
\node [maptrans] (mapconj) [left of = center, yshift = 6mm] {$P$};

\node (in) [below of = map, yshift = -9mm] {};
\node (inconj) [below of = mapconj, yshift = -9mm] {};

\node [dot, fill=\Zcolour] (dot) [below of = map, yshift = -3mm, xshift = +5mm] {};
\node [dot, fill=\Zcolour] (dotconj) [below of = mapconj, yshift = -3mm, xshift = -5mm] {};

\node (out) [above of = map, yshift = 3mm] {};
\node (outconj) [above of = mapconj, yshift = 3mm] {};

\node (cin) [below of = map, xshift = 5mm, yshift = -9mm] {};
\node (cinconj) [below of = mapconj, xshift = -5mm, yshift = -9mm] {};

\begin{pgfonlayer}{background}
\draw[->-=.5] (in.270) to (map);
\draw[-<-=.5] (inconj.270) to (mapconj);
\draw[->-=.5] (map) to (out.90);
\draw[-<-=.5] (mapconj) to (outconj.90);
\draw[-<-=.5] [out=270,in=90] (map.315) to (dot);
\draw[->-=.5] [out=270,in=90] (mapconj.225) to (dotconj);

\draw[->-=.5] [out=135,in=45] (dot) to (dotconj);

\draw[->-=.5] [out=90,in=270] (cin.270) to (dot);
\draw[-<-=.5] [out=90,in=270] (cinconj.270) to (dotconj);
\end{pgfonlayer}

\end{tikzpicture}}
			\end{equation}
			In particular, for all $\lambda \in  \Lambda$ we have $M_\lambda^{\dagger} = M_\lambda$
		\end{enumerate}
		A $\hbox{\begin{tikzpicture} [scale=1,transform shape] 

\def\deltax{0.3} 
\def\deltay{0.5} 


\node [dot, fill=\classicalStructColour] (mult) at (0,0) {};

\end{tikzpicture}}\!$-valued \textbf{demolition measurement} on $X$ is a CPM map $\bar{M}: X \CPMrightarrow Z$ in the form $\bar{M} = (\cap_X\times \id{Z}) \cdot M$ for some $\hbox{\begin{tikzpicture} [scale=1,transform shape] 

\def\deltax{0.3} 
\def\deltay{0.5} 


\node [dot, fill=\classicalStructColour] (mult) at (0,0) {};

\end{tikzpicture}}\!$-valued non-demolition measurement $M$. Thus demolition measurements are exactly the CPM maps obtained by tracing out the set $X$ in a non-demolition measurement on $X$.
	\end{definition}

	\noindent In particular the decoherence maps are demolition measurements: it turns out that, in $\fRelCategory$, they are the only measurements we ever need.
	\begin{theorem}\label{thm_MeasurementDecoherence}
		Let $\hbox{\begin{tikzpicture} [scale=1,transform shape] 

\def\deltax{0.3} 
\def\deltay{0.5} 


\node [dot, fill=\classicalStructColour] (mult) at (0,0) {};

\end{tikzpicture}}\!$ be a classical structure in $\fRelCategory$ on some set $Z$, associated with the abelian groupoid $\oplus_{\lambda\in  \Lambda}\; G_\lambda$. Let $\bar{M}: X \CPMrightarrow Z$ be a $\hbox{\begin{tikzpicture} [scale=1,transform shape] 

\def\deltax{0.3} 
\def\deltay{0.5} 


\node [dot, fill=\classicalStructColour] (mult) at (0,0) {};

\end{tikzpicture}}\!$-valued demolition measurement. Let $\bar{M}_\lambda := \rho_{G_\lambda} \cdot \bar{M} : X \CPMrightarrow 1$ be $\bar{M}$ evaluated against the $\hbox{\begin{tikzpicture} [scale=1,transform shape] 

\def\deltax{0.3} 
\def\deltay{0.5} 


\node [dot, fill=\classicalStructColour] (mult) at (0,0) {};

\end{tikzpicture}}\!$-classical point $\ket{G_\lambda}$ of $\hbox{\begin{tikzpicture} [scale=1,transform shape] 

\def\deltax{0.3} 
\def\deltay{0.5} 


\node [dot, fill=\classicalStructColour] (mult) at (0,0) {};

\end{tikzpicture}}\!$. Then there exist:\\
		(i) a classical structure $\hbox{\begin{tikzpicture} [scale=1,transform shape] 

\def\deltax{0.3} 
\def\deltay{0.5} 


\node [dot, fill=\groupStructColour] (mult) at (0,0) {};

\end{tikzpicture}}\!$ on $X$ (for some abelian groupoid $\oplus_{\gamma\in  \Gamma}\; H_\gamma$)\\
		(ii) an endomorphism of the classical structures corresponding to some $f: \Gamma \stackrel{\SetCategory}{\rightarrow} \Lambda$\\
		such that the following holds, where $D_\gamma: X \CPMrightarrow 1$ is the $\hbox{\begin{tikzpicture} [scale=1,transform shape] 

\def\deltax{0.3} 
\def\deltay{0.5} 


\node [dot, fill=\groupStructColour] (mult) at (0,0) {};

\end{tikzpicture}}\!$-decoherence map evaluated against $\ket{H_\gamma}$:
		\begin{equation}
			\bar{M}_\lambda = \bigvee_{\gamma \text{ s.t. } f(\gamma) = \lambda} \; D_\gamma
		\end{equation}
	\end{theorem}
	\begin{proof}
		The measurement $M$ is a causal CPM map: thus the $P$ map in Diagram \ref{eqn_measurement} is an isometry, and hence by Lemma \ref{lemma_IsometriesUnitarisFRel} it is a classical map in the following form, with $J_{s(x)}$ classical points of some classical structure on $X \times Y$ (given by some abelian groupoid $\oplus_{\delta\in  \Delta}\; J_\delta$) and $s: X \stackrel{\SetCategory}{\rightarrow} \Delta$ a classical injection:
		\begin{equation}
			P = \bigvee_{x \in  X} \;\ket{J_{s(x)}}\bra{x}
		\end{equation}  
		
		\noindent Until further notice we work in $\fRelCategory$ (i.e. with pure maps only). The subsets $(G_\lambda)_{\lambda\in \Lambda}$ are disjoint, and so are the subsets $(J_{s(x)})_{x\in X}$: thus the family of states $\ket{D_{\lambda,x}}:= \left(\id{X} \times \bra{G_\lambda} \right) \cdot \ket{J_{s(x)}}$ is also composed of disjoint subsets of $X$. 

		Note that the CPM map $M_\lambda$ is pure, since the central wire of the $\hbox{\begin{tikzpicture} [scale=1,transform shape] 

\def\deltax{0.3} 
\def\deltay{0.5} 


\node [dot, fill=\classicalStructColour] (mult) at (0,0) {};

\end{tikzpicture}}\!$-decoherence map disappears once the decoherence is evaluated against a pure $\hbox{\begin{tikzpicture} [scale=1,transform shape] 

\def\deltax{0.3} 
\def\deltay{0.5} 


\node [dot, fill=\classicalStructColour] (mult) at (0,0) {};

\end{tikzpicture}}\!$-classical state: we can continue working in $\fRelCategory$, with $P_\lambda:= \left(\id{X} \times \bra{G_\lambda} \right) \cdot P$ in place of $M_\lambda$ (which lives in $\CPMCategory{\fRelCategory}$). Given the $D_{\lambda,x}$ defined above, the map $P_\lambda$ takes the following form:
		\begin{equation}
			P_\lambda = \bigvee_{x\in X} \; \ket{D_{\lambda,x}}\bra{x}
		\end{equation}

		\noindent Now $M$ is idempotent/self-adjoint if and only if all $M_\lambda$ are idempotent/self-adjoint, if and only if all $P_\lambda$ are idempotent/self-adjoint (because the $M_\lambda$ are pure maps). Let $R_\lambda$ be the following relation on $X$:
		\begin{equation}
			x R_\lambda y \text{ if and only if } y \in  D_{\lambda,x}
		\end{equation}
		
		\noindent Self-adjointness of $P_\lambda$ is equivalent to $R_\lambda$ being symmetric, while idempotence of $P_\lambda$ is equivalent to $R_\lambda$ being transitive; since all $x\in X$ appear in $R_\lambda$, then by symmetry and transitivity $R_\lambda$ is also reflexive. 
		Thus $R_\lambda$ is an equivalence relation on $X$ for all $\lambda$. 

		Now we go back to $\CPMCategory{\fRelCategory}$: the subgraph $\RelGraph{\bar{M}_\lambda} \leq \completeGraph{X}$ associated with the CPM map $\bar{M}_\lambda : X \CPMrightarrow 1$ has edges $\graphEdge{x}{x'}$ for all $x,x' \in X $ such that $x R_\lambda y$ and $x' R_\lambda y$ for some $y \in  X$. 
		But $R_\lambda$ is an equivalence relation, so $\RelGraph{\bar{M}_\lambda}$ is the union of the cliques on the equivalence classes of $R_\lambda$. 

		For each $\lambda \in  \Lambda$, let $(H_{(\lambda,j)})_{j=1,...,n_\lambda}$ be any family of groups, each with element sets one of the $n_\lambda$ equivalence classes of $R_\lambda$. 
		Define $\Gamma := \suchthat{(\lambda,j)}{\lambda \in  \Lambda, j=1,...,n_\lambda}$, and let $\hbox{\begin{tikzpicture} [scale=1,transform shape] 

\def\deltax{0.3} 
\def\deltay{0.5} 


\node [dot, fill=\groupStructColour] (mult) at (0,0) {};

\end{tikzpicture}}\!$ be the classical structure on $X$ associated with the abelian groupoid $\oplus_{\gamma\in  \Gamma}\; H_\gamma$. Then each state $(\bar{M}_\lambda)^\dagger$ is a convex combination of $\hbox{\begin{tikzpicture} [scale=1,transform shape] 

\def\deltax{0.3} 
\def\deltay{0.5} 


\node [dot, fill=\groupStructColour] (mult) at (0,0) {};

\end{tikzpicture}}\!$-classical points. 
		
		Thus the results of the demolition measurement $\bar{M}$ can be reproduced by the $\hbox{\begin{tikzpicture} [scale=1,transform shape] 

\def\deltax{0.3} 
\def\deltay{0.5} 


\node [dot, fill=\groupStructColour] (mult) at (0,0) {};

\end{tikzpicture}}\!$-decoherence map, i.e. by testing against $\hbox{\begin{tikzpicture} [scale=1,transform shape] 

\def\deltax{0.3} 
\def\deltay{0.5} 


\node [dot, fill=\groupStructColour] (mult) at (0,0) {};

\end{tikzpicture}}\!$-classical points and then applying the function $f = (\lambda,j) \mapsto \lambda$ to reconstruct the $\bar{M}$ measurements result in terms of $\hbox{\begin{tikzpicture} [scale=1,transform shape] 

\def\deltax{0.3} 
\def\deltay{0.5} 


\node [dot, fill=\classicalStructColour] (mult) at (0,0) {};

\end{tikzpicture}}\!$-classical points.
	\end{proof}

	\noindent We have shown that the only demolition measurements we really need are the decoherence maps. But applying a decoherence map and then testing against a classical state is the same as testing directly against the classical state, so we can give the following, simpler definition of an empirical model, where mixed states are directly tested against classical points, with no demolition measurements in between.

	\begin{definition}\label{def_empiricalModel}
		Let $\rho$ be a mixed state in $X_1 \times ... \times X_N$. For each $j=1,...,N$, let $(\hbox{\begin{tikzpicture} [scale=1,transform shape] 

\def\deltax{0.3} 
\def\deltay{0.5} 


\node [dot, fill=\groupStructColour] (mult) at (0,0) {};

\end{tikzpicture}}\!_j^m)_{m=1,...,M}$ be a family of classical structures on $X_j$. Let $(\Lambda_j^m)_{jm}$ be sets indexing the classical points of the classical structures (without loss of generality, disjoint for different classical structures). Let $\Phi^m (\lambda_1^m,...,\lambda_N^m): \Lambda_1^m \times...\times \Lambda_N^m \rightarrow \{\bot,\top\}$ be a boolean function defined by the scalar obtained evaluating $\rho$ against the separable pure state $\rho_{\lambda_1^m} \times ... \times \rho_{\lambda_N^m}$ associated with the family of classical points $\ket{G_{\lambda_j^m}}_{j=1,...,N}$, as shown in Equation \ref{eqn_EmpiricalModel}. Then $(\Phi^m)_m$ is a \textbf{possibilistic empirical model} (based on mixed state $\rho$).
		\begin{equation}\label{eqn_EmpiricalModel}
			\hbox{\begin{tikzpicture}[node distance = 10mm]

\node (center) {};

\node (label) [left of = center, xshift = -50mm,yshift = -3mm] {$\Phi^m(\lambda_1^m,...,\lambda_N^m)$};

\node (label) [left of = center,xshift = -32mm,yshift = -3mm] {$ := $};

\node (lambda1center) [above of = center, xshift = -20mm, yshift = -5mm] {};
\node (lambda1) [kpointdag, right of = lambda1center, xshift = -4mm] {$G_{\lambda_1^m}$};
\node (lambda1in) [below of = lambda1, yshift = +2mm] {};
\node (lambda1star) [kpointtrans, left of = lambda1center, xshift = +4mm] {$G_{\lambda_1^m}$};
\node (lambda1starin) [below of = lambda1star, yshift = +2mm] {};

\node (dots) [above of = center, yshift = -5mm] {$...$};

\node (lambdaNcenter) [above of = center, xshift = +20mm, yshift = -5mm] {};
\node (lambdaN) [kpointdag, right of = lambdaNcenter, xshift = -4mm] {$G_{\lambda_N^m}$};
\node (lambdaNin) [below of = lambdaN, yshift = +2mm] {};
\node (lambdaNstar) [kpointtrans, left of = lambdaNcenter, xshift = +4mm] {$G_{\lambda_N^m}$};
\node (lambdaNstarin) [below of = lambdaNstar, yshift = +2mm] {};

\node (rho) [box, minimum width = 70mm, below of = center, xshift = 0mm, yshift = +5mm] {$\rho$};

\begin{pgfonlayer}{background}

\draw[-<-=.5,out=90,in=270] (lambda1starin) to (lambda1star);
\draw[->-=.5,out=90,in=270] (lambda1in) to (lambda1);

\draw[-<-=.5,out=90,in=270] (lambdaNstarin) to (lambdaNstar);
\draw[->-=.5,out=90,in=270] (lambdaNin) to (lambdaN);

\end{pgfonlayer}

\end{tikzpicture}}
		\end{equation}
	\end{definition}

	\noindent This definition of (possibilistic) empirical model squares with that given in \cite{NLC-SheafSeminal}, in the following way:
	\begin{enumerate}
		\item[1.] The commutative semiring is that of the booleans $(\{\bot,\top\},\vee,\bot,\wedge,\top)$.
		\item[2.] The set of measurements is $\suchthat{\decoherence{\hbox{\begin{tikzpicture} [scale=1,transform shape] 

\def\deltax{0.3} 
\def\deltay{0.5} 


\node [dot, fill=\groupStructColour] (mult) at (0,0) {};

\end{tikzpicture}}\!}_j^m}{m=1,...,M \text{ and }j=1,...,N}$. As a consequence of Theorem \ref{thm_MeasurementDecoherence}, this is general enough to capture all measurements in $\CPMCategory{\fRelCategory}$.\footnote{The classical function involved in Theorem \ref{thm_MeasurementDecoherence} irrelevant when it comes to non-locality.}\footnote{To be more precise, all measurements according to the framework laid out in \cite{CQM-QuantumMeasuNoSums} and the upcoming \cite{CQM-QCSnotes}, which is restricted to \textit{commutative} special $\dagger$-Frobenius Algebras.}
		\item[3.] The measurement contexts take the form $C_m := \suchthat{\decoherence{\hbox{\begin{tikzpicture} [scale=1,transform shape] 

\def\deltax{0.3} 
\def\deltay{0.5} 


\node [dot, fill=\groupStructColour] (mult) at (0,0) {};

\end{tikzpicture}}\!}_j^m}{j=1,...,N}$, for $m = 1,...,M$.
		\item[4.] The sheaf of events is defined by $\mathcal{E}(C_m) := \prod_{j=1,...,N}\Lambda_j^m$, with measurement-dependent outcomes.
		\item[5.] The empirical model is the family $(\Phi^m:\mathcal{E}(C_m) \rightarrow \{\bot,\top\})_{m}$ of boolean-valued distributions.
	\end{enumerate}

	\noindent Under the correspondence above, the local hidden variable we construct in Theorem \ref{thm_LHV} below take the form of those defined in \cite{NLC-SheafSeminal} (further details can be found in the Appendix). In this sense, our last result should be interpreted as stating that $\fRelCategory$ is \textbf{local}: every empirical model, obtained by considering demolition measurements as per Definition \ref{def_Measurement}, admits a local hidden variable.

	\begin{theorem}\label{thm_LHV}
		Every possibilistic empirical model $(\Phi^m)_m$ in $\CPMCategory{\fRelCategory}$ constructed as in Definition \ref{def_empiricalModel} admits a local hidden variable $\nu$. This is shown in Figure \ref{eqn_LHV}, and is obtained as follows:\\
		(i) the mixed state $\rho$ underlying the empirical model is decohered in the discrete structures on $X_1,...,X_N$;
		(ii) the discrete classical data is appropriately copied to obtain a local hidden variable $\nu$.
		\begin{equation}\label{eqn_LHV}
			\hbox{\begin{tikzpicture}[node distance = 10mm]

\node (center) {};

\begin{pgfonlayer}{background}

\definecolor{light-gray}{gray}{0.80}
\definecolor{very-light-gray}{gray}{0.85}
\definecolor{super-light-gray}{gray}{0.90}	
\definecolor{super-duper-light-gray}{gray}{0.95}	

\node (nu) [box, fill = super-duper-light-gray, below of = center, yshift = +0mm, minimum width = 110mm, minimum height = 52mm] {};
\node (nuprime) [box, fill = super-light-gray, below of = center, yshift = -5mm, minimum width = 100mm, minimum height = 39mm] {};
\node (D) [box, fill = very-light-gray, below of = center, yshift = -12.7mm, minimum width = 90mm, minimum height = 20mm] {};

\end{pgfonlayer}

\node (nulabel) [above of = center, xshift = 51mm,yshift = 2mm] {$\nu'$};

\node (nulabel) [below of = center, xshift = 45mm,yshift = +10mm] {$\nu$};

\node (nulabel) [below of = center, xshift = 40mm,yshift = -7mm] {$D$};

\node (localmap)[box, minimum width = 70mm, above of = center] {local $\;\;$ map};

\node (Z11) [above of = localmap, inner sep = 0mm, xshift = -30mm, yshift = +0.5mm] {$Z_1^1$};
\node (Z11in) [inner sep = 3.2mm, below of = Z11, yshift = -0.7mm] {};

\node (Z1dots) [above of = localmap, xshift = -25mm, yshift = +0.5mm] {$...$};

\node (Z1N) [above of = localmap, inner sep = 0mm, xshift = -20mm, yshift = +0.5mm] {$Z_N^1$};
\node (Z1Nin) [inner sep = 3.2mm, below of = Z1N, yshift = -0.7mm] {};

\node (Z1Zjdots) [above of = localmap, xshift = -10mm, yshift = +0.5mm] {$...$};

\node (Zjm) [above of = localmap, inner sep = 0mm, xshift = 0mm, yshift = +0.5mm] {$Z_j^m$};
\node (Zjmin) [inner sep = 3.2mm, below of = Zjm, yshift = -0.7mm] {};

\node (ZjZNdots) [above of = localmap, xshift = +10mm, yshift = +0.5mm] {$...$};

\node (ZN1) [above of = localmap, inner sep = 0mm, xshift = +20mm, yshift = +0.5mm] {$Z_1^M$};
\node (ZN1in) [inner sep = 3.2mm, below of = ZN1, yshift = -0.7mm] {};

\node (ZNdots) [above of = localmap, xshift = +25mm, yshift = +0.5mm] {$...$};

\node (ZMN) [above of = localmap, inner sep = 0mm, xshift = +30mm, yshift = +0.5mm] {$Z_N^M$};
\node (ZMNin) [inner sep = 3.2mm, below of = ZMN, yshift = -0.7mm] {};

\node (Y11) [below of = localmap, inner sep = 1mm, xshift = -30mm, yshift = -1mm] {$Y_1^1$};
\node (Y11in) [dot, fill = \Xcolour, above of = Y11, yshift = -2mm] {};

\node (Y1dots) [below of = localmap, xshift = -25mm, yshift = 0mm] {$...$};

\node (Y1M) [below of = localmap, inner sep = 1mm, xshift = -20mm, yshift = -1mm] {$Y_1^K$};
\node (Y1Min) [dot, fill = \Xcolour, above of = Y1M, yshift = -2mm] {};

\node (Y1Zjdots) [below of = localmap, xshift = -10mm, yshift = 0mm] {$...$};

\node (Yjm) [below of = localmap, inner sep = 1mm, xshift = 0mm, yshift = -1mm] {$Y_j^k$};
\node (Yjmin) [dot, fill = \Xcolour, above of = Yjm, yshift = -2mm] {};

\node (YjYNdots) [below of = localmap, xshift = +10mm, yshift = 0mm] {$...$};

\node (YN1) [below of = localmap, inner sep = 1mm, xshift = +20mm, yshift = -1mm] {$Y_N^1$};
\node (YN1in) [dot, fill = \Xcolour, above of = YN1, yshift = -2mm] {};

\node (YNdots) [below of = localmap, xshift = +25mm, yshift = 0mm] {$...$};

\node (YNM) [below of = localmap, inner sep = 1mm, xshift = +30mm, yshift = -1mm] {$Y_N^K$};
\node (YNMin) [dot, fill = \Xcolour, above of = YNM, yshift = -2mm] {};

\node (X1multl) [dot, fill = \Zaltcolour, below of = localmap, xshift = -27mm, yshift = -10mm] {};
\node (X1multr) [dot, fill = \Zaltcolour, below of = localmap, xshift = -23mm, yshift = -10mm] {};
\node (X1dots) [below of = localmap, xshift = -25mm, yshift = -5mm] {$...$};

\node (Xjmultl) [dot, fill = \Zaltcolour, below of = localmap, xshift = -2mm, yshift = -10mm] {};
\node (Xjmultr) [dot, fill = \Zaltcolour, below of = localmap, xshift = +2mm, yshift = -10mm] {};
\node (Xjdotsl) [below of = localmap, xshift = -4mm, yshift = -5mm] {$...$};
\node (Xjdotsl) [below of = localmap, xshift = +4mm, yshift = -5mm] {$...$};

\node (XNmultl) [dot, fill = \Zaltcolour, below of = localmap, xshift = +23mm, yshift = -10mm] {};
\node (XNmultr) [dot, fill = \Zaltcolour, below of = localmap, xshift = +27mm, yshift = -10mm] {};
\node (XNdots) [below of = localmap, xshift = +25mm, yshift = -5mm] {$...$};

\node (X1decl) [dot, fill = \Zaltcolour, below of = X1multl, xshift = 0mm, yshift = +3mm] {};
\node (X1declbot) [ below of = X1decl, xshift = 0mm, yshift = +3mm] {};
\node (X1decr) [dot, fill = \Zaltcolour, below of = X1multr, xshift = 0mm, yshift = +3mm] {};
\node (X1decrbot) [ below of = X1decr, xshift = 0mm, yshift = +3mm] {};
\node (X1label) [ below of = X1decl, xshift = -3.5mm, yshift = +6mm] {$X_1$};

\node (Xjdecl) [dot, fill = \Zaltcolour, below of = Xjmultl, xshift = 0mm, yshift = +3mm] {};
\node (Xjdeclbot) [ below of = Xjdecl, xshift = 0mm, yshift = +3mm] {};
\node (Xjdecr) [dot, fill = \Zaltcolour, below of = Xjmultr, xshift = 0mm, yshift = +3mm] {};
\node (Xjdecrbot) [ below of = Xjdecr, xshift = 0mm, yshift = +3mm] {};
\node (Xjlabel) [ below of = Xjdecr, xshift = 4mm, yshift = +6mm] {$X_j$};

\node (XNdecl) [dot, fill = \Zaltcolour, below of = XNmultl, xshift = 0mm, yshift = +3mm] {};
\node (XNdeclbot) [ below of = XNdecl, xshift = 0mm, yshift = +3mm] {};
\node (XNdecr) [dot, fill = \Zaltcolour, below of = XNmultr, xshift = 0mm, yshift = +3mm] {};
\node (XNdecrbot) [ below of = XNdecr, xshift = 0mm, yshift = +3mm] {};
\node (XNlabel) [ below of = XNdecr, xshift = 4mm, yshift = +6mm] {$X_N$};

\node (X1Xjdots) [ below of = center, xshift = -12mm, yshift = -10mm] {$...$};
\node (XjXNdots) [ below of = center, xshift = +15mm, yshift = -10mm] {$...$};

\node (rho) [box, minimum width = 70mm, below of = localmap, xshift = 0mm, yshift = -26mm] {$\rho$};

\begin{pgfonlayer}{background}
\draw[-<-=.5,out=90,in=270] (Z11in.120) to (Z11.230);
\draw[->-=.5,out=90,in=270] (Z11in.60) to (Z11.310);
\draw[-<-=.5,out=90,in=270] (Z1Nin.120) to (Z1N.230);
\draw[->-=.5,out=90,in=270] (Z1Nin.60) to (Z1N.310);
\draw[-<-=.5,out=90,in=270] (Zjmin.120) to (Zjm.230);
\draw[->-=.5,out=90,in=270] (Zjmin.60) to (Zjm.310);
\draw[-<-=.5,out=90,in=270] (ZN1in.120) to (ZN1.230);
\draw[->-=.5,out=90,in=270] (ZN1in.60) to (ZN1.310);
\draw[-<-=.5,out=90,in=260] (ZMNin.120) to (ZMN.230);
\draw[->-=.5,out=90,in=270] (ZMNin.60) to (ZMN.310);

\draw[-<-=.5,out=90,in=270] (Y11.120) to (Y11in.230);
\draw[->-=.5,out=90,in=270] (Y11.60) to (Y11in.310);
\draw[-<-=.5,out=90,in=270] (Y1M.120) to (Y1Min.230);
\draw[->-=.5,out=90,in=270] (Y1M.60) to (Y1Min.310);
\draw[-<-=.5,out=90,in=270] (Yjm.120) to (Yjmin.230);
\draw[->-=.5,out=90,in=270] (Yjm.60) to (Yjmin.310);
\draw[-<-=.5,out=90,in=270] (YN1.120) to (YN1in.230);
\draw[->-=.5,out=90,in=270] (YN1.60) to (YN1in.310);
\draw[-<-=.5,out=90,in=270] (YNM.120) to (YNMin.230);
\draw[->-=.5,out=90,in=270] (YNM.60) to (YNMin.310);

\draw[-<-=.5,out=135,in=270] (X1multl) to (Y11.250);
\draw[->-=.5,out=135,in=270] (X1multr) to (Y11.290);
\draw[-<-=.5,out=45,in=270] (X1multl) to (Y1M.250);
\draw[->-=.5,out=45,in=270] (X1multr) to (Y1M.290);

\draw[-<-=.5,out=90,in=270] (Xjmultl) to (Yjm.260);
\draw[->-=.5,out=90,in=270] (Xjmultr) to (Yjm.280);

\draw[-<-=.5,out=135,in=270] (XNmultl) to (YN1.250);
\draw[->-=.5,out=135,in=270] (XNmultr) to (YN1.290);
\draw[-<-=.5,out=45,in=270] (XNmultl) to (YNM.250);
\draw[->-=.5,out=45,in=270] (XNmultr) to (YNM.290);

\draw[-<-=.5,out=90,in=270] (X1decl) to (X1multl);
\draw[->-=.5,out=90,in=270] (X1decr) to (X1multr);
\draw[->-=.5,out=110,in=70] (X1decr) to (X1decl);
\draw[-<-=.5,out=90,in=270] (X1declbot) to (X1decl);
\draw[->-=.5,out=90,in=270] (X1decrbot) to (X1decr);

\draw[-<-=.5,out=90,in=270] (Xjdecl) to (Xjmultl);
\draw[->-=.5,out=90,in=270] (Xjdecr) to (Xjmultr);
\draw[->-=.5,out=110,in=70] (Xjdecr) to (Xjdecl);
\draw[-<-=.5,out=90,in=270] (Xjdeclbot) to (Xjdecl);
\draw[->-=.5,out=90,in=270] (Xjdecrbot) to (Xjdecr);

\draw[-<-=.5,out=90,in=270] (XNdecl) to (XNmultl);
\draw[->-=.5,out=90,in=270] (XNdecr) to (XNmultr);
\draw[->-=.5,out=110,in=70] (XNdecr) to (XNdecl);
\draw[-<-=.5,out=90,in=270] (XNdeclbot) to (XNdecl);
\draw[->-=.5,out=90,in=270] (XNdecrbot) to (XNdecr);

\end{pgfonlayer}

\end{tikzpicture}}
		\end{equation}
	\end{theorem}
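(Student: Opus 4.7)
The plan is to exploit the graph-theoretic characterisation of $\CPMCategory{\fRelCategory}$ from Section \ref{section_MixQMRel} in two steps: first, show that decohering $\rho$ in the discrete structures on each $X_j$ leaves $(\Phi^m)_m$ unchanged; second, observe that the resulting purely classical state can be broadcast to each measurement context to produce a local hidden variable. Since a CPM scalar in $\CPMCategory{\fRelCategory}$ is $\top$ iff the underlying relation is non-empty, unfolding Equation \ref{eqn_EmpiricalModel} gives
\begin{equation*}
\Phi^m(\lambda_1^m,\ldots,\lambda_N^m) = \top \iff \exists \, ((x_1,\ldots,x_N),(x'_1,\ldots,x'_N)) \in \rho \text{ with } x_j,x'_j \in G_{\lambda_j^m} \text{ for every } j.
\end{equation*}

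First I would observe that every edge of $\RelGraph{\rho}$ is reflexive on both of its endpoints, so the condition above is equivalent to the existence of a single node $(x_1,\ldots,x_N)$ of $\RelGraph{\rho}$ with $x_j \in G_{\lambda_j^m}$ for all $j$. Next, Theorem \ref{thm_CPMReldecoherence} applied to the discrete groupoid $\oplus_{x\in X_j} 0_x$ on each $X_j$ shows that all its orbit subgraphs are singletons: the state $D$ obtained by decohering $\rho$ in the discrete structure on every factor thus has $\RelGraph{D}$ equal to the discrete graph on the nodes of $\RelGraph{\rho}$. Consequently $\Phi^m(\rho) = \Phi^m(D)$ for every $m$, and we may work with the classical state $D$ in place of $\rho$.

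Treating $D$ as a possibilistic distribution $\nu$ on $X_1 \times \ldots \times X_N$ (taking value $\top$ at a tuple iff that tuple is a node of $\RelGraph{D}$), for each party $j$ and each context $m$ I would define the local response function $r_j^m: X_j \rightarrow \Lambda_j^m$ sending $x_j$ to the unique $\lambda$ with $x_j \in G_{\lambda_j^m}$; this is well-defined because the $(G_{\lambda_j^m})_{\lambda_j^m}$ partition $X_j$. The local map of Figure \ref{eqn_LHV} is then assembled from the $r_j^m$'s, with the needed copies of each $x_j$ produced via the iterated discrete comultiplication on $X_j$. Party $j$'s outcome for measurement $m$ thus depends only on $x_j$, so the model is local by construction.

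Verifying that this reproduces $(\Phi^m)_m$ is bookkeeping: marginalising $\nu$ onto context $C_m$ yields $\top$ at $(\lambda_1^m,\ldots,\lambda_N^m)$ iff some node of $\RelGraph{D}$ has $r_j^m(x_j) = \lambda_j^m$ for all $j$, equivalently $x_j \in G_{\lambda_j^m}$ for all $j$; by the initial reduction this matches $\Phi^m(\rho)$. The main obstacle I anticipate is the first step, namely showing that the evaluation defining $\Phi^m$ only sees the diagonal entries of $\rho$ and not coherent edges between distinct tuples; once that diagonalisation is made manifest by the discrete decoherence, the remainder of the construction is an elementary rearrangement of classical data.
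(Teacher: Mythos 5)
Your proposal is correct and follows essentially the same route as the paper's proof: your observation that the scalar defining $\Phi^m$ only depends on the \emph{nodes} of $\RelGraph{\rho}$ is precisely the paper's sub-lemma that purity cannot be observed, the passage to the discretely-decohered state $D$ is step (i), and your deterministic response functions $r_j^m$ (automatically well-defined on the set of measurements, since they depend only on the partition induced by the chosen classical structure) are the set-theoretic rendering of the paper's diagrammatic copying/multiplexing of the classical data in Figure \ref{eqn_LHV}.
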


	\begin{proof} The proof hinges on the following, somewhat puzzling fact: the purity of a mixed state cannot be observed in $\CPMCategory{\fRelCategory}$. To be more precise, let $\rho,\tau$ be mixed states and $\sigma$ be any other mixed state such that $\tau \preceq \rho$: then $\sigma^\dagger \cdot \rho = \sigma^\dagger \cdot \tau$ (the proof of this can be found below). A picture of a \textit{purity lower-set} in a 12-element set can be found below in \ref{eqn_CPMstatesPurity}: none of the 8 states presented there can be distinguished from the discrete state.

	\begin{equation}\label{eqn_CPMstatesPurity}
		\hbox{\begin{tikzpicture}[node distance = 5mm]

\node (x0) {};
\node (x11) [right of = x0] {$\cdot$};
\node (x12) [right of = x11] {$\bullet$};
\node (x13) [right of = x12] {$\cdot$};
\node (x14) [right of = x13] {$\cdot$};
\node (x21) [above of = x11] {$\bullet$};
\node (x22) [right of = x21] {$\cdot$};
\node (x23) [right of = x22] {$\cdot$};
\node (x24) [right of = x23] {$\cdot$};
\node (x31) [above of = x21] {$\cdot$};
\node (x32) [right of = x31] {$\cdot$};
\node (x33) [right of = x32] {$\bullet$};
\node (x34) [right of = x33] {$\cdot$};

\node (text) [left of = x0, xshift = -15mm, yshift = 10mm] {Purity lower-set of};
\node (text) [left of = x0, xshift = -15mm, yshift = 5mm] {pure state $\{2,5,11\}$};
\node (text) [left of = x0, xshift = -15mm, yshift = 0mm] {in $\{1,2,...,12\}$};

\begin{pgfonlayer}{background}
\node (box) [box, fill=none, minimum width = 20mm, minimum height = 15mm, right of = x22, xshift = -2.5mm] {};
\end{pgfonlayer}

\node (text) [right of = x24, xshift = 6mm] {pure state};

\begin{pgfonlayer}{background}
\draw[-] (x12.center) to (x21.center);
\draw[-] (x12.center) to (x33.center);
\draw[-] (x21.center) to (x33.center);
\end{pgfonlayer}

\node (x0anchorl) [right of = x0] {};
\node (x0anchorc) [right of =x0, xshift = 7.5mm] {};
\node (x0anchorr) [right of = x0, xshift = 15mm] {};

\node (x0r1) [below of = x0, xshift = -30mm, yshift = -15mm]{};
\node (x11) [right of = x0r1] {$\cdot$};
\node (x12) [right of = x11] {$\bullet$};
\node (x13) [right of = x12] {$\cdot$};
\node (x14) [right of = x13] {$\cdot$};
\node (x21) [above of = x11] {$\bullet$};
\node (x22) [right of = x21] {$\cdot$};
\node (x23) [right of = x22] {$\cdot$};
\node (x24) [right of = x23] {$\cdot$};
\node (x31) [above of = x21] {$\cdot$};
\node (x32) [right of = x31] {$\cdot$};
\node (x33) [right of = x32] {$\bullet$};
\node (x34) [right of = x33] {$\cdot$};

\begin{pgfonlayer}{background}
\draw[-] (x12.center) to (x21.center);
\draw[-] (x12.center) to (x33.center);
\end{pgfonlayer}

\node (x0r1anchorlt) [right of = x0r1,yshift = 10mm] {};
\node (x0r1anchorct) [right of =x0r1, xshift = 7.5mm,yshift = 10mm] {};
\node (x0r1anchorrt) [right of = x0r1, xshift = 15mm,yshift = 10mm] {};
\begin{pgfonlayer}{background}
\draw[dotted->-=.5] (x0r1anchorrt) to (x0anchorl);
\end{pgfonlayer}

\node (x0r2) [below of = x0, xshift = 0mm, yshift = -15mm]{};
\node (x11) [right of = x0r2] {$\cdot$};
\node (x12) [right of = x11] {$\bullet$};
\node (x13) [right of = x12] {$\cdot$};
\node (x14) [right of = x13] {$\cdot$};
\node (x21) [above of = x11] {$\bullet$};
\node (x22) [right of = x21] {$\cdot$};
\node (x23) [right of = x22] {$\cdot$};
\node (x24) [right of = x23] {$\cdot$};
\node (x31) [above of = x21] {$\cdot$};
\node (x32) [right of = x31] {$\cdot$};
\node (x33) [right of = x32] {$\bullet$};
\node (x34) [right of = x33] {$\cdot$};

\begin{pgfonlayer}{background}
\draw[-] (x12.center) to (x21.center);
\draw[-] (x21.center) to (x33.center);
\end{pgfonlayer}

\node (x0r2anchorlt) [right of = x0r2,yshift = 10mm] {};
\node (x0r2anchorct) [right of =x0r2, xshift = 7.5mm,yshift = 10mm] {};
\node (x0r2anchorrt) [right of = x0r2, xshift = 15mm,yshift = 10mm] {};
\begin{pgfonlayer}{background}
\draw[dotted->-=.5] (x0r2anchorct) to (x0anchorc);
\end{pgfonlayer}

\node (x0r3) [below of = x0, xshift = +30mm, yshift = -15mm]{};
\node (x11) [right of = x0r3] {$\cdot$};
\node (x12) [right of = x11] {$\bullet$};
\node (x13) [right of = x12] {$\cdot$};
\node (x14) [right of = x13] {$\cdot$};
\node (x21) [above of = x11] {$\bullet$};
\node (x22) [right of = x21] {$\cdot$};
\node (x23) [right of = x22] {$\cdot$};
\node (x24) [right of = x23] {$\cdot$};
\node (x31) [above of = x21] {$\cdot$};
\node (x32) [right of = x31] {$\cdot$};
\node (x33) [right of = x32] {$\bullet$};
\node (x34) [right of = x33] {$\cdot$};

\begin{pgfonlayer}{background}
\draw[-] (x12.center) to (x33.center);
\draw[-] (x21.center) to (x33.center);
\end{pgfonlayer}

\node (x0r3anchorlt) [right of = x0r3,yshift = 10mm] {};
\node (x0r3anchorct) [right of =x0r3, xshift = 7.5mm,yshift = 10mm] {};
\node (x0r3anchorrt) [right of = x0r3, xshift = 15mm,yshift = 10mm] {};
\begin{pgfonlayer}{background}
\draw[dotted->-=.5] (x0r3anchorlt) to (x0anchorr);
\end{pgfonlayer}

\node (x0r1anchorl) [right of = x0r1] {};
\node (x0r1anchorc) [right of =x0r1, xshift = 7.5mm] {};
\node (x0r1anchorr) [right of = x0r1, xshift = 15mm] {};

\node (x0r2anchorl) [right of = x0r2] {};
\node (x0r2anchorc) [right of =x0r2, xshift = 7.5mm] {};
\node (x0r2anchorr) [right of = x0r2, xshift = 15mm] {};

\node (x0r3anchorl) [right of = x0r3] {};
\node (x0r3anchorc) [right of =x0r3, xshift = 7.5mm] {};
\node (x0r3anchorr) [right of = x0r3, xshift = 15mm] {};

\node (x0r1) [below of = x0, xshift = -30mm, yshift = -35mm]{};
\node (x11) [right of = x0r1] {$\cdot$};
\node (x12) [right of = x11] {$\bullet$};
\node (x13) [right of = x12] {$\cdot$};
\node (x14) [right of = x13] {$\cdot$};
\node (x21) [above of = x11] {$\bullet$};
\node (x22) [right of = x21] {$\cdot$};
\node (x23) [right of = x22] {$\cdot$};
\node (x24) [right of = x23] {$\cdot$};
\node (x31) [above of = x21] {$\cdot$};
\node (x32) [right of = x31] {$\cdot$};
\node (x33) [right of = x32] {$\bullet$};
\node (x34) [right of = x33] {$\cdot$};

\begin{pgfonlayer}{background}
\draw[-] (x12.center) to (x21.center);
\end{pgfonlayer}

\node (x0r1anchorlt) [right of = x0r1,yshift = 10mm] {};
\node (x0r1anchorct) [right of =x0r1, xshift = 7.5mm,yshift = 10mm] {};
\node (x0r1anchorrt) [right of = x0r1, xshift = 15mm,yshift = 10mm] {};
\begin{pgfonlayer}{background}
\draw[dotted->-=.5] (x0r1anchorct) to (x0r1anchorc);
\draw[dotted->-=.5] (x0r1anchorrt) to (x0r2anchorl);
\end{pgfonlayer}

\node (x0r2) [below of = x0, xshift = 0mm, yshift = -35mm]{};
\node (x11) [right of = x0r2] {$\cdot$};
\node (x12) [right of = x11] {$\bullet$};
\node (x13) [right of = x12] {$\cdot$};
\node (x14) [right of = x13] {$\cdot$};
\node (x21) [above of = x11] {$\bullet$};
\node (x22) [right of = x21] {$\cdot$};
\node (x23) [right of = x22] {$\cdot$};
\node (x24) [right of = x23] {$\cdot$};
\node (x31) [above of = x21] {$\cdot$};
\node (x32) [right of = x31] {$\cdot$};
\node (x33) [right of = x32] {$\bullet$};
\node (x34) [right of = x33] {$\cdot$};

\begin{pgfonlayer}{background}
\draw[-] (x12.center) to (x33.center);
\end{pgfonlayer}

\node (x0r2anchorlt) [right of = x0r2,yshift = 10mm] {};
\node (x0r2anchorct) [right of =x0r2, xshift = 7.5mm,yshift = 10mm] {};
\node (x0r2anchorrt) [right of = x0r2, xshift = 15mm,yshift = 10mm] {};
\begin{pgfonlayer}{background}
\draw[dotted->-=.5] (x0r2anchorlt) to (x0r1anchorc);
\draw[dotted->-=.5] (x0r2anchorrt) to (x0r3anchorc);
\end{pgfonlayer}

\begin{pgfonlayer}{background}
\node (box) [box, fill=none, minimum width = 80mm, minimum height = 15mm, right of = x22, xshift = -2.5mm] {};
\end{pgfonlayer}

\node (x0r3) [below of = x0, xshift = +30mm, yshift = -35mm]{};
\node (x11) [right of = x0r3] {$\cdot$};
\node (x12) [right of = x11] {$\bullet$};
\node (x13) [right of = x12] {$\cdot$};
\node (x14) [right of = x13] {$\cdot$};
\node (x21) [above of = x11] {$\bullet$};
\node (x22) [right of = x21] {$\cdot$};
\node (x23) [right of = x22] {$\cdot$};
\node (x24) [right of = x23] {$\cdot$};
\node (x31) [above of = x21] {$\cdot$};
\node (x32) [right of = x31] {$\cdot$};
\node (x33) [right of = x32] {$\bullet$};
\node (x34) [right of = x33] {$\cdot$};

\node (text) [below of = x13] {atoms};

\begin{pgfonlayer}{background}
\draw[-] (x21.center) to (x33.center);
\end{pgfonlayer}

\node (x0r3anchorlt) [right of = x0r3,yshift = 10mm] {};
\node (x0r3anchorct) [right of =x0r3, xshift = 7.5mm,yshift = 10mm] {};
\node (x0r3anchorrt) [right of = x0r3, xshift = 15mm,yshift = 10mm] {};
\begin{pgfonlayer}{background}
\draw[dotted->-=.5] (x0r3anchorlt) to (x0r2anchorr);
\draw[dotted->-=.5] (x0r3anchorct) to (x0r3anchorc);
\end{pgfonlayer}

\node (x0r1anchorl) [right of = x0r1] {};
\node (x0r1anchorc) [right of =x0r1, xshift = 7.5mm] {};
\node (x0r1anchorr) [right of = x0r1, xshift = 15mm] {};

\node (x0r2anchorl) [right of = x0r2] {};
\node (x0r2anchorc) [right of =x0r2, xshift = 7.5mm] {};
\node (x0r2anchorr) [right of = x0r2, xshift = 15mm] {};

\node (x0r3anchorl) [right of = x0r3] {};
\node (x0r3anchorc) [right of =x0r3, xshift = 7.5mm] {};
\node (x0r3anchorr) [right of = x0r3, xshift = 15mm] {};


\node (x0r2) [below of = x0, xshift = 0mm, yshift = -55mm]{};
\node (x11) [right of = x0r2] {$\cdot$};
\node (x12) [right of = x11] {$\bullet$};
\node (x13) [right of = x12] {$\cdot$};
\node (x14) [right of = x13] {$\cdot$};
\node (x21) [above of = x11] {$\bullet$};
\node (x22) [right of = x21] {$\cdot$};
\node (x23) [right of = x22] {$\cdot$};
\node (x24) [right of = x23] {$\cdot$};
\node (x31) [above of = x21] {$\cdot$};
\node (x32) [right of = x31] {$\cdot$};
\node (x33) [right of = x32] {$\bullet$};
\node (x34) [right of = x33] {$\cdot$};

\node (text) [left of = x21, xshift = -8.5mm] {discrete state};

\begin{pgfonlayer}{background}
\node (box) [box, fill=none, minimum width = 20mm, minimum height = 15mm, right of = x22, xshift = -2.5mm] {};
\end{pgfonlayer}

\begin{pgfonlayer}{background}
\end{pgfonlayer}

\node (x0r2anchorlt) [right of = x0r2,yshift = 10mm] {};
\node (x0r2anchorct) [right of =x0r2, xshift = 7.5mm,yshift = 10mm] {};
\node (x0r2anchorrt) [right of = x0r2, xshift = 15mm,yshift = 10mm] {};
\begin{pgfonlayer}{background}
\draw[dotted->-=.5] (x0r2anchorlt) to (x0r1anchorr);
\draw[dotted->-=.5] (x0r2anchorct) to (x0r2anchorc);
\draw[dotted->-=.5] (x0r2anchorrt) to (x0r3anchorl);
\end{pgfonlayer}

\end{tikzpicture}}
	\end{equation}

	Now consider an empirical model $(\Phi^m)_m$, based on mixed state $\rho: 1 \CPMrightarrow X_1 \times ... \times X_N$, for a measurement scenario $((\hbox{\begin{tikzpicture} [scale=1,transform shape] 

\def\deltax{0.3} 
\def\deltay{0.5} 


\node [dot, fill=\groupStructColour] (mult) at (0,0) {};

\end{tikzpicture}}\!_j^m)_{j=1,...,N})_{m=1,...,M}$. Let $\oplus_{\lambda_j^m \in \Lambda_j^m} G_{\lambda_j^m}$ be the abelian groupoid associated with $\hbox{\begin{tikzpicture} [scale=1,transform shape] 

\def\deltax{0.3} 
\def\deltay{0.5} 


\node [dot, fill=\groupStructColour] (mult) at (0,0) {};

\end{tikzpicture}}\!_j^m$. Let $D \preceq \rho$ be local purity minimum (i.e. $\RelGraph{D}$ has the same nodes of $\RelGraph{\rho}$ and no edge), and consider the empirical model $(\Psi^m)_m$, based on mixed state $D$ instead of $\rho$ and otherwise identical to $(\Phi^m)_m$ (i.e. again for the measurement scenario $((\hbox{\begin{tikzpicture} [scale=1,transform shape] 

\def\deltax{0.3} 
\def\deltay{0.5} 


\node [dot, fill=\groupStructColour] (mult) at (0,0) {};

\end{tikzpicture}}\!_j^m)_{j=1,...,N})_{m=1,...,M}$). 

	Then, since the $\Phi^m$ and $\Psi^m$ are functions are computed by considering scalars in the form $\sigma^\dagger \cdot \rho$ and $\sigma^\dagger \cdot D$ respectively (for the same $\sigma$s), we have that the two empirical models coincide. Thus any empirical model can equivalently be obtained as the empirical model based on some state with discrete associated subgraph. 

	Elaborating a bit further, we conclude that $\Psi^m(\lambda_1^m,...,\lambda_N^m) = 1$ if and only if$\graphNode{(g_1^m,....,g_N^m)} \in \RelGraph{D}$ for some family $(g_j^m \in  G_{\lambda_j^m})_{j=1,...,N}$. The state $D$ can be concretely obtained from $\rho$ by decohering \vspace{-1.5mm} each $X_j$ component in the discrete structure $\hbox{\begin{tikzpicture} [scale=1,transform shape] 

\def\deltax{0.3} 
\def\deltay{0.5} 


\node [dot, fill=\Zaltcolour] (mult) at (0,0) {};

\end{tikzpicture}}\!$ on $X_j$. Furthermore the duplication map $\ZaltcomultSym: X_j \rightarrow X_j \times X_j$ for the discrete structure sends the pure state $\ket{(g_1^m,...,g_j^m,...,g_N^m)}$ to $\ket{(g_1^m,...,g_j^m,g_j^m,...,g_N^m)}$. 

	Now consider the state \vspace{-1mm} $\nu' : 1 \CPMrightarrow \prod_{m=1,...,M} \left(\prod_{j=1,...,N} Z_j^m\right)$ shown in Equation \ref{eqn_LHV}, where $Z_j^m := X_j$. Let $\equiv$ be the equivalence relation such that $(j,m) \equiv (j',m')$ if and only if $\hbox{\begin{tikzpicture} [scale=1,transform shape] 

\def\deltax{0.3} 
\def\deltay{0.5} 


\node [dot, fill=\groupStructColour] (mult) at (0,0) {};

\end{tikzpicture}}\!_j^m$ and $\hbox{\begin{tikzpicture} [scale=1,transform shape] 

\def\deltax{0.3} 
\def\deltay{0.5} 


\node [dot, fill=\groupStructColour] (mult) at (0,0) {};

\end{tikzpicture}}\!_{j'}^{m'}$ are the same classical structure, let $K$ be the set of equivalence classes of $\equiv$ and $q: \{1,...,N\}\times\{1,...,M\} \epim K$ be the associated quotient map (denote the structures by $\hbox{\begin{tikzpicture} [scale=1,transform shape] 

\def\deltax{0.3} 
\def\deltay{0.5} 


\node [dot, fill=\groupStructColour] (mult) at (0,0) {};

\end{tikzpicture}}\!_k$). 

	Then $\nu'$ is obtained from a mixed state $\nu: 1 \CPMrightarrow \prod_{j=1,...,N}\prod_{k \in K} Y_j^k$, where $Y_j^k := X_j$, by \textit{multiplexing} each $Y_j^k$ component in the $\hbox{\begin{tikzpicture} [scale=1,transform shape] 

\def\deltax{0.3} 
\def\deltay{0.5} 


\node [dot, fill=\groupStructColour] (mult) at (0,0) {};

\end{tikzpicture}}\!_k$ structure and then connecting it to all the $Z_j^m$ component wires with $q(j,m) = k$. The state $\nu$, the \textbf{local hidden variable}, is obtained in Equation \ref{eqn_LHV} by multiplexing each $X_j$ component of $D$ in the discrete structure and then connecting it to all the $Y_j^k$ wires with $k \in K$. 

	Tracing out all components of $\nu'$ except for $(Z_j^m)_{j=1,...,N}$ yields a mixed state with discrete associated graph (since all $\hbox{\begin{tikzpicture} [scale=1,transform shape] 

\def\deltax{0.3} 
\def\deltay{0.5} 


\node [dot, fill=\groupStructColour] (mult) at (0,0) {};

\end{tikzpicture}}\!_k$ multiplexings have disappeared) and node set $\{\graphNode{(g_j)_j}\}$. Evaluated against the pure separable state $\rho_{\lambda_1^m} \times ... \times \rho_{\lambda_N^m}$, this yields $1$ if and only if $g_j \in G_{\lambda_j^m}$ for all $j=1,...,N$, i.e. if and only if $\Phi^m(\lambda_1^m,...,\lambda_N^m) = 1$. Therefore $\nu$ is a local hidden variable for the empirical model $(\Phi^m)_m$. 
	\end{proof}

	\begin{proof} (Sub-lemma of Theorem \ref{thm_LHV})
		We have to show that, if $\rho,\sigma$ are two mixed states $1 \CPMrightarrow X$, then the scalar $\sigma^\dagger \cdot \rho$ only depends on the nodes of the subgraphs $\RelGraph{\rho},\RelGraph{\sigma} \leq \completeGraph{X}$. But $\RelGraph{\sigma^\dagger \cdot \rho}$ is non-empty (equivalently, $\sigma^\dagger \cdot \rho=1$) if and only if $\RelGraph{\rho}$ and $\RelGraph{\sigma}$ have some node in common.
	\end{proof}

\section{Conclusions}

	Building on established literature and recent developments in the graph-theoretic characterisation of $\CPMCategory{\fRelCategory}$, we have provided an overview of pure state CQM in the category $\fRelCategory$ of finite sets and relations, and explored mixed state CQM in the same setting. Comparing and contrasting with the category $\fdHilbCategory$ of Hilbert spaces and linear maps, traditional arena of quantum theory, we have presented many solid parallelisms, and many puzzling features. Superposition is well-defined, but the basis of singletons plays a distinguished role; there are many classical structures, but the classical structure associated with singletons is strictly finer than all of the others. Unitarity is a very restrictive property, making pure state quantum mechanics in $\fRelCategory$ not very interesting. 

	However, the existence of many classical structures without enough classical points, and many distinct classical structures for each set of classical points, makes mixed state quantum mechanics in $\fRelCategory$ spicy and exotic. For example, convex mixing of pure (or non-pure) states can result in a pure state. But most importantly,  decohering a state does not in general result in convex mixing of classical points: the interpretation of measurement results as possibilistic mixtures is ill-founded, and  extra care is needed in the treatment of measurements and locality. 

	We gave the full characterisation of decoherence maps, and showed that they are the only demolition measurements needed if the results of measurements are, as traditionally done, tested against classical points. We showed that the degree of purity of a state, characterised in graph theoretic terms as the number of edges present, cannot be measured in our setup, and conclude that $\fRelCategory$ is a local theory. 

	This latter result, however, is based on the assumption that measurements can be tested against classical points to obtain classical data: this is a legitimate assumption in $\fdHilbCategory$, where decoherence produces convex mixing of classical points, but it may turn not be in $\fRelCategory$. Thus we cannot exclude the existence of a more thorough categorical characterisation of the measurement process which would lead to more general empirical models and re-open the question of locality. 

	Furthermore, our definition is restricted to measurements associated to \textbf{commutative} special $\dagger$-Frobenius algebras, and it is interesting to see whether the line of reasoning presented in this work will extend to the non-commutative case (which has already received attention in \cite{BestRel-RelativeFrob}). We hope that these questions will be settled in future work.

\section*{Acknowledgements}
	The author would like to thank Bob Coecke, Dan Marsden, Amar Hadzihasanovic and William Zeng for comments, suggestions and useful discussions, as well as Sukrita Chatterji and Nicol\`o Chiappori for their continued support. Funding from EPSRC and Trinity College is gratefully acknowledged. 

\newpage
\bibliographystyle{eptcs}
\bibliography{bibliography/CategoryTheory,bibliography/CategoricalQM,bibliography/NonLocalityContextuality,bibliography/QuantumComputing,bibliography/ClassicalMechanics,bibliography/LogicComputation,bibliography/Gravitation,bibliography/QFT,bibliography/StatisticalPhysics,bibliography/Misc,bibliography/StefanoGogioso,bibliography/ComputationalLinguistics,bibliography/BestiaryRel}

\begin{thebibliography}{10}
\providecommand{\bibitemdeclare}[2]{}
\providecommand{\surnamestart}{}
\providecommand{\surnameend}{}
\providecommand{\urlprefix}{Available at }
\providecommand{\url}[1]{\texttt{#1}}
\providecommand{\href}[2]{\texttt{#2}}
\providecommand{\urlalt}[2]{\href{#1}{#2}}
\providecommand{\doi}[1]{doi:\urlalt{http://dx.doi.org/#1}{#1}}
\providecommand{\bibinfo}[2]{#2}

\bibitemdeclare{article}{BestRel-RelationalNonlocality}
\bibitem{BestRel-RelationalNonlocality}
\bibinfo{author}{S.~\surnamestart Abramsky\surnameend} (\bibinfo{year}{2013}):
  \emph{\bibinfo{title}{Relational Hidden Variables and Non-Locality}}.
\newblock {\sl \bibinfo{journal}{Studia Logica}},
  \doi{10.1007/s11225-013-9477-4}.
\newblock \urlprefix\url{http://arxiv.org/abs/1007.2754}.

\bibitemdeclare{article}{NLC-SheafSeminal}
\bibitem{NLC-SheafSeminal}
\bibinfo{author}{S.~\surnamestart Abramsky\surnameend} \&
  \bibinfo{author}{A.~\surnamestart Brandenburger\surnameend}
  (\bibinfo{year}{2011}): \emph{\bibinfo{title}{The Sheaf-Theoretic Structure
  Of Non-Locality and Contextuality}}.
\newblock {\sl \bibinfo{journal}{New Journal of Physics}},
  \doi{10.1088/1367-2630/13/11/113036}.
\newblock \urlprefix\url{http://arxiv.org/abs/1102.0264v7}.

\bibitemdeclare{inbook}{CQM-seminal}
\bibitem{CQM-seminal}
\bibinfo{author}{S.~\surnamestart Abramsky\surnameend} \&
  \bibinfo{author}{B.~\surnamestart Coecke\surnameend}:
  \emph{\bibinfo{title}{Categorical Quantum Mechanics}}.
\newblock \doi{10.1016/B978-0-444-52869-8.50010-4}.
\newblock \urlprefix\url{http://arxiv.org/abs/0808.1023}.

\bibitemdeclare{article}{CQM-CategoriesQuantumClassicalChannels}
\bibitem{CQM-CategoriesQuantumClassicalChannels}
\bibinfo{author}{A.~Kissinger \surnamestart B.~Coecke\surnameend, C.~Heunen}
  (\bibinfo{year}{2013}): \emph{\bibinfo{title}{Categories of Quantum and
  Classical Channels}}.
\newblock \urlprefix\url{http://arxiv.org/abs/1305.3821}.

\bibitemdeclare{article}{CQM-ZXCalculusSeminal}
\bibitem{CQM-ZXCalculusSeminal}
\bibinfo{author}{B.~\surnamestart Coecke\surnameend} \&
  \bibinfo{author}{R.~\surnamestart Duncan\surnameend} (\bibinfo{year}{2008}):
  \emph{\bibinfo{title}{Interacting Quantum Observables}}.
\newblock {\sl \bibinfo{journal}{Lecture Notes in Computer Science}},
  \doi{10.1007/978-3-540-70583-3\_25}.
\newblock \urlprefix\url{http://arxiv.org/abs/0906.4725}.

\bibitemdeclare{article}{BestRel-ToyQuantumCategories}
\bibitem{BestRel-ToyQuantumCategories}
\bibinfo{author}{B.~\surnamestart Coecke\surnameend} \&
  \bibinfo{author}{B.~\surnamestart Edwards\surnameend} (\bibinfo{year}{2012}):
  \emph{\bibinfo{title}{Toy Quantum Categories}}.
\newblock {\sl \bibinfo{journal}{Journal of Pure and Applied Algebra}}
  \bibinfo{volume}{217}, pp. \bibinfo{pages}{114--124}.
\newblock \urlprefix\url{http://arxiv.org/abs/1112.1284}.

\bibitemdeclare{article}{CQM-PicturesCompletePositivity}
\bibitem{CQM-PicturesCompletePositivity}
\bibinfo{author}{B.~\surnamestart Coecke\surnameend} \&
  \bibinfo{author}{C.~\surnamestart Heunen\surnameend} (\bibinfo{year}{2014}):
  \emph{\bibinfo{title}{Pictures of complete positivity in arbitrary
  dimension}}.
\newblock \urlprefix\url{http://arxiv.org/abs/1110.3055v3}.

\bibitemdeclare{book}{CQM-QCSnotes}
\bibitem{CQM-QCSnotes}
\bibinfo{author}{B.~\surnamestart Coecke\surnameend} \&
  \bibinfo{author}{A.~\surnamestart Kissinger\surnameend} (\bibinfo{year}{2015
  (to appear)}): \emph{\bibinfo{title}{Picturing Quantum Processes}}.
\newblock \bibinfo{publisher}{Cambridge University Press}.

\bibitemdeclare{article}{CQM-QuantumClassicalStructuralism}
\bibitem{CQM-QuantumClassicalStructuralism}
\bibinfo{author}{B.~\surnamestart Coecke\surnameend}, \bibinfo{author}{E.~O.
  \surnamestart Paquette\surnameend} \& \bibinfo{author}{D.~\surnamestart
  Pavlovic\surnameend}: \emph{\bibinfo{title}{Classical and quantum
  structuralism}}.
\newblock \doi{10.1017/CBO9781139193313.003}.
\newblock \urlprefix\url{http://arxiv.org/abs/0904.1997}.

\bibitemdeclare{inbook}{CQM-QuantumMeasuNoSums}
\bibitem{CQM-QuantumMeasuNoSums}
\bibinfo{author}{B.~\surnamestart Coecke\surnameend} \&
  \bibinfo{author}{D.~\surnamestart Pavlovic\surnameend}:
  \emph{\bibinfo{title}{Quantum measurements without sums}}.
\newblock \doi{10.1201/9781584889007.ch16}.
\newblock \urlprefix\url{http://arxiv.org/abs/quant-ph/0608035}.

\bibitemdeclare{article}{CQM-OrthogonalBases}
\bibitem{CQM-OrthogonalBases}
\bibinfo{author}{B.~\surnamestart Coecke\surnameend},
  \bibinfo{author}{D.~\surnamestart Pavlovic\surnameend} \&
  \bibinfo{author}{J.~\surnamestart Vicary\surnameend}: \emph{\bibinfo{title}{A
  new description of orthogonal bases}}.
\newblock \doi{10.1017/S0960129512000047}.
\newblock \urlprefix\url{http://arxiv.org/abs/0810.0812v1}.

\bibitemdeclare{article}{CQM-EvironmentClassicalChannels}
\bibitem{CQM-EvironmentClassicalChannels}
\bibinfo{author}{B.~\surnamestart Coecke\surnameend} \&
  \bibinfo{author}{S.~\surnamestart Perdrix\surnameend} (\bibinfo{year}{2011}):
  \emph{\bibinfo{title}{Environment and classical channels in categorical
  quantum mechanics}}.
\newblock {\sl \bibinfo{journal}{Logical Methods in Computer Science}},
  \doi{10.2168/LMCS-8(4:14)2012}.

\bibitemdeclare{article}{BestRel-RelMutuallyUnbiased}
\bibitem{BestRel-RelMutuallyUnbiased}
\bibinfo{author}{J.~\surnamestart Evans\surnameend},
  \bibinfo{author}{R.~\surnamestart Duncan\surnameend},
  \bibinfo{author}{A.~\surnamestart Lang\surnameend} \&
  \bibinfo{author}{P.~\surnamestart Panangaden\surnameend}:
  \emph{\bibinfo{title}{Classifying all mutually unbiased bases in Rel}}.
\newblock \urlprefix\url{http://arxiv.org/abs/0909.4453}.

\bibitemdeclare{article}{StefanoGogioso-RepTheoryCQM}
\bibitem{StefanoGogioso-RepTheoryCQM}
\bibinfo{author}{S.~\surnamestart Gogioso\surnameend} \&
  \bibinfo{author}{W.~\surnamestart Zeng\surnameend} (\bibinfo{year}{2015}):
  \emph{\bibinfo{title}{Fourier transforms from strongly complementary
  observables}}.
\newblock \urlprefix\url{http://arxiv.org/abs/1501.04995}.

\bibitemdeclare{article}{StefanoGogioso-MerminNonLocality}
\bibitem{StefanoGogioso-MerminNonLocality}
\bibinfo{author}{S.~\surnamestart Gogioso\surnameend} \&
  \bibinfo{author}{W.~\surnamestart Zeng\surnameend} (\bibinfo{year}{2015}):
  \emph{\bibinfo{title}{Mermin Non-Locality in Abstract Process Theories}}.

\bibitemdeclare{article}{BestRel-RelativeFrob}
\bibitem{BestRel-RelativeFrob}
\bibinfo{author}{C.~\surnamestart Heunen\surnameend},
  \bibinfo{author}{I.~\surnamestart Contreras\surnameend} \&
  \bibinfo{author}{A.~S. \surnamestart Cattaneo\surnameend}
  (\bibinfo{year}{2012}): \emph{\bibinfo{title}{Relative Frobenius algebras are
  groupoids}}.
\newblock {\sl \bibinfo{journal}{Journal of Pure and Applied Algebra}}
  \bibinfo{volume}{217}, pp. \bibinfo{pages}{114--124},
  \doi{10.1016/j.jpaa.2012.04.002}.
\newblock \urlprefix\url{http://arxiv.org/abs/1112.1284}.

\bibitemdeclare{misc}{CQM-CQMnotes}
\bibitem{CQM-CQMnotes}
\bibinfo{author}{C.~\surnamestart Heunen\surnameend} \&
  \bibinfo{author}{J.~\surnamestart Vicary\surnameend} (\bibinfo{year}{2015 (to
  appear)}): \emph{\bibinfo{title}{Categories for Quantum Theory : An
  Introduction}}.

\bibitemdeclare{article}{BestRel-graphCPMrel}
\bibitem{BestRel-graphCPMrel}
\bibinfo{author}{D.~\surnamestart Marsden\surnameend} (\bibinfo{year}{2015}):
  \emph{\bibinfo{title}{A Graph Theoretic Perspective on CPM(Rel)}}.
\newblock \urlprefix\url{http://arxiv.org/abs/1504.07003}.

\bibitemdeclare{inbook}{BestRel-ClassicalStructuresRel}
\bibitem{BestRel-ClassicalStructuresRel}
\bibinfo{author}{Dusko \surnamestart Pavlovic\surnameend}
  (\bibinfo{year}{2009}): \emph{\bibinfo{title}{Third International symposium
  on Quantum Interaction}}, chapter \bibinfo{chapter}{Quantum and classical
  structures in nondeterministic computation}, pp. \bibinfo{pages}{143--157}.
\newblock {\sl \bibinfo{series}{Lecture Notes in Artifical Intelligence}}
  \bibinfo{volume}{5494}, \bibinfo{publisher}{Springer},
  \doi{10.1007/978-3-642-00834-4\_13}.
\newblock \urlprefix\url{http://arxiv.org/abs/0812.2266}.

\bibitemdeclare{article}{BestRel-CPMNLP}
\bibitem{BestRel-CPMNLP}
\bibinfo{author}{B.~Coecke M.~Sadrzadeh \surnamestart R.~Piedeleu\surnameend,
  D.~Kartsaklis} (\bibinfo{year}{2015}): \emph{\bibinfo{title}{Open System
  Categorical Quantum Semantics in Natural Language Processing}}.
\newblock \urlprefix\url{http://arxiv.org/abs/1502.00831}.

\bibitemdeclare{article}{CQM-SelingerCPM}
\bibitem{CQM-SelingerCPM}
\bibinfo{author}{P.~\surnamestart Selinger\surnameend} (\bibinfo{year}{2007}):
  \emph{\bibinfo{title}{Dagger Compact Closed Categories and Completely
  Positive Maps}}.
\newblock {\sl \bibinfo{journal}{Electronic Notes in Theoretical Computer
  Science}}, \doi{10.1016/j.entcs.2006.12.018}.
\newblock \urlprefix\url{http://www.mscs.dal.ca/~selinger/papers/dagger.pdf}.

\bibitemdeclare{article}{BestRel-ZengAlgoRel}
\bibitem{BestRel-ZengAlgoRel}
\bibinfo{author}{W.~\surnamestart Zeng\surnameend} (\bibinfo{year}{2015}):
  \emph{\bibinfo{title}{Models of Quantum Algorithms in Sets and Relations}}.
\newblock \urlprefix\url{http://arxiv.org/abs/1503.05857}.

\end{thebibliography}

\newpage

\section{Appendix - Local Hidden Variables}
	
	The framework for non-locality set by \cite{NLC-SheafSeminal} consists of the following basic ingredients:
	\begin{enumerate}
		\item[1.] A commutative semiring $(R,+,0,\times,1)$, generalising the semiring $(\reals^{+},+,0,\times,1)$ in the definition of probabilities (e.g. the semiring of booleans $(\{\bot,\top\},\vee,0,\wedge,1)$, yielding \textit{possibilistic} models).
		
		\item[2.] A set $\mathcal{M}$ of \textbf{measurements}, and for each measurement $\mu \in \mathcal{M}$ a set $O_\mu$ of possible outcomes.
		
		\item[3.] The \textbf{event sheaf} $\mathcal{E}: \mathcal{P}(\mathcal{M})^{op} \longrightarrow \SetCategory$, acting as $U \mapsto \prod_{\mu \in U} O_\mu$ on objects and as $\left(V \subseteq U\right) \mapsto \operatorname{res}^U_V$ on morphisms, where $\operatorname{res}^U_V$ is the restriction map sending $s \in \prod_{\mu \in U} O_\mu$ to $s_{\vert V} \in \prod_{\mu \in V} O_\mu$. The event sheaf sends a subset $U$ of measurements to their set of \textbf{joint outcomes}.
		
		\item[4.] The \textbf{presheaf of distributions} $\mathcal{D}_R\mathcal{E}$, where $\mathcal{D}_R : \SetCategory \longrightarrow \SetCategory$ is the functor defined to send a set $P$ to the set of $R$-distributions on $P$, i.e. those functions $d: P \rightarrow R$ which have finite support and satisfy the normalisation condition $\sum_{s \in P} d(s) = 1$. The presheaf of distributions then sends a set $U \subseteq \mathcal{M}$ of measurements to the set of $R$-distributions over the joint outcomes for the measurements in $U$. The action on morphisms is given by restriction:
		\begin{equation}
			d_{\vert V} := t \mapsto  \!\!\!\! \!\!\!\! \!\!\!\! \!\!\!\! \sum_{s \in \mathcal{E}(U) \text{ with } \operatorname{res}^U_V(s)=t} \!\!\!\! \!\!\!\! \!\!\!\! \!\!\!\! d(s) 
		\end{equation}
		
		\item[5.] The category $\mathcal{P}(\mathcal{M})$ comes with a Grothendieck topology\footnote{In \cite{NLC-SheafSeminal}, $\mathcal{M}$ and all $O_\mu$ are finite, and the Gr. topology on $\mathcal{P}(\mathcal{M})$ is the one corresponding to the discrete topology on $\mathcal{M}$.} making $\mathcal{E}$ into a sheaf. 
		
		\item[6.] A cover $\mathfrak{U}$ for the Grothendieck topology on $\mathcal{P}(\mathcal{M})$, the \textbf{measurement cover}:\footnote{In \cite{NLC-SheafSeminal}, $\mathfrak{U}$ is taken to be any family of subsets with $\cup \mathfrak{U} = \mathcal{M}$ and such that for any $C,C' \in \mathfrak{U}$ we have that $C \subseteq C'$ implies $C=C'$ (i.e. $\mathfrak{U}$ is an antichain).} the sets $C \in \mathfrak{U}$ are the \textbf{measurement contexts}, stipulating all possible sets of \textbf{mutually compatible measurements}.
		
		\item[7.] A compatible family $(e_C)_{C \in \mathfrak{U}}$ for the measurement cover with respect to the presheaf $\mathcal{D}_R\mathcal{E}$, i.e. a family of elements $e_C \in \mathcal{D}_R\mathcal{E}(C)$ such that $(e_C)_{\vert C \cap C'} = (e_{C'})_{\vert C \cap C'}$ for all $C,C' \in \mathfrak{U}$. This will be called a \textbf{(no-signalling) empirical model}, assigning an $R$-distribution to the joint outcomes of all sets of mutually compatible measurements.

	\end{enumerate}

	The ingredients above form a \textbf{measurement scenario}. A \textbf{global section} for an empirical model $(e_C)_{C \in \mathcal{M}}$, which we shall also refer to as a \textbf{local hidden variable}, is an $R$-distribution $d \in \mathcal{D}_R\mathcal{E}(\mathcal{M})$ such that $e_C = d_{\vert C}$ for all $C \in \mathfrak{U}$. An empirical model is \textbf{local} if it admits a global section, i.e. if there is a $R$-distribution $d$ on the joint outcomes of all measurements which, when restricted to the measurement context specified by $\mathfrak{U}$, yields the same $R$-distributions as the empirical model; an empirical model is \textbf{non-local} if no such $R$-distribution $d$ exists.\\

	\noindent This framework can be given the following (approximate) semantics in the context of CQM:

	\begin{enumerate}
		\item[1.] We consider a compact-closed\footnote{Strictly not necessary, as the CPM construction, the only reason the requirement of compact-closedness is there, can be replaced with the CP construction from \cite{CQM-PicturesCompletePositivity}.} $\dagger$-SMC $\CategoryC$ enriched over monoids, and let $(R,+,0,\tensor,1)$ be its semiring of scalars.

		\item[2.] We consider measurements to be certain causal morphisms $\mu : X_\mu \CPMrightarrow X_\mu \tensor Z_\mu$ in $\CPMCategory{\CategoryC}$, each satisfying idempotence and self-adjointness with respect to some classical structure $\hbox{\begin{tikzpicture} [scale=1,transform shape] 

\def\deltax{0.3} 
\def\deltay{0.5} 


\node [dot, fill=\classicalStructColour] (mult) at (0,0) {};

\end{tikzpicture}}\!_\mu$ on $Z_\mu$ (see \cite{CQM-QuantumMeasuNoSums} and the upcoming \cite{CQM-QCSnotes} for more details). We take some finite set $\mathcal{M}$ of them. For each measurement, we take the classical points $(G^\mu_\lambda)_{\lambda \in \Lambda_\mu}$ of the associated classical structure $\hbox{\begin{tikzpicture} [scale=1,transform shape] 

\def\deltax{0.3} 
\def\deltay{0.5} 


\node [dot, fill=\classicalStructColour] (mult) at (0,0) {};

\end{tikzpicture}}\!_\mu$ as the possible outcomes, assuming that each classical structure involved has a finite set of classical points.

		\item[3.] The event sheaf is defined as sending a set $U \subseteq \mathcal{M}$ of measurements to the following set:
		\begin{equation}
			\mathcal{E}(U) := \suchthat{\bigtensor_{\mu \in U} G^\mu_{\lambda_\mu} }{(\lambda_\mu)_\mu \in \prod_{\mu \in U} \Lambda_\mu}
		\end{equation}

		\item[4.] The presheaf of distributions is defined as sending a set $U \subseteq \mathcal{M}$ of measurements to the set of all possible states $d: \tensorUnit \CPMrightarrow \tensor_{\mu \in U} Z_\mu$ satisfying the following normalisation condition:
		\begin{equation}\label{eqn_NormalisationConditionDistributionsSMC}
			\left(\bigtensor_{\mu \in U} \cap_{Z_\mu} \right) \cdot d = 1
		\end{equation}
		 i.e. such that $d$ is causal. We see one such state $d$ as a $R$-distribution on $\mathcal{E}(U)$ by defining:
		\begin{equation}
			d(\bigtensor_{\mu \in U} G^\mu_{\lambda_\mu}) := \left(\bigtensor_{\mu \in U} \rho^\dagger_{G^\mu_{\lambda_\mu}}\right) \cdot d \in R
		\end{equation}
		where $\rho_{G^\mu_{\lambda_\mu}}$ is the pure state in $\CPMCategory{\CategoryC}$ corresponding to pure state $\ket{G^\mu_{\lambda_\mu}}$ from $\CategoryC$. If for each $\mu \in \mathcal{M}$ we have that $\id{} = \sum_{\lambda \in \Lambda_\mu} \ket{G^\mu_\lambda}\bra{G^\mu_\lambda}$, then the normalisation condition of Equation \ref{eqn_NormalisationConditionDistributionsSMC} can equivalently be written as the more familiar:
		\begin{equation}
			\sum_{s \in \mathcal{E}(U)} d(s) = 1
		\end{equation}
		The restriction $d_{\vert V}$ of a distribution $d: \tensorUnit \CPMrightarrow \tensor_{\mu \in U} Z_\mu$ for $V \subseteq U$ is given by:
		\begin{equation}\label{eqn_RestrictionOperationDistributionsSMC}
			d_{\vert V} := \left( \bigtensor_{\mu \in U} \xi_\mu \right) \cdot d \text{, where } \xi_\mu = 
				\begin{cases}
					\id{Z_\mu} \text{ if } \mu \in V\\
					\cap_{Z_\mu} \text{ if } \mu \notin V  
				\end{cases}
		\end{equation}

		\item[5.] The topology on $\mathcal{P}(\mathcal{M})$ is as before.

		\item[6.] A measurement cover $\mathfrak{U}$ is chosen as before.

		\item[7.] A compatible family is chosen to be a family of states $e_C: \tensorUnit \CPMrightarrow \tensor_{\mu \in C} Z_\mu$ for all $C \in \mathcal{M}$ satisfying the normalisation condition of Equation \ref{eqn_NormalisationConditionDistributionsSMC} and such that $(e_C)_{\vert C \cap C'} = (e_{C'})_{\vert C \cap C'}$ for all $C,C' \in \mathfrak{U}$, under the restriction operation defined in Equation \ref{eqn_RestrictionOperationDistributionsSMC}.
	\end{enumerate}

	\noindent We focus our attention to a particular kind of empirical models, based on measurements of some mixed state $\rho$. Let $\mathfrak{U}$ be a measurement cover, and name its distinct measurement contexts $C_1,...,C_M$. Let each measurement context $C_m$ consist of $N$ measurements in the form:
	\begin{equation}	
		\mu^m_j : X_j \CPMrightarrow X_j \tensor Z^m_j \text{, for } j=1,...,N
	\end{equation}
	and let $\hbox{\begin{tikzpicture} [scale=1,transform shape] 

\def\deltax{0.3} 
\def\deltay{0.5} 


\node [dot, fill=\classicalStructColour] (mult) at (0,0) {};

\end{tikzpicture}}\!_j^m$ be the respective classical structures. Fix some causal state:
	\begin{equation}
		\rho: \tensorUnit \CPMrightarrow \bigtensor_{i=1,...,N} X_i
	\end{equation}
	Then an empirical model $e_m : \tensorUnit \CPMrightarrow \tensor_{j=1,...,N} Z^m_j$ can be defined as follows, for $m=1,...,M$:
	\begin{equation}
		e_{C_m} := \left( \bigtensor_{j=1}^N \left( \cap_{X_j} \tensor \id{Z_j^m} \right)\right) \cdot \rho 
	\end{equation}

	\noindent Local hidden variables are defined as distributions on the \textit{set} of measurements, rather than on the family $\mu^m_j$ indexed by the measurement contexts. We define an equivalence relation by setting $(j,m) \equiv (j',m')$ if and only if $\mu^m_j = \mu^{m'}_{j'}$: we index the equivalence classes by $\kappa =1,...,K$ and let $q : \{1,...,N\}\times\{1,...,M\} \epim \{1,...,K\}$ be the quotient map. Then $\mu^m_j = \mu^{m'}_{j'}$ if and only if $q(j,m) = q(j',m')$: we define $Y_j^\kappa := Z_j^m$ and $\hbox{\begin{tikzpicture} [scale=1,transform shape] 

\def\deltax{0.3} 
\def\deltay{0.5} 


\node [dot, fill=\classicalStructColour] (mult) at (0,0) {};

\end{tikzpicture}}\!_\kappa := \hbox{\begin{tikzpicture} [scale=1,transform shape] 

\def\deltax{0.3} 
\def\deltay{0.5} 


\node [dot, fill=\classicalStructColour] (mult) at (0,0) {};

\end{tikzpicture}}\!_j^m$ for some (any) $m$ such that $q(j,m) = \kappa$, where $\kappa = 1,...,K$. The \textbf{local map} for the measurement scenarion is then defined to be the map $\left(\tensor_i \tensor_\kappa Y_i^\kappa\right) \CPMrightarrow \left(\tensor_m \tensor_j Z_j^m\right)$ given as follows (see figure \ref{eqn_LocalMap} for a graphical definition):
	\begin{enumerate}
		\item[a.] we group the input wires in $N$ groups of $K$ wires
		\item[b.] we group the output wires in $M$ groups of $N$ wires
		\item[c.] each $Y_i^\kappa$ input wire is connected to a $\hbox{\begin{tikzpicture} [scale=1,transform shape] 

\def\deltax{0.3} 
\def\deltay{0.5} 


\node [dot, fill=\classicalStructColour] (mult) at (0,0) {};

\end{tikzpicture}}\!_\kappa$ node
		\item[d.] for all $i,j$ and $\kappa,m$, the $\hbox{\begin{tikzpicture} [scale=1,transform shape] 

\def\deltax{0.3} 
\def\deltay{0.5} 


\node [dot, fill=\classicalStructColour] (mult) at (0,0) {};

\end{tikzpicture}}\!_\kappa$ node of each $Y_i^\kappa$ input wire is connected to the $Z_j^m$ output wire if and only if $i=j$ and $q(j,m) = \kappa$
	\end{enumerate}

	\begin{equation}\label{eqn_LocalMap}
		\hbox{\begin{tikzpicture}[node distance = 10mm]



\node (a1sys1dot) [smalldot, fill = \Zcolour] {};
\node (a1sys1dotl) [above of = a1sys1dot, xshift = -4mm,yshift = -3mm] {};
\node (a1sys1dotr) [above of = a1sys1dot, xshift = +4mm,yshift = -3mm] {};
\node (a1sys1dotellipsis) [above of = a1sys1dot, yshift = -2mm,yshift = -3mm] {$...$};

\node (a1sys1) [below of = a1sys1dot,yshift = +3mm] {$Y_1^1$};

\node (sys1dotellipsis) [right of = a1sys1dot,xshift = -3mm] {$\cdot \cdot \cdot$};
\node (sys1label) [below of = sys1dotellipsis, yshift = -2mm] {System $1$};

\node (aMsys1dot) [smalldot, right of = sys1dotellipsis,xshift = -3mm, fill = \Zcolour] {};
\node (aMsys1dotl) [above of = aMsys1dot, xshift = -4mm,yshift = -3mm] {};
\node (aMsys1dotr) [above of = aMsys1dot, xshift = +4mm,yshift = -3mm] {};
\node (aMsys1dotellipsis) [above of = aMsys1dot, yshift = -2mm,yshift = -3mm] {$...$};

\node (aMsys1) [below of = aMsys1dot,yshift = +3mm] {$Y_1^K$};

\node (sys1sysiellipsis) [right of = aMsys1dot, yshift = -3mm, xshift = -1mm] {};


\node (a1sysidot) [smalldot, fill = \Zcolour, right of = sys1sysiellipsis, yshift = +3mm, xshift = -1mm] {};
\node (a1sysidotl) [above of = a1sysidot, xshift = -4mm,yshift = -3mm] {};
\node (a1sysidotr) [above of = a1sysidot, xshift = +4mm,yshift = -3mm] {};
\node (a1sysidotellipsis) [above of = a1sysidot, yshift = -2mm,yshift = -3mm] {$...$};

\node (a1sysi) [below of = a1sysidot,yshift = +3mm] {$Y_i^1$};

\node (sysidotellipsis) [right of = a1sysidot,xshift = -3mm] {$\cdot \cdot \cdot$};

\node (arsysidot) [smalldot, fill = \Zcolour, right of = sysidotellipsis,xshift = -3mm] {};
\node (arsysidotl) [above of = arsysidot, xshift = -4mm,yshift = -3mm] {};
\node (arsysidotc) [above of = arsysidot, xshift = +4mm,yshift = -3mm] {};
\node (arsysidotr) [above of = arsysidot, xshift = +8mm,yshift = -3mm] {};
\node (arsysidotellipsis) [above of = arsysidot, yshift = -2mm,yshift = -3mm] {$...$};
\node (arsysidotellipsis) [above of = arsysidot, xshift = +6mm,yshift = -2mm,yshift = -3mm] {$...$};

\node (arsysi) [below of = arsysidot,yshift = +3mm] {$Y_i^\kappa$};

\node (sysilabel) [below of = arsysidot, yshift = -2mm] {System $i$};
\node (sysidotellipsis2) [right of = arsysidot,xshift = -3mm] {$\cdot \cdot \cdot$};

\node (aMsysidot) [smalldot, fill = \Zcolour, right of = sysidotellipsis2,xshift = -3mm] {};
\node (aMsysidotl) [above of = aMsysidot, xshift = -4mm,yshift = -3mm] {};
\node (aMsysidotr) [above of = aMsysidot, xshift = +4mm,yshift = -3mm] {};
\node (aMsysidotellipsis) [above of = aMsysidot, yshift = -2mm,yshift = -3mm] {$...$};

\node (aMsysi) [below of = aMsysidot,yshift = +3mm] {$Y_i^K$};

\node (sysisysNellipsis) [right of = aMsysidot, yshift = -3mm, xshift = -1mm] {};


\node (a1sysNdot) [smalldot, fill = \Zcolour, right of = sysisysNellipsis, yshift = +3mm, xshift = -1mm] {};
\node (a1sysNdotl) [above of = a1sysNdot, xshift = -4mm,yshift = -3mm] {};
\node (a1sysNdotr) [above of = a1sysNdot, xshift = +4mm,yshift = -3mm] {};
\node (a1sysNdotellipsis) [above of = a1sysNdot, yshift = -2mm,yshift = -3mm] {$...$};

\node (a1sysN) [below of = a1sysNdot,yshift = +3mm] {$Y_N^1$};

\node (sysNdotellipsis) [right of = a1sysNdot,xshift = -3mm] {$\cdot \cdot \cdot$};
\node (sysNlabel) [below of = sysNdotellipsis, yshift = -2mm] {System $N$};

\node (aMsysNdot) [smalldot, fill = \Zcolour, right of = sysNdotellipsis,xshift = -3mm] {};
\node (aMsysNdotl) [above of = aMsysNdot, xshift = -4mm,yshift = -3mm] {};
\node (aMsysNdotr) [above of = aMsysNdot, xshift = +4mm,yshift = -3mm] {};
\node (aMsysNdotellipsis) [above of = aMsysNdot, yshift = -2mm,yshift = -3mm] {$...$};

\node (aMsysN) [below of = aMsysNdot,yshift = +3mm] {$Y_N^K$};

\begin{pgfonlayer}{background}
\draw[double,out=90,in=270] (a1sys1) to (a1sys1dot);
\draw[double,out=90,in=270] (a1sys1dot) to (a1sys1dotl);
\draw[double,out=90,in=270] (a1sys1dot) to (a1sys1dotr);

\draw[double,out=90,in=270] (aMsys1) to (aMsys1dot);
\draw[double,out=90,in=270] (aMsys1dot) to (aMsys1dotl);
\draw[double,out=90,in=270] (aMsys1dot) to (aMsys1dotr);

\draw[double,out=90,in=270] (a1sysi) to (a1sysidot);
\draw[double,out=90,in=270] (a1sysidot) to (a1sysidotl);
\draw[double,out=90,in=270] (a1sysidot) to (a1sysidotr);

\draw[double,out=90,in=270] (arsysi) to (arsysidot);
\draw[double,out=90,in=270] (arsysidot) to (arsysidotl);
\draw[double,out=90,in=270] (arsysidot) to (arsysidotc);
\draw[double,out=90,in=270] (arsysidot) to (arsysidotr);

\draw[double,out=90,in=270] (aMsysi) to (aMsysidot);
\draw[double,out=90,in=270] (aMsysidot) to (aMsysidotl);
\draw[double,out=90,in=270] (aMsysidot) to (aMsysidotr);

\draw[double,out=90,in=270] (a1sysN) to (a1sysNdot);
\draw[double,out=90,in=270] (a1sysNdot) to (a1sysNdotl);
\draw[double,out=90,in=270] (a1sysNdot) to (a1sysNdotr);

\draw[double,out=90,in=270] (aMsysN) to (aMsysNdot);
\draw[double,out=90,in=270] (aMsysNdot) to (aMsysNdotl);
\draw[double,out=90,in=270] (aMsysNdot) to (aMsysNdotr);

\end{pgfonlayer}



\node (alfa1dotellipsis) [above of = sys1dotellipsis, yshift = 20mm] {$...$};
\node (alfa1label) [above of = alfa1dotellipsis, yshift = 0mm] {Meas. context $C_1$};

\node (alfa1sys1anchor) [left of = alfa1dotellipsis, yshift = -5mm, xshift = 5mm] {};
\node (alfa1sys1) [above of = alfa1sys1anchor,yshift = -2mm] {$Z_1^1$};

\node (alfa1sysNanchor) [right of = alfa1dotellipsis,yshift = -5mm, xshift = -5mm] {};
\node (alfa1sysN) [above of = alfa1sysNanchor,yshift = -2mm] {$Z_N^1$};


\node (alfasanchor) [above of = arsysidot, yshift = 20mm] {};
\node (alfaslabel) [above of = alfasanchor, yshift = 0mm] {Meas. context $C_m$};
\node (alfassysianchor) [below of = alfasanchor, yshift = 5mm] {};
\node (alfassysi) [above of = alfassysianchor,yshift = -2mm] {$Z_j^m$};

\node (alfasdotellipsisl) [left of = alfasanchor, xshift = 5mm] {$...$};
\node (alfasdotellipsisr) [right of = alfasanchor, xshift = -5mm] {$...$};

\node (alfassys1anchor) [left of = alfasdotellipsisl, yshift = -5mm, xshift = 5mm] {};
\node (alfassys1) [above of = alfassys1anchor,yshift = -2mm] {$Z_1^m$};

\node (alfassysNanchor) [right of = alfasdotellipsisr,yshift = -5mm, xshift = -5mm] {};
\node (alfassysN) [above of = alfassysNanchor,yshift = -2mm] {$Z_N^m$};


\node (alfaSdotellipsis) [above of = sysNdotellipsis, yshift = 20mm] {$...$};
\node (alfaSlabel) [above of = alfaSdotellipsis, yshift = 0mm] {Meas. context $C_M$};

\node (alfaSsys1anchor) [left of = alfaSdotellipsis, yshift = -5mm, xshift = 5mm] {};
\node (alfaSsys1) [above of = alfaSsys1anchor,yshift = -2mm] {$Z^M_1$};

\node (alfaSsysNanchor) [right of = alfaSdotellipsis,yshift = -5mm, xshift = -5mm] {};
\node (alfaSsysN) [above of = alfaSsysNanchor,yshift = -2mm] {$Z^M_N$};

\begin{pgfonlayer}{background}
\draw[double,out=90,in=270] (alfa1sys1anchor) to (alfa1sys1);
\draw[double,out=90,in=270] (alfa1sysNanchor) to (alfa1sysN);

\draw[double,out=90,in=270] (alfassys1anchor) to (alfassys1);
\draw[double,out=90,in=270] (alfassysianchor) to (alfassysi);
\draw[double,out=90,in=270] (alfassysNanchor) to (alfassysN);

\draw[double,out=90,in=270] (alfaSsys1anchor) to (alfaSsys1);
\draw[double,out=90,in=270] (alfaSsysNanchor) to (alfaSsysN);
\end{pgfonlayer}


\node (label) [above of = arsysidotc, xshift = -2mm,yshift = -2mm] {Connected iff $i=j$ and $q(j,m)=\kappa$};

\node (localmaplabel) [below of = alfa1sys1anchor, yshift = 7mm] {\textbf{Local Map}};

\begin{pgfonlayer}{background}
\node (box) [box, above of = label, yshift = -13mm, xshift = -3mm, minimum width = 110mm, minimum height = 28mm, fill = none] {};
\end{pgfonlayer}

\begin{pgfonlayer}{background}
\draw[double,out=90,in=270] (arsysidotc.270) to (label);
\draw[-,out=90,in=270] (label) to (alfassysianchor.90);
\end{pgfonlayer}

\end{tikzpicture}}
	\end{equation}

	In this context, we define a local hidden variable to be some state $\nu : \tensorUnit \CPMrightarrow \tensor_i \tensor_\kappa Y_i^\kappa$ such that:
	\begin{equation}
		\left(\left(\bigtensor_{m=1}^{r-1}\bigtensor_{j=1}^{N} \cap_{Z_j^m} \right) \tensor 
		\left(\bigtensor_{j=1}^{N} \id{Z_j^r} \right) \tensor
		\left(\bigtensor_{m=r+1}^{M}\bigtensor_{j=1}^{N} \cap_{Z_j^m} \right) \right) \cdot \operatorname{localmap} \cdot \nu = e_{C_r}
	\end{equation}
	for all $r=1,...,M$. In terms of $R$-distributions, this is the same as the definition from \cite{NLC-SheafSeminal}.

\end{document}